\documentclass[a4paper]{IEEEtran}
%


%

%
\usepackage{cite}

%
\ifCLASSINFOpdf
\else
\fi
%
%

%
\usepackage[cmex10]{amsmath}
\usepackage{array}


\ifCLASSOPTIONcompsoc
  \usepackage[caption=false,font=normalsize,labelfont=sf,textfont=sf]{subfig}
\else
  \usepackage[caption=false,font=footnotesize]{subfig}
\fi
\usepackage{url}


\hyphenation{op-tical net-works semi-conduc-tor}
\usepackage{tabulary}
\usepackage[english]{babel}
\usepackage{blindtext}
\usepackage[makeroom]{cancel}
\usepackage{amsfonts}
\usepackage{amsthm}  
\usepackage{amssymb}
\usepackage[pdftex]{graphicx}
\graphicspath{{../pdf/}{../jpeg/}}
\DeclareGraphicsExtensions{.pdf,.jpeg,.png}
\usepackage{pgf,tikz}
\usepackage{tabularx}
\usepackage{float}

\usepackage{enumitem} 


\newcommand{\vect}[1]{\boldsymbol{#1}}
\newcommand{\mat}[1]{\boldsymbol{#1}}
\newcommand{\wt}[1]{\widetilde{#1}}
\newcommand{\wh}[1]{\widehat{#1}}
\renewcommand{\eqref}[1]{Eq.~(\ref{#1})}  

\DeclareMathOperator{\sgn}{\text{sgn}}

\DeclareMathOperator{\diag}{\text{diag}}

\newtheorem{remark}{Remark}
\newtheorem{theorem}{Theorem}
\newtheorem{lemma}{Lemma}

\newtheorem{definition}{Definition}
\newtheorem{corollary}{Corollary}

\newtheorem{assumption}{Assumption}

\newtheorem{objective}{Objective}

\begin{document}

\title{Consensus and Disagreement of Heterogeneous Belief Systems in Influence Networks}
%
%
%

\author{\IEEEauthorblockN{Mengbin Ye, \emph{Member, IEEE} $\;$ }
\and
\IEEEauthorblockN{Ji Liu, \emph{Member, IEEE} $\; $ }
\and 
\IEEEauthorblockN{Lili Wang, \emph{Student Member, IEEE}$\; $ } \\
\and 
\IEEEauthorblockN{Brian D.O. Anderson, \emph{Life Fellow, IEEE} $\;$ }
\and 
\IEEEauthorblockN{Ming Cao, \emph{Senior Member, IEEE $\;$ }
}

\thanks{M. Ye and M. Cao are supported in part by the European Research Council (ERC-CoG-771687) and the Netherlands Organization for Scientific Research (NWO-vidi-14134). B. D.O. Anderson was supported by the Australian Research Council (ARC) under grant \mbox{DP-160104500}, and by Data61-CSIRO. 

M. Ye and M. Cao are with the Faculty of Science and Engineering, ENTEG, University of Groningen, Groningen 9747 AG, Netherlands. 

M.~Ye and B.D.O.~Anderson are with the Research School of Engineering, Australian National University, Canberra, Australia. B.D.O.~Anderson is also with Hangzhou Dianzi University, China, and with Data61-CSIRO (formerly NICTA Ltd.) in Canberra, Australia. 

J.~Liu is with the Department of Electrical and Computer Engineering, Stony Brook University.

L.~Wang is with the Department of Electrical Engineering, Yale University.

\texttt{\{m.ye,m.cao\}@rug.nl, brian.anderson@anu.edu.au, ji.liu@stonybrook.edu, lili.wang@yale.edu}.}

}

\maketitle

\begin{abstract} Recently, an opinion dynamics model has been proposed to describe a network of individuals discussing a set of logically interdependent topics. For each individual, the set of topics and the logical interdependencies between the topics (captured by a logic matrix) form a belief system. 
We investigate the role the logic matrix and its structure plays in determining the final opinions, including existence of the limiting opinions, of a strongly connected network of individuals. We provide a set of results that, given a set of individuals' belief systems, allow a systematic determination of which topics will reach a consensus, and which topics  will disagreement in arise. For irreducible logic matrices, each topic reaches a consensus. For reducible logic matrices, which indicates a cascade interdependence relationship, conditions are given on whether a topic will reach a consensus or not. It turns out that heterogeneity among the individuals' logic matrices, including especially differences in the signs of the off-diagonal entries, can be a key determining factor. This paper thus attributes, for the first time, a strong diversity of limiting opinions to heterogeneity of belief systems in influence networks, in addition to the more typical explanation that strong diversity arises from individual stubbornness.
\end{abstract}

\begin{IEEEkeywords}
opinion dynamics, social networks, multi-agent systems, influence networks, agent-based models, networked systems
\end{IEEEkeywords}

%
\IEEEpeerreviewmaketitle

\section{Introduction}\label{sec:intro}
%
%
%
%

\IEEEPARstart{T}{here} has been great interest over the past few years in agent-based network models of \emph{opinion dynamics} that describe how individuals' opinions on a topic evolve over time as they interact \cite{proskurnikov2017tutorial,flache2017opdyn_survey}. The seminal discrete-time French--Harary--DeGroot model \cite{french1956_socialpower,harary1959_frenchpower,degroot1974OpinionDynamics} (or DeGroot model for short) assumes that each individual's opinion at the next time step is a convex combination of his/her current opinion and the current opinions of his/her neighbours. This weighted averaging aims to capture \textit{social influence}, where individuals exert a conforming influence on each other so that over time, opinions become more similar (and thus giving rise to the term ``influence network''). For networks satisfying mild connectivity conditions, the opinions reach a consensus, i.e. the opinion values are equal for all individuals.

Since then, and to reflect real-world networks, much focus has been placed on developing models of increasing sophistication to capture different socio-psychological features that may be involved when individuals interact. The Hegselmann--Krause model \cite{hegselmann2002opinion,blondel2009_HKModel,su2017HK_noise,etesami2015game} introduced the concept of bounded confidence, which is used to capture homophily, i.e. the phenomenon whereby individuals only interact with those other individuals whose opinion values are similar to their own. 
The Altafini model \cite{altafini2013antagonistic_interactions,liu2017altafini_exp,proskurnikov2016opinion,xia2015structural_opinions} introduced negative edge weights to model antagonistic or competitive interactions between individuals (perhaps arising from mistrust). An individual's propensity to assimilate information in a biased manner, by more heavily weighting opinions close to his or her own, is studied in \cite{dandekar2013biased_degroot}.
The Friedkin--Johnsen model generalised the DeGroot model by introducing the idea of ``stubbornness'', where an individual remains (at least partially) attached to his or her initial opinion \cite{friedkin1990_FJsocialmodel,friedkin2017_truthwins_pnas}. Of particular note is that the DeGroot and Friedkin--Johnsen models have been empirically examined \cite{becker2017_crowdwisdom,friedkin2011social_book,friedkin2017_truthwins_pnas}. For more detailed discussions on opinion dynamics modelling, we refer the reader to \cite{proskurnikov2017tutorial,flache2017opdyn_survey,proskurnikov2018tutorial_2}. 

Recently in \cite{parsegov2017_multiissue}, a multi-dimensional extension to the Friedkin--Johnsen 
was proposed to describe a network of individuals who simultaneously discuss a set of \emph{logically interdependent} topics. That is, an individual's position on Topic $A$ may influence his/her position on Topic $B$ due to his/her view of constraints or relations between the two topics. Such interdependencies are captured in the model by a ``logic matrix''. This interdependence can greatly shift the final opinion values on the set of topics since now the interdependencies and the social influence from other individuals both affect opinion values. The model is used in \cite{friedkin2016network_science} to explain that the shift in the US public's opinions on the topic of whether the 2003 Invasion of Iraq was justified was due to shifting opinions on the logically interdependent topic of whether Iraq had weapons of mass destruction. The set of topics, the interdependent functionalities between the topics, and the mechanism by which an individual processes such interdependencies forms a ``belief system'' as termed by Converse in his now classical paper \cite{converse1964beliefsystem}. For networks where all individuals have the same logic matrix, a complete stability result is given using algebraic conditions in \cite{parsegov2017_multiissue} and using graph-theoretic conditions in \cite{cesar}. 
Of course, the assumption that all individuals have the same logic matrix is restrictive. Heterogeneous logic matrices were considered in \cite{friedkin2016network_science}, but at least one individual is required to exhibit stubbornness in order to obtain a stability result.

This paper will also consider a generalisation of the multi-dimensional model proposed in \cite{parsegov2017_multiissue} for the evolution of opinions in belief systems, going beyond \cite{parsegov2017_multiissue,friedkin2016network_science} by analysing the effects of the logic matrix, \textit{including especially heterogeneity of the logic matrices among the individuals, on the limiting opinion distribution}. We first establish a general convergence result for the model with heterogeneous logic matrices on strongly connected networks. Then, we provide a set of results which enables the systematic determination of whether for a given topic, the opinions of the individuals will reach a consensus, or will reach a state of persistent disagreement. 

We find that the nature of the heterogeneity of the logic structure among the individuals, viz. the logical interdependencies between topics, and the structure itself, plays a major role in determining whether opinions on a given topic reach a consensus or fail to do so. If the logical interdependencies do not have a cascade structure, then consensus is always secured. When the logical interdependencies have a cascade structure, and by considering topics at the top of a cascade structure to be axiom(s) that an individual's belief system is built upon, we establish that discussion of the axiomatic topics will lead to a consensus. In contrast, we discover that persistent disagreement can arise in the topics at the bottom of the cascade when certain types of heterogeneity exist in the logic matrices. 
A preliminary work \cite{ye2018_CDC_logic} considers the special case of lower triangular logic matrices, but we go well beyond that in this paper by considering general logic matrix structures and providing a comprehensive account of the results.

We discover that if there is a failure to reach a consensus, then it is typically not minor; in general a \textit{strong diversity} of opinions will eventually emerge. In more detail, a network is said to exhibit \emph{weak diversity} \cite{mas2014cultural} if opinions eventually converge into clusters where there is no difference between opinions in the same cluster (consensus is the special case of one single cluster). Strong diversity occurs when the opinions converge to a configuration of persistent disagreement, with a diverse range of values (there may be clusters of opinions with similar, but \emph{not equal}, values within a cluster). Weak diversity is a common outcome in the Hegselmann--Krause model, with the network becoming disconnected into subgroups associated with the clusters. In strongly connected networks, weak diversity also emerges in the Altafini model (specifically polarisation of two opinion clusters) when the network is ``structurally balanced''. However, sign reversal of some selected edges may destroy the structural balance of the network, causing the opinions to converge to a consensus at an opinion value of zero, indicating that the polarisation phenomenon is not robust to changes in the network structure. 

There has been a growing interest to study models which are able to capture the more realistic outcome of \emph{strong diversity} in networks which remain connected \cite{duggins2017_psych_opdyn,mas2014cultural}. The DeGroot model shows that social influence in a connected network acts to bring opinions closer together until a consensus is achieved, meaning some other socio-psychological process must be at work to generate strong diversity. The Friedkin--Johnsen model attributes strong diversity to an individual's stubborn attachment to his/her initial opinion \cite{friedkin1990_FJsocialmodel}. In contrast, \cite{amelkin2017polaropinion} considers a model where an individual's susceptibility to interpersonal influence is dependent on the individual's current opinion; strong diversity is verified as a special case. The papers \cite{duggins2017_psych_opdyn,mas2014cultural} consider two features that might give rise to strong diversity, the first being ``social distancing'', and the second being an individual's ``desire to be unique''. Experimental studies are inconclusive with regards to the existence of ubiquitous and persistent antagonistic interpersonal interactions (there might be limited occurrences in the network over short time spans) \cite{parsegov2017_multiissue}, while it is unlikely that an individual has the same level of stubborn attachment to his or her initial opinion value for months or years. 

In contrast to these works, we identify for the first time in the literature that strong diversity can arise because of the structure of individuals' belief systems, and show that heterogeneity among belief systems plays a crucial role. In the model, each individual is concurrently undergoing two driver processes; individual-level belief system dynamics to secure logical consistency of opinions across a set of topics, and interpersonal influence to reach a consensus. Our findings explain that when the two drivers do not interfere with each other, a consensus is reached, whereas conflict between the two drivers leads to persistent disagreement even though all individuals are trying to reach a consensus. This gives a new and illuminating perspective as to why strong diversity can \emph{last for extended periods of time} in connected networks.
The rest of the paper is structured as follows. In Section~\ref{section:background_problem}, we provide notations, an introduction to graph theory and the opinion dynamics model. At the same time, a formal problem statement is given. The main results are presented in Section~\ref{sec:main_result}, with simulations given in Section~\ref{sec:sim} for illustration, and conclusions in Section~\ref{sec:conclusion}.

\section{Background and Formal Problem Statement}\label{section:background_problem}
We begin by introducing some mathematical notations used in the paper. The $(i,j)^{th}$ entry of a matrix $\mat{M}$ is denoted $m_{ij}$. A matrix $\mat{A}$ is said to be nonnegative (respectively positive) if all $a_{ij}$ are nonnegative (respectively positive). We denote $\mat{A}$ as being nonnegative and positive by $\mat{A} \geq 0$ and $\mat{A} > 0$,  respectively. A matrix $\mat{A} \geq 0$ is said to be row-stochastic (respectively, row-substochastic) if there holds $\sum_{j=1}^n a_{ij} = 1, \forall i$ (respectively, if there holds $\sum_{j=1}^n a_{ij} \leq 1, \forall i$ and $\exists k : \sum_{j=1}^n a_{kj} < 1$). The transpose of a matrix $\mat{M}$ is denoted by $\mat{M}^\top$. Let $\vect 1_n$ and $\vect 0_n$ denote, respectively, the $n\times 1$ column vectors of all ones and all zeros. The $n\times n$ identity matrix is given by $\mat{I}_n$. Two matrices $\mat{A}\in \mathbb{R}^{n\times m}$ and $\mat{B}\in \mathbb{R}^{n\times m}$ are said to be of the same type, denoted by $\mat{A} \sim \mat{B}$, if and only if $a_{ij} \neq 0 \Leftrightarrow b_{ij} \neq 0$. The Kronecker product is denoted by $\otimes$. The infinity norm and spectral radius of a square matrix $\mat A$ is $\Vert \mat A \Vert_{\infty}$ and $\rho(\mat A)$, respectively. A square matrix $\mat{A} \geq 0$ is \textit{primitive} if $\exists k \in \mathbb{N} : \mat{A}^k > 0$ \cite[Definition 1.12]{bullo2009distributed}.

\subsection{Graph Theory}
The interaction between $n$ individuals in a social network, and the logical interdependence between topics can modelled using a weighted directed graph. To that end, we introduce some notations and concepts for graphs. A directed graph $\mathcal{G}[\mat{A}] = (\mathcal{V}, \mathcal{E}, \mat{A})$ is a triple where node $v_i$ is in the finite, nonempty set of nodes $\mathcal V = \{v_1, \ldots, v_n\}$. The set of ordered edges is $\mathcal{E} \subseteq \mathcal{V}\times \mathcal{V}$. We denote an ordered edge as $e_{ij} = (v_i, v_j) \in \mathcal{E}$, and because the graph is directed, in general the existence of $e_{ij}$ does not imply existence of $e_{ji}$. An edge $e_{ij}$ is said to be outgoing with respect to $v_i$ and incoming with respect to $v_j$. 
Self-loops are allowed, i.e. $e_{ii}$ may be in $\mathcal{E}$. The matrix $\mat A\in\mathbb{R}^{n\times n}$ associated with $\mathcal{G}[\mat{A}]$ captures the edge weights. More specifically, $a_{ij} \neq 0$ if and only if $e_{ji} \in \mathcal{E}$. If $\mat{A}$ is nonnegative, then all edges $e_{ij}$ have positive weights, while a generic $\mat{A}$ may be associated with a signed graph $\mathcal{G}[\mat{A}]$, having signed edge weights.

A directed path is a sequence of edges of the form $(v_{p_1}, v_{p_2}), (v_{p_2}, v_{p_3}), \ldots$ where $v_{p_i} \in \mathcal{V}$ are unique, and $e_{p_{i} p_{i+1}} \in \mathcal{E}$. Node $i$ is reachable from node $j$ if there exists a directed path from $v_j$ to $v_i$. A graph is said to be strongly connected if every node is reachable from every other node. A square matrix $\mat A$ is irreducible if and only if the associated graph $\mathcal{G}[\mat{A}]$ is strongly connected. A directed cycle is a directed path that starts and ends at the same vertex, and contains no repeated vertex except the initial (which is also the final) vertex. The length of a directed cycle is the number of edges in the directed cyclic path. A directed graph is \emph{aperiodic} if there exists no integer $k > 1$ that divides the length of every directed cycle of the graph \cite{bullo2009distributed}. Note that any graph with a self-loop is aperiodic. 

A signed graph $\mathcal{G}$ is said to be \textit{structurally balanced} (respectively structurally unbalanced) if the nodes $\mathcal{V} = \{v_1, \hdots v_{n}\}$ can be partitioned (respectively cannot be partitioned) into two disjoint sets such that each edge between two nodes in the same set has a positive weight, and each edge between nodes in different sets has a negative weight \cite{cartwright1956structural}. 

The following is a useful result employed in the paper.
\begin{lemma}[\hspace{-0.3pt}{\cite[Proposition 1.35]{bullo2009distributed}}]\label{lem:primitive}
The graph $\mathcal{G}[\mat{A}]$, with $\mat{A} \geq 0$, is strongly connected and aperiodic if and only if $\mat{A}$ is primitive.
\end{lemma}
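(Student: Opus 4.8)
The plan is to work entirely with the combinatorial meaning of powers of a nonnegative matrix. With the paper's convention $a_{ij} \neq 0 \Leftrightarrow e_{ji} \in \mathcal{E}$, a straightforward induction on $m$ (expand $(\mat{A}^{m})_{ij} = \sum_{\ell} a_{i\ell}(\mat{A}^{m-1})_{\ell j}$ and use $\mat{A} \geq 0$, so that no cancellation can occur) shows that $(\mat{A}^{m})_{ij} > 0$ if and only if there is a directed walk of length exactly $m$ from $v_j$ to $v_i$ in $\mathcal{G}[\mat{A}]$. Hence ``$\mat{A}$ primitive'', i.e. $\mat{A}^{k} > 0$ for some $k$, is precisely the statement that there exists $k$ such that every ordered pair of nodes is joined by a directed walk of length exactly $k$; the whole proof reduces to manipulating walk lengths. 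As an auxiliary fact I would first record that in a strongly connected graph the gcd of the lengths of all directed cycles equals the gcd of the lengths of all closed walks through any fixed node $v_1$: one inclusion follows by decomposing a closed walk into cycles (its length is a sum of cycle lengths), the other by sandwiching an arbitrary cycle $C$ at a node $w$ between a $v_1 \to w$ walk and a $w \to v_1$ walk and subtracting the closed walk through $v_1$ that omits $C$.

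For the ``only if'' direction, suppose $\mat{A}^{k} > 0$. Then every ordered pair is joined by a walk, so $\mathcal{G}[\mat{A}]$ is strongly connected; in particular every node has an incoming edge, so every row of $\mat{A}$ is nonzero, whence $\mat{A}^{m+1} = \mat{A}\,\mat{A}^{m} > 0$ whenever $\mat{A}^{m} > 0$, and by induction $\mat{A}^{m} > 0$ for all $m \geq k$. Then $(\mat{A}^{k})_{11} > 0$ and $(\mat{A}^{k+1})_{11} > 0$ exhibit closed walks through $v_1$ of coprime lengths $k$ and $k+1$, so by the auxiliary fact no integer greater than $1$ can divide the length of every directed cycle; hence $\mathcal{G}[\mat{A}]$ is aperiodic.

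For the ``if'' direction, suppose $\mathcal{G}[\mat{A}]$ is strongly connected and aperiodic. Fix $v_1$; by the auxiliary fact the gcd of the lengths of closed walks through $v_1$ equals $1$, and since that set of lengths is closed under addition, the numerical semigroup (Frobenius coin) theorem yields an $N$ such that a closed walk through $v_1$ of length $m$ exists for every $m \geq N$. Using strong connectivity, fix once and for all a walk from $v_j$ to $v_1$ of length $p_j$ for each $j$ and a walk from $v_1$ to $v_i$ of length $q_i$ for each $i$, and set $k := N + \max_j p_j + \max_i q_i$. For any pair $(i,j)$, concatenating the chosen $v_j \to v_1$ walk, a closed walk at $v_1$ of length $k - p_j - q_i$ (which is at least $N$, hence available), and the chosen $v_1 \to v_i$ walk yields a walk of length exactly $k$ from $v_j$ to $v_i$; therefore $(\mat{A}^{k})_{ij} > 0$ for all $i,j$, i.e. $\mat{A}^{k} > 0$ and $\mat{A}$ is primitive.

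I expect the ``if'' direction to be the delicate part: aperiodicity is a statement about cycles and must be converted, first, into ``every sufficiently large integer is the length of some closed walk at one fixed vertex'' (requiring both the gcd-transfer lemma and the numerical semigroup theorem) and, second, into a \emph{single} exponent $k$ that works simultaneously for all ordered pairs of nodes — which is exactly why one fixes the connecting walks of lengths $p_j$ and $q_i$ and absorbs the remaining slack into a variable-length loop at $v_1$.
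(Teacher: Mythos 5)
Your argument is correct. Note, however, that the paper does not prove this statement at all: Lemma~\ref{lem:primitive} is imported verbatim as \cite[Proposition 1.35]{bullo2009distributed}, so there is no in-paper proof to compare against. What you have written is the standard self-contained combinatorial proof of the Perron--Frobenius-type equivalence: the identification of $(\mat{A}^m)_{ij}>0$ with the existence of a length-$m$ walk (correctly oriented for the paper's convention $a_{ij}\neq 0 \Leftrightarrow e_{ji}\in\mathcal{E}$), the transfer of the cycle-length gcd to the closed-walk-length gcd at a fixed vertex, the numerical-semigroup step producing closed walks of every sufficiently large length, and the uniformization over all ordered pairs via fixed connecting walks. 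All steps check out, including the observation in the ``only if'' direction that positivity of $\mat{A}^k$ propagates to $\mat{A}^{k+1}$ because strong connectivity forces every row of $\mat{A}$ to be nonzero, which is what legitimizes extracting the coprime closed-walk lengths $k$ and $k+1$. Compared with simply citing the reference, your route buys self-containedness and an explicit (if crude) bound on the primitivity exponent $k = N + \max_j p_j + \max_i q_i$; the only degenerate case worth a sentence is a single node without a self-loop, where both sides of the equivalence fail vacuously under the paper's definition of aperiodicity.
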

Note that the irreducibility of $\mat{A}$ (implied by the strong connectivity property of $\mathcal{G}[\mat{A}]$) implies that if a $k$ exists such that $\mat A^k > 0$, then $\mat{A}^j > 0$ for all $j > k$.

\subsection{The Multi-Dimensional DeGroot Model}\label{ssec:MD_degroot}
In this paper, we investigate a recently proposed multi-dimensional extension to the DeGroot and Friedkin-Johnsen models \cite{parsegov2017_multiissue,friedkin2016network_science}, which considers the \emph{simultaneous discussion of logically interdependent topics}. 


Formally, consider a population of $n$ individuals discussing simultaneously their opinions on $m$ topics, with individual and topic index set $\mathcal{I} = \{1, \hdots, n\}$ and $\mathcal{J} = \{1, \hdots, m\}$, respectively. Individual $i$'s opinions on the $m$ topics at time $t = 0, 1, \hdots$, are denoted by $\vect{x}_i(t) = [x_i^1(t), \hdots, x_i^m(t)]^\top \in \mathbb{R}^m$. In this paper, we adopt a standard definition of an opinion \cite{friedkin2016network_science}. In particular, $x_i^p(t) \in [-1,1]$ is individual $i$'s attitude towards topic $p$, which takes the form of a statement, with $x_i^p > 0$ representing $i$'s support for statement $p$, $x_i^1 < 0$ representing rejection of statement $p$, and $x_i^p = 0$ representing a neutral stance. The magnitude of $x_i^p$ denotes the strength of conviction, with $\vert x_i^p\vert = 1$ being maximal support/rejection. Mild assumptions are placed on the network and individual parameters in the sequel to ensure that $x_i^p(t) \in [-1,1]$ for all $t\geq 0$, and thus the opinion values are always well defined.


In the multi-dimensional DeGroot model, $\vect x_i(t)$ evolves according to
\begin{equation}\label{eq:md_degroot_individual}
\vect{x}_i(t+1) = \sum_{j=1}^n w_{ij}\mat{C}_i \vect{x}_j(t),
\end{equation}
where the nonnegative scalar $w_{ij}$ represents the influence weight individual $i$ accords to the vector of opinions of individual $j$. Thus, the influence matrix $\mat{W}$, with $(i,j)^{th}$ entry $w_{ij}$, can be used to define the graph $\mathcal{G}[\mat{W}]$ that describes the interpersonal influences of the $n$ individuals. We assume that $w_{ii} > 0$ for all $i\in\mathcal{I}$ and $\sum_{j=1}^n w_{ij} = 1$ for all $i \in \mathcal{I}$, which implies that $\mat{W}$ is row-stochastic. The matrix $\mat{C}_i$, with $(p,q)^{th}$ entry $c_{pq,i}$, is termed the \textit{logic matrix}. In \cite{parsegov2017_multiissue,friedkin2016network_science}, the authors elucidate that $\mat{C}_i$ represents the logical interdependence between the $m$ topics as seen by individual $i$. We note that the $\mat{C}_i$ are assumed to be heterogeneous (i.e. $\exists i,j : \mat{C}_i \neq \mat{C}_j$). Indeed, a critical aspect of this paper is to study how the structure of the $\mat{C}_i$s, especially heterogeneity, can determine whether certain topics have opinions that reach a consensus or a persistent disagreement. 


We now illustrate with a simple example how $\mat C_i$ is used by individual $i$ to obtain a set of opinions consistent with any logical interdependencies between each topic, and in doing so, motivate that certain constraints must be imposed on $\mat{C}_i$ due to the problem context (these constraints are implicitly imposed in \cite{parsegov2017_multiissue,friedkin2016network_science}, but without motivation). 

Suppose that there are two topics. Topic 1: The exploration of Space is important to mankind's future. Topic 2: The exploration of Space should be privatised. Using Topic 1 as an example, and according to the definition of an opinion given above \eqref{eq:md_degroot_individual}, $x_i^1 = 1$ represents individual $i$'s maximal \textit{support} of the importance of Space exploration, while $x_i^1 = -1$ represents maximal \textit{rejection} that Space exploration is important. Now, suppose that individual $i$ has $\vect{x}_i(0) = [1, -0.2]^\top$, i.e. individual $i$ initially believes with maximal conviction that Space exploration is important and initially believes with some (but not absolute) conviction that Space exploration should not be privatised\footnote{Note that we do not require $\mat{C}_i$ to be row-stochastic and nonnegative, though the $\mat{C}_i$ of this example is.}. Let
\begin{equation}\label{eq:C_example01}
\mat{C}_i = \begin{bmatrix} 1 & 0 \\ 0.5 & 0.5 \end{bmatrix}.
\end{equation}
This tells us that individual $i$'s opinion on the importance of Space exploration is unaffected by his or her own opinion on whether Space exploration should be privatised. On the other hand, individual $i$'s opinion on Topic $2$ depends positively on his or her own opinion on Topic $1$, perhaps because individual $i$ believes privatised companies are more effective. 
In the absence of opinions from other individuals, individual $i$'s opinions evolves as 
\begin{equation}\label{eq:individual_belief_system}
\vect{x}_i(t+1) = \mat{C}_i \vect{x}_i(t),
\end{equation}
which yields $\lim_{t\to\infty} \vect{x}_i(t) = [1,1]^\top$, i.e. individual $i$ eventually believes that Space exploration should be privatised. Thus, $\vect x_i(t)$ moves from $\vect{x}_i(0) = [1,-0.2]^\top$, where individual $i$'s opinions are inconsistent with the logical interdependence as captured by $\mat{C}_i$, to the final state $\vect{x}_i(\infty) = [1,1]^\top$, which is consistent with the logical interdependence. \eqref{eq:individual_belief_system}, with opinion vector $\vect x_i(t)$ and the logical interdependencies captured by $\mat C_i$, models individual $i$'s belief system. (We explained qualitatively what a belief system was in the Introduction, and have now given the mathematical formulation.)

In general, one might expect, as do we in this paper, that an individual's belief system without interpersonal influence from neighbours will eventually become consistent. For a topic $p$ which is independent of all other topics, one also expects that $x_i^p(t+1) = x_i^p(t)$ for all $t$. To ensure the belief system is consistent, we impose the following assumption.

\begin{assumption}\label{assm:C}
The matrix $\mat{C}_i$, for all $i\in \mathcal{I}$, is such that each eigenvalue of $\mat C_i$ is either 1 or has modulus less than 1. If an eigenvalue of $\mat C_i$ is 1, then it is semi-simple\footnote{By semi-simple, we mean that the geometric and algebraic multiplicities are the same. Equivalently, all Jordan blocks of the eigenvalue $1$ are 1 by 1.}. For all $i\in\mathcal{I}$ and $p \in \mathcal{J}$, there holds $\sum_{q=1}^m \vert c_{pq,i}\vert = 1$, and
the diagonal entries satisfy $c_{pp,i} > 0$. 
\end{assumption}
The assumptions on the eigenvalues of $\mat C_i$ ensure that \eqref{eq:individual_belief_system} converges to a limit, and are \textit{necessary and sufficient} for individual $i$'s belief system to eventually become consistent. The other assumptions lead to desirable properties for the system \eqref{eq:md_degroot_individual}. Specifically, the reasonable assumption that $c_{pp,i} > 0$ means topic $p$ is positively correlated with itself. The constraint $\sum_{q=1}^m \vert c_{pq,i}\vert = 1$ for all $i\in\mathcal{I}$ and $p\in\mathcal{J}$ ensures that $x_i^p(0) \in [-1,1]$ implies $x_i^p(t) \in [-1,1]$ for all $t\geq 0$ (see \cite{parsegov2017_multiissue}), and also implies that if topic $p$ is independent of all other topics, i.e. $c_{pq,i} = 0$ for all $q \neq p$, then $c_{pp,i} = 1$. The well-studied special case where topics are totally independent is $\mat{C}_i = \mat{I}_m$.  We are now in a position to formally define this paper's objective.

\subsection{Objective Statement}\label{ssec:problem_def}

This paper is focused on establishing the effects of the set of logic matrices $\mat{C}_i, i \in \mathcal{I}$ on the evolution of opinions, and in particular the limiting opinion configuration. First, we record two assumptions on the logic matrix and the network topology, which will hold throughout this paper. 

\begin{assumption}\label{assm:C_same_pattern}
For every $i,j \in \mathcal{I}$, there holds $\mat{C}_i \sim \mat{C}_j$.
\end{assumption}

\begin{assumption}\label{assm:W}
The influence network $\mathcal{G}[\mat{W}]$ is strongly connected, $\mat W$ is row-stochastic, and $w_{ii} > 0,\forall i\in \mathcal{I}$.
\end{assumption}

Assumption~\ref{assm:C_same_pattern} implies that, for every $i,j\in \mathcal{I}$, the graphs $\mathcal{G}[\mat{C}_i]$ and $\mathcal{G}[\mat C_j]$ have the same structure (but possible with different edge weights, including weights of opposing signs). This means that all individuals have the same view on which topics have dependent relationships with which other topics, but the assigned weights $c_{ij}$ (and signs) may be different. This assumption ensures that the scope of this paper is reasonable, because otherwise the assumption that $\mat{C}_i \sim \mat C_j$ does not hold would introduce too many different scenarios to analyse. 


\begin{objective}\label{obj}
Let a set of logic matrices $\mat C_i, i \in \mathcal{I}$ and an influence network $\mathcal{G}[\mat W]$ be given, satisfying Assumptions~\ref{assm:C}, \ref{assm:C_same_pattern} and \ref{assm:W}. Suppose that each individual $i$'s opinion vector $\vect x_i(t) \in [-1,1]^m$ evolves according to \eqref{eq:md_degroot_individual}. Then, for each $k \in \mathcal{J}$ and generic initial conditions $\vect x(0) \in [-1,1]^{nm}$, this paper will investigate a method to systematically determine when there exists, and when there does not exist, an $\alpha_k\in [-1,1]$ such that 
\begin{equation}\label{eq:consensus_def}
\lim_{t\to\infty} x_i^k(t) = \alpha_k, \; \forall i \in \mathcal{I}.
\end{equation}
\end{objective}
We will show that $\mat{C}_i$ of a certain structure always guarantees consensus, and conversely, that $\mat{C}_i$ of a certain other structure will lead to disagreement in certain identifiable topics. 

Next, we provide further discussion to motivate Objective~\ref{obj}, including our interest in heterogeneous $\mat C_i$. The dynamics of the form \eqref{eq:md_degroot_individual} is a variation on the model studied in \cite{parsegov2017_multiissue,friedkin2016network_science}, and we explain our interest in this particular variation by explaining in detail the differences between \eqref{eq:md_degroot_individual} and work in  \cite{parsegov2017_multiissue,friedkin2016network_science}. 

For convenience, denote the vector of opinions for the entire influence network as $\vect{x} = [\vect{x}_1(t)^\top, \hdots, \vect{x}_n(t)^\top]^\top \in \mathbb{R}^{nm}$. Supposing that the logic matrices were indeed homogeneous, i.e. $\mat{C}_i = \mat{C}_j = \mat{C}$ for all $i,j\in\mathcal{I}$, we can verify that much of the analysis becomes rather easy. For then one could write the influence network dynamics as 
\begin{equation}\label{eq:md_degroot_network_hom}
\vect{x}(t+1) = (\mat{W} \otimes \mat{C})\vect{x}(t),
\end{equation}
and limiting behaviour is characterised by the following result.

\begin{theorem}[\hspace{-0.2pt}{\cite[Theorem 3]{parsegov2017_multiissue}}]\label{thm:parsegov_converge}
The system \eqref{eq:md_degroot_network_hom} converges if and only if $\lim_{k\to\infty} \mat{C}^k \triangleq \mat{C}^\infty$ exist, and either $\mat{C}^\infty = \mat{0}_{m\times m}$ or $\lim_{k\to\infty} \mat{W}^k = \mat{W}^\infty$ exists\footnote{It is clear that if we have homogeneous $\mat{C}$, then Assumption~\ref{assm:C} is consistent with the requirement on $\mat{C}$ in Theorem~\ref{thm:parsegov_converge}.}. Moreover, the system converges to $\lim_{t\to\infty} \vect{x} (t) = (\mat{W}^\infty \otimes \mat{C}^\infty)\vect{x}(0)$ if $\lim_{k\to\infty} \mat{W}^k = \mat{W}^\infty$ exists, otherwise $\lim_{t\to\infty} \vect{x} (t) = \vect{0}_{mn}$.
\end{theorem}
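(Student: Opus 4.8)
The plan is to reduce everything to the mixed-product identity for the Kronecker product, namely $(\mat{W}\otimes\mat{C})^k = \mat{W}^k\otimes\mat{C}^k$, so that convergence of the trajectory $\vect{x}(t) = (\mat{W}\otimes\mat{C})^t\vect{x}(0)$ of the linear system \eqref{eq:md_degroot_network_hom} (for every $\vect{x}(0)$) becomes equivalent to convergence of the matrix power sequence $\mat{W}^k\otimes\mat{C}^k$. Everything then rests on two elementary properties of Kronecker products: continuity, i.e. $\mat{A}_k\to\mat{A}$ and $\mat{B}_k\to\mat{B}$ imply $\mat{A}_k\otimes\mat{B}_k\to\mat{A}\otimes\mat{B}$; and the infinity-norm identity $\Vert \mat{A}\otimes\mat{B}\Vert_{\infty} = \Vert\mat{A}\Vert_{\infty}\Vert\mat{B}\Vert_{\infty}$. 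I will also repeatedly use that $\mat{W}$ is row-stochastic, so $\mat{W}^k\vect{1}_n = \vect{1}_n$ and $\Vert\mat{W}^k\Vert_{\infty} = 1$ for every $k$.

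\textbf{Sufficiency.} Assume $\mat{C}^k\to\mat{C}^\infty$. If in addition $\mat{W}^k\to\mat{W}^\infty$, then continuity of the Kronecker product yields $\mat{W}^k\otimes\mat{C}^k\to\mat{W}^\infty\otimes\mat{C}^\infty$, and hence $\vect{x}(t)\to(\mat{W}^\infty\otimes\mat{C}^\infty)\vect{x}(0)$. If instead $\mat{C}^\infty = \mat{0}_{m\times m}$, then $\Vert\mat{W}^k\otimes\mat{C}^k\Vert_{\infty} = \Vert\mat{W}^k\Vert_{\infty}\Vert\mat{C}^k\Vert_{\infty} = \Vert\mat{C}^k\Vert_{\infty}\to 0$, so $(\mat{W}\otimes\mat{C})^k\to\mat{0}$ and $\vect{x}(t)\to\vect{0}_{mn}$, regardless of whether $\mat{W}^k$ converges. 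This already yields the stated limit in both cases.

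\textbf{Necessity.} Suppose $\mat{W}^k\otimes\mat{C}^k$ converges, so $(\mat{W}^k\otimes\mat{C}^k)\vect{u}$ converges for every fixed $\vect{u}$. Taking $\vect{u} = \vect{1}_n\otimes\vect{e}_j$, with $\vect{e}_j$ the $j$-th canonical basis vector of $\mathbb{R}^m$, and using $\mat{W}^k\vect{1}_n = \vect{1}_n$ gives $\vect{1}_n\otimes(\mat{C}^k\vect{e}_j)$, whose convergence is equivalent to convergence of the $j$-th column $\mat{C}^k\vect{e}_j$; letting $j$ range over $\{1,\dots,m\}$ shows $\mat{C}^\infty$ exists. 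Now suppose $\mat{C}^\infty\neq\mat{0}_{m\times m}$ and pick indices $p,j$ with $(\mat{C}^\infty)_{pj}\neq 0$. For arbitrary $\vect{w}\in\mathbb{R}^n$, the vector $(\mat{W}^k\vect{w})\otimes(\mat{C}^k\vect{e}_j)$ converges, and the sub-block obtained by extracting the $p$-th coordinate from each of its $n$ blocks equals $(\mat{C}^k)_{pj}\,\mat{W}^k\vect{w}$. Since $(\mat{C}^k)_{pj}\to(\mat{C}^\infty)_{pj}\neq 0$, it is eventually bounded away from zero, so $\mat{W}^k\vect{w}$ converges; taking $\vect{w}$ over a basis of $\mathbb{R}^n$ shows $\mat{W}^\infty$ exists. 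Hence convergence forces exactly the claimed dichotomy, and the limit is identified precisely as in the sufficiency argument.

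\textbf{Main obstacle and remarks.} There is no deep obstacle; the care needed is confined to the two ``probing'' arguments in the necessity direction — choosing the right test vectors and legitimately dividing out an eventually-nonzero entry of $\mat{C}^k$ to peel off $\mat{W}^k$ (which is exactly where $\mat{C}^\infty\neq\mat{0}$ is used). It is also worth noting that this quoted result is stated for a general row-stochastic $\mat{W}$; under Assumption~\ref{assm:W} the graph $\mathcal{G}[\mat{W}]$ is strongly connected and, because of the self-loops $w_{ii}>0$, aperiodic, so $\mat{W}$ is primitive by Lemma~\ref{lem:primitive} and $\mat{W}^\infty = \vect{1}_n\vect{\gamma}^\top$ always exists, with $\vect{\gamma}$ the normalised left Perron eigenvector of $\mat{W}$. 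Consequently, in the present paper only the condition ``$\mat{C}^\infty$ exists'' is ever in question.
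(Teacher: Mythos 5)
This statement is quoted verbatim from \cite[Theorem 3]{parsegov2017_multiissue} and the present paper offers no proof of it at all, so there is nothing internal to compare your argument against; I can only assess it on its own merits. On those merits it is correct and complete. The reduction via the mixed-product identity $(\mat{W}\otimes\mat{C})^t=\mat{W}^t\otimes\mat{C}^t$ is the natural one, the sufficiency direction is immediate from continuity of $\otimes$ and from $\Vert\mat{W}^k\otimes\mat{C}^k\Vert_\infty=\Vert\mat{C}^k\Vert_\infty$ when $\mat{C}^\infty=\mat{0}$, and the two probing arguments in the necessity direction are sound: the test vector $\vect{1}_n\otimes\vect{e}_j$ exploits $\mat{W}^k\vect{1}_n=\vect{1}_n$ to isolate the columns of $\mat{C}^k$, and the extraction of $(\mat{C}^k)_{pj}\,\mat{W}^k\vect{w}$ followed by division by the eventually-nonzero scalar $(\mat{C}^k)_{pj}$ legitimately recovers convergence of $\mat{W}^k$, which is exactly where the hypothesis $\mat{C}^\infty\neq\mat{0}_{m\times m}$ is consumed. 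The only point worth making explicit is that ``the system converges'' must be read as convergence for \emph{every} initial condition $\vect{x}(0)$ (equivalently, convergence of the matrix powers, obtained by running $\vect{x}(0)$ over the canonical basis); for a single fixed initial condition the equivalence would fail. Your closing remark that Assumption~\ref{assm:W} forces $\mat{W}^\infty=\vect{1}_n\vect{\gamma}^\top$ to exist, so that only the existence of $\mat{C}^\infty$ is ever at issue in this paper, correctly matches the discussion the authors give below \eqref{eq:md_fj_individual}.
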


For completeness and to aid discussion, we also record the Friedkin--Johnsen variant to \eqref{eq:md_degroot_individual}, which is given as
\begin{equation}\label{eq:md_fj_individual}
\vect{x}_i(t+1) = \lambda_i\sum_{j=1}^n w_{ij}\mat{C}_i \vect{x}_j(t) + (1-\lambda_i)\vect{x}_i(0).
\end{equation}
Here, the parameter $\lambda_i \in [0,1]$ represents individual $i$'s susceptibility to interpersonal influence, while $1-\lambda_i$ represents the level of stubborn attachment by individual $i$ to his/her initial opinion $\vect{x}_i(0)$. This paper studies the special case where there are no stubborn individuals, i.e. $\lambda_i = 1$ for all $i\in\mathcal{I}$, and thus \eqref{eq:md_fj_individual} is equivalent to \eqref{eq:md_degroot_individual}. The paper \cite{parsegov2017_multiissue} mainly focuses on the considerable challenge of obtaining complete convergence results for the model in \eqref{eq:md_fj_individual} \textit{but with a homogeneous $\mat{C}$}, and aside from some short remarks, does not investigate the effect of $\mat{C}$ on the final opinion distribution (assuming the opinions do in fact converge to a steady state). The paper \cite{friedkin2016network_science} secures a convergence result for heterogeneous $\mat{C}_i$ but makes an assumption that there is at least one somewhat stubborn individual. Unlike \cite{parsegov2017_multiissue} and \cite{friedkin2016network_science}, the key focus of this paper is to investigate \textit{the effect of the structure of $\mat{C}_i$, including heterogeneity,} on the final opinion distribution. 

We explain this further. If $\lambda_i < 1$ and $\mat{C}_i = \mat{I}_m$ for all $i\in \mathcal{I}$, then existing results establish that under Assumption~\ref{assm:W}, a strong diversity of opinions emerges \cite{parsegov2017_multiissue}, with obviously no effects arising from the $\mat C_i$ matrix. On the other hand, consider the case of homogeneous logic matrices and no stubbornness. For any $\mat{W}$ satisfying Assumption~\ref{assm:W}, it is known that $\lim_{k\to\infty} \mat{W}^k = \vect{1}_n\vect{\gamma}^\top$  where $\vect{\gamma}^\top$ is a left eigenvector of $\mat W$ associated with the simple eigenvalue at $1$, having entries $\gamma_j > 0$, and normalised to satisfy $\vect{\gamma}^\top \vect{1}_n = 1$ \cite{bullo2009distributed}. Combining with Theorem~\ref{thm:parsegov_converge}, we can conclude that under Assumption~\ref{assm:W} and if $\mat{C}_i = \mat{C}_j = \mat C$ and $\lambda_i = 1$ for all $i,j \in \mathcal{I}$, the opinions of all individuals on any given topic reach a consensus. That is, for all $i\in \mathcal{I}$, there holds $\lim_{t\to\infty} \vect x_i(t) = \sum_{j=1}^n \gamma_j \mat C^t \vect x_j(0)$. 

In contrast, this paper assumes heterogeneous $\mat{C}_i$ and no stubbornness among individuals. If we can show that opinions on a given topic fail to reach a consensus in the general case of $\mat{C}_i \neq \mat{I}_m$, and instead strong diversity emerges, then this failure \textit{must be attributed} to the structure, and \textit{the heterogeneity}, of the $\mat C_i$ among individuals. This would constitute a novel insight into the emergence of strong diversity in strongly connected networks, linking it for the first time to differences in individuals' belief systems as opposed to stubbornness \cite{friedkin1990_FJsocialmodel}, a desire to be unique \cite{mas2014cultural,duggins2017_psych_opdyn}, or social distancing \cite{mas2014cultural}.

To conclude this subsection, we now provide the definition of ``competing logical interdependencies'' which will be important in some scenarios for characterising the final opinions.

\begin{definition}[Competing Logical Interdependence]\label{def:compete_logic}
An influence network is said to contain individuals with competing logical interdependencies on topic $p \in \mathcal{J}$ if there exist individuals $i,j$ such that for some $q \in \mathcal{J} \setminus \{p\}$, $\mat{C}_i$ and $\mat{C}_j$ have nonzero entries $c_{pq,i}$ and $c_{pq,j}$ that are of opposite signs.
\end{definition}

In other words, individuals with competing logical interdependencies are those who, \emph{when having the same opinion on topic $q$}, move in opposite directions on the opinion spectrum for topic $p$. Such occurrences can be prevalent in society. Using the example in Section~\ref{ssec:MD_degroot}, one might have an individual $j$ with 
\begin{equation}\label{eq:C_example02}
\mat{C}_j = \begin{bmatrix} 1 & 0 \\ -0.5 & 0.5 \end{bmatrix}.
\end{equation}
because $j$ considers that private companies are profit-driven, and therefore cannot be ethically trusted with the exploration of Space. Then, from \eqref{eq:individual_belief_system}, one has that $\vect{x}_j(\infty) = [1,-1]^\top$, i.e. individual $j$ eventually firmly believes Space exploration should not be privatised. In particular, $x_j^1(\infty) = -x_j^2(\infty)$.

In light of Assumption~\ref{assm:C_same_pattern}, if two individuals have competing interdependencies on topic $p$, then \textit{for every individual} $i \in \mathcal{I}$, there is necessarily some individual $k \in \mathcal{I} \setminus \{i\}$ with whom individual $i$ has competing logical interdependence on topic $p$: the nonzero entries $c_{pq,i}$ and $c_{pq,k}$ are of opposite signs for some $q \in \mathcal{J}$.

\begin{remark}
Recall that $\mat{C}_i$ is individual $i$'s set of constraints/functional dependencies between topics in $i$'s belief system. Thus, heterogeneity of $\mat{C}_i$ may arise for many different reasons, such as education, background, or expertise in the topic. For example, if the set of topics is related to sports, a professional athlete may have very different weights in $\mat{C}_i$ compared to someone that does not pursue an active lifestyle. Competing interdependencies may also arise for contentious issues, such as gun control discussions in the USA. 
Interestingly, \cite{cartwright1953groupdyn_book} showed that when presented with the same published statement on an issue, different people could take opposite positions on the issue. 
\end{remark}

In the next section, we provide the set of main theoretical results of this paper to address Objective~\ref{obj}.

\section{Main Results}\label{sec:main_result}
The main results are presented in two parts. First, we establish a general convergence result for the networked system. Then, we analyse the limiting opinion distribution and the role of the set of logic matrices in determining whether opinions for a given topic reach consensus or fail to do so. In order to place the focus on the theoretical results and interpretations as social phenomena, all proofs are presented to the Appendix.

\subsection{Convergence}\label{ssec:convergence}

The network dynamics of \eqref{eq:md_degroot_individual} are given by
\begin{equation}\label{eq:x_network_system}
    \vect{x}(t+1) = \begin{bmatrix} w_{11}\mat{C}_1 & \cdots & w_{1n}\mat{C}_1 \\
\vdots & \ddots & \vdots \\
w_{n1}\mat{C}_n & \cdots & w_{nn}\mat{C}_n
\end{bmatrix}\vect{x}(t),
\end{equation}
and we define the system matrix above as $\mat{B}$. To begin, we rewrite the network dynamics \eqref{eq:x_network_system} into a different form to aid analysis by introducing a coordinate transform (actually a reordering). In particular, define $\vect{y}_k(t) = [y_k^1(t), \hdots y_k^n(t)]^\top= [x_1^k(t), \hdots, x_n^k(t)]^\top$, for $k\in \mathcal{J}$ as the vector of all $n$ individuals' opinions on the $k^{th}$ topic. Then, $\vect{y}(t) = [\vect{y}_1(t)^\top, \hdots, \vect{y}_m(t)^\top]^\top$ captures all of the $n$ individuals' opinions on the $m$ topics. One obtains that
\begin{equation}\label{eq:y_update}
\vect{y}_k(t+1) = \sum_{j=1}^m \mbox{diag}(c_{kj})\mat{W}\vect{y}_j(t),
\end{equation}
where $\diag(c_{kj}) \in \mathbb{R}^n$ is a diagonal matrix with the $i^{th}$ diagonal element of $\diag(c_{kj})$ being $c_{kj,i}$, the $(k,j)^{th}$ entry of $\mat{C}_i$. It follows that
\begin{equation}\label{eq:y_network_system}
\vect{y}(t+1) = \begin{bmatrix} \diag(c_{11})\mat{W} & \cdots & \diag(c_{1m})\mat{W} \\
\vdots & \ddots & \vdots \\
\diag(c_{m1})\mat{W} & \cdots & \diag(c_{mm})\mat{W}
\end{bmatrix}\vect{y}(t).
\end{equation}
We denote the matrix in \eqref{eq:y_network_system} as $\mat{A}$, with block matrix elements $\mat{A}_{pq} = \diag(c_{pq})\mat{W}$. We now show how the system \eqref{eq:y_network_system} can be considered as a consensus process on a multiplex (or multi-layered) signed graph.

Consider the matrix $\mat{A}$ in \eqref{eq:y_network_system}, with the associated graph $\mathcal{G}[\mat{A}]$, and the matrix $\mat{B}$ in \eqref{eq:x_network_system}, with associated graph $\mathcal{G}[\mat{B}]$. Clearly, the two graphs are the same up to a reordering of the nodes. 
In $\mathcal{G}[\mat{A}]$, with node set $\mathcal{V}[\mat{A}] = \{v_1, \hdots, v_{nm}\}$, one can consider the node subset $\mathcal{V}_{p} = \{v_{(p-1)n+1}, \hdots, v_{pn}\}$, $p\in \mathcal{J}$ as a layer of the multi-layer graph $\mathcal{G}[\mat{A}]$ with vertices associated with the opinions of individuals $1, \hdots, n$ on topic $p$. In $\mathcal{G}[\mat{B}]$, with node set $\mathcal{V}[\mat{B}] = \{v_1, \hdots, v_{nm}\}$, one can consider the node subset $\tilde{\mathcal{V}}_q = \{v_{(q-1)m+1}, \hdots, v_{qm}\}$, $q\in\mathcal{I}$ as a layer of a multi-layer graph with vertices associated with the opinions of individual $q$ on topics $1, \hdots, m$. This is illustrated in Fig.~\ref{fig:multiplex_network}, where each layer is identified by a dotted green ellipse border. A key motivation to study $\mathcal{G}[\mat{A}]$ and the dynamical system \eqref{eq:y_network_system} is that all the block diagonal entries $\mat{A}_{ii}$ of $\mat{A}$ are nonnegative and irreducible because Assumption~\ref{assm:C} indicates that $\diag(c_{pp})$ has positive diagonal entries. This means that the edges between nodes in the subset $\mathcal{V}_p = \{v_{(p-1)n+1}, \hdots, v_{pn}\}$, $p\in \mathcal{J}$ have positive weights, and this property greatly aids in the checking of the structural balance or unbalance of the network $\mathcal{G}[\mat{A}]$ given $\mathcal{G}[\mat{W}]$ and $\mat{C}_i, \forall\,i\in\mathcal{I}$. 

Verify from the row-stochastic property of $\mat{W}$ and the row-sum property of $\mat{C}_i$ in Assumption~\ref{assm:C} that the entries of $\mat{A}$ satisfy $\sum_{q=1}^{nm} \vert a_{pq} \vert = 1$ for all $p = 1, \hdots, nm$. We therefore conclude that \eqref{eq:y_network_system} has the same dynamics as the discrete-time Altafini model (see e.g. \cite{altafini2013antagonistic_interactions,liu2017altafini_exp}).

\begin{remark} Although \eqref{eq:y_network_system} has the same dynamics as the discrete-time Altafini model, a number of important differences exist. First, the context of negative edge weights is entirely different: in the Altafini model, $w_{ij} < 0$ implies individual $i$ mistrusts individual $j$ \cite{altafini2013antagonistic_interactions}. In contrast, \eqref{eq:y_network_system} assumes nonnegative influence $w_{ij} \geq 0$, and the negative edge weights arise from negative logical interdependencies in $\mat C_i$. Moreover the network structure of $\mathcal{G}[\mat A]$ is affected by both the influence network $\mathcal{G}[\mat W]$ and the logic matrix graphs $\mathcal{G}[\mat C_i]$.
\end{remark}

\begin{figure}
\centering
\def\svgwidth{1\linewidth}
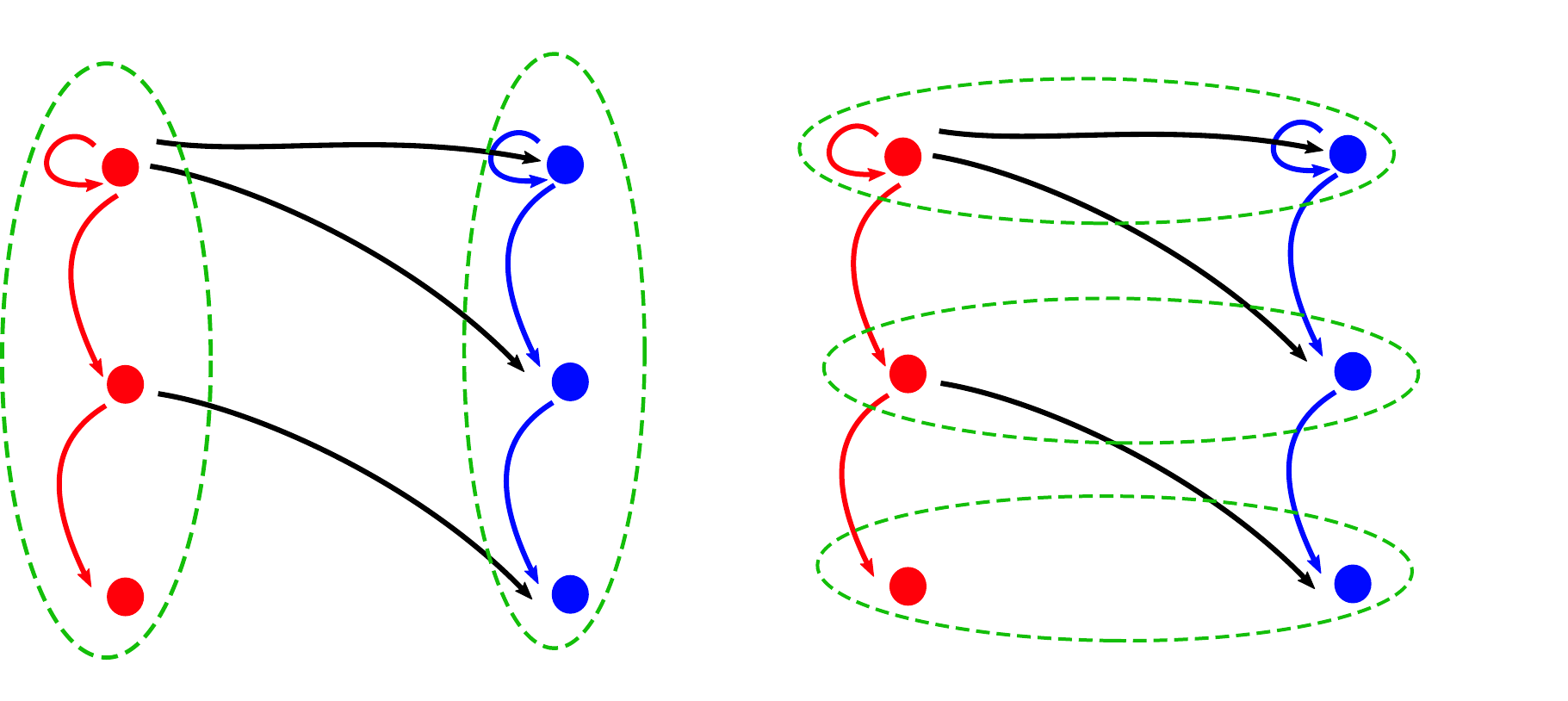
\caption{An illustrative network with 2 individuals discussing 3 topics, with only selected edges drawn for clarity. Each node represents the opinion of an individual for a topic, with red and blue nodes associated with individuals 1 and 2, respectively. The black edges represent interpersonal influence via the weight $w_{ij}$, while the coloured edges represent logical interdependencies between topics. In $\mathcal{G}[\mat{B}]$, nodes are grouped and ordered by individual in node subset $\tilde{\mathcal V}_q$ (as illustrated by the dotted green ellipse groupings) leading to \eqref{eq:x_network_system}. In $\mathcal{G}[\mat{A}]$, the nodes are grouped and ordered by topic in node subset $\mathcal{V}_p$ (as illustrated by the dotted green ellipses) leading to \eqref{eq:y_network_system}. }
\label{fig:multiplex_network}
\end{figure}

The main convergence result is given as follows.

\begin{theorem}\label{thm:convergence}
Suppose that for a population of $n$ individuals, the vector of the $n$ individuals' opinions $\vect y(t)$ evolves according to \eqref{eq:y_network_system}, with interpersonal influences captured by $\mathcal{G}[\mat{W}]$. Suppose further that Assumptions~\ref{assm:C}, \ref{assm:C_same_pattern}, and \ref{assm:W} hold. Then, for any initial condition $\vect y(0) \in \mathbb{R}^{nm}$, there exists some $\vect y^* \in \mathbb{R}^{nm}$ such that there holds $\lim_{t\to\infty}\vect y(t) = \vect y^*$ exponentially fast. If $y_k^i(0) \in [-1,1]$ for all $i\in \mathcal{I}$ and $k \in \mathcal{J}$, then $y_k^i(t) \in [-1,1]$ for all $t\geq 0$ and $i\in \mathcal{I}$ and $k \in \mathcal{J}$.
\end{theorem}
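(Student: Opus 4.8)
The plan is to exploit two structural features of the system matrix $\mat{A}$ in \eqref{eq:y_network_system}. First, by Assumption~\ref{assm:C_same_pattern} the block sparsity pattern of $\mat{A}$ (which of the blocks $\mat{A}_{pq} = \diag(c_{pq})\mat{W}$ is nonzero) coincides with the sparsity pattern of the common logic structure, so after a topological reordering of the topic indices $\mat{A}$ becomes block lower triangular, with diagonal super-blocks indexed by the strongly connected components $S_1,\ldots,S_r$ of that logic structure, ordered so that the component(s) of ``axiom'' topics (those with no incoming logical dependency) come first. Second, as already observed before the theorem, $\Vert \mat{A}\Vert_\infty = 1$, and each diagonal block $\mat{A}_{pp}$ is nonnegative, irreducible, and has \emph{positive} diagonal entries (because $w_{ii}>0$ and $c_{pp,i}>0$). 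I would then prove convergence by induction along this cascade of components, and finally deduce the invariance of $[-1,1]^{nm}$.

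For the base case, take a source component $S$ (there may be several, pairwise decoupled, each treated identically). Its restriction $\mat{A}_S$ has $\Vert |\mat{A}_S| \Vert_\infty = 1$ with equality in every row, i.e.\ $|\mat{A}_S|$ is row-stochastic; a short graph argument (the diagonal $\mat{W}$-blocks are irreducible, the topics in $S$ are strongly connected among themselves, and the positive diagonals of the $\mat{W}$-blocks link consecutive ``layers'') shows $\mathcal{G}[|\mat{A}_S|]$ is strongly connected, and the self-loops make it aperiodic, so $|\mat{A}_S|$ is primitive by Lemma~\ref{lem:primitive}. Using the convergence theory for the discrete-time Altafini model (to which \eqref{eq:y_network_system} was reduced above) together with the positive diagonal of $\mat{A}_S$, the powers $\mat{A}_S^t$ converge exponentially: if $\mathcal{G}[\mat{A}_S]$ is structurally balanced, a gauge transformation $\mat{D}_S\mat{A}_S\mat{D}_S = |\mat{A}_S|$ (with $\mat{D}_S$ diagonal, $\pm 1$ entries) reduces to the primitive row-stochastic case; if it is structurally unbalanced, then $\rho(\mat{A}_S)<1$ and $\mat{A}_S^t\to\mat 0$ — here the positivity of the diagonal is precisely what excludes the spurious possibility of an eigenvalue at $-1$ that would make $\mat{A}_S^t$ oscillate.

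For the inductive step, a non-source component $S_a$ obeys $\vect{z}_a(t+1) = \mat{A}_{aa}\vect{z}_a(t) + \vect{g}_a(t)$, where $\vect{g}_a(t)$ is a linear combination of state blocks of earlier components, which converge exponentially by the inductive hypothesis. Since $S_a$ has an incoming logical dependency, $|\mat{A}_{aa}|$ is irreducible and row-substochastic with at least one strictly deficient row, so a left Perron eigenvector argument gives $\rho(\mat{A}_{aa})\le\rho(|\mat{A}_{aa}|)<1$. Writing $\vect{z}_a(t) = \mat{A}_{aa}^t\vect{z}_a(0) + \sum_{s=0}^{t-1}\mat{A}_{aa}^{t-1-s}\vect{g}_a(s)$ and using that a convolution of an exponentially decaying kernel with an exponentially convergent input converges exponentially, one obtains $\vect{z}_a(t)\to(\mat{I}-\mat{A}_{aa})^{-1}\vect{g}_a^\ast$. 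Reassembling the components and undoing the reordering yields $\lim_{t\to\infty}\vect{y}(t)=\vect{y}^\ast$ exponentially. The boundedness claim is then immediate: since $\sum_{q=1}^{nm}|a_{pq}|=1$ for every $p$, one has $\Vert\mat{A}\Vert_\infty=1$, hence $\Vert\vect{y}(t)\Vert_\infty\le\Vert\vect{y}(0)\Vert_\infty\le 1$ for all $t$ by induction, i.e.\ $y_k^i(t)\in[-1,1]$.

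I expect the main obstacle to be the base case — pinning down genuine convergence (not merely boundedness) of $\mat{A}_S^t$ for the irreducible signed block: establishing the structural-balance versus $\rho(\mat{A}_S)<1$ dichotomy and making careful use of the primitivity/irreducibility of the relevant sub-blocks, which in turn rests on the positive-diagonal hypotheses $w_{ii}>0$ and $c_{pp,i}>0$. The cascade induction and the norm estimate are comparatively routine once that piece is in place.
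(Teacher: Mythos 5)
Your proposal is correct and follows essentially the same route as the paper: a block lower-triangular decomposition of $\mat{A}$ along the strongly connected components of the common logic structure, a base case for closed components handled via the discrete-time Altafini convergence theory for primitive signed row-stochastic matrices (the paper invokes Lemma~\ref{lem:A_irreducible} and the cited Altafini results where you sketch the gauge-transformation/structural-balance dichotomy), an inductive step using that $\vert\bar{\mat{A}}_{pp}\vert$ is irreducible and row-substochastic so $\rho(\bar{\mat{A}}_{pp})\le\rho(\vert\bar{\mat{A}}_{pp}\vert)<1$ (Lemma~\ref{lem:spectrum_irre} plus the infinity-norm comparison), and the $\Vert\mat{A}\Vert_\infty=1$ bound for invariance of $[-1,1]^{nm}$. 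No gaps of substance; your treatment of the inductive step via the variation-of-constants formula is just a more explicit rendering of the paper's argument.
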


Having established that the opinion dynamical system always converges, we now address Objective~\ref{obj} by studying the influence of $\mat{C}_i$ in determining the limiting opinion vector $\vect y^*$.

\subsection{Consensus and Disagreement of Each Topic}\label{ssec:disagree}

We now explain how to use the logic matrices $\mat{C}_i$ to systematically determine whether opinions on a given topic $p\in\mathcal{J}$ will reach a consensus or not. We defer discussion of the social interpretation of the theoretical results until Section~\ref{ssec:discuss}, and illustrate some of the conclusions drawn in this section with selected simulations in Section~\ref{sec:sim}.

Consider the graph $\mathcal{G}[\mat{C}_i]$ associated with $\mat C_i$ for some $i\in \mathcal{I}$, which is a signed graph if there are negative off-diagonal entries in $\mat{C}_i$. It turns out (see Theorem 3 immediately below) that if $\mat C_i$ for all $i\in \mathcal{I}$ are irreducible, then all topics will reach a consensus (although the consensus value for two different topics $p$ and $q$ may be different). We remark that irreducible logic matrices correspond to $\mathcal{G}[\mat{C}_i]$ which are strongly connected, and thus for any two topics $p,q\in\mathcal{J}$, there is a (possibly signed) directed path from $p$ to $q$. In other words, all topics are directly or indirectly dependent on all other topics. 

\begin{theorem}\label{thm:consensus_irreducible_C}
Let the hypotheses in Theorem~\ref{thm:convergence} hold. 
Suppose that (i) $y_k^i(0) \in[-1,1]$ for all $i\in \mathcal{I}$ and $k\in \mathcal{J}$, and  (ii) that Assumptions~\ref{assm:C}, \ref{assm:C_same_pattern}, and \ref{assm:W} hold. Suppose further that $\mat{C}_i,\forall\,i\in\mathcal{I}$ are irreducible\footnote{Under Assumption~\ref{assm:C_same_pattern}, irreducibility of one $\mat C_i$ implies the same for all.}. Then, for all $k\in\mathcal{J}$, $\lim_{t\to\infty}\vect{y}_k(t) = \alpha_k \vect{1}_n$ exponentially fast, where $\alpha_k \in [-1,1]$. Moreover,
\begin{enumerate}
    \item If there are no competing logical interdependencies, as given in Definition~\ref{def:compete_logic}, and $\mathcal{G}[\mat{C}_i],\forall\,i\in\mathcal{I}$ are structurally balanced\footnote{Under Assumption~\ref{assm:C_same_pattern} and in the absence of competing logical interdependencies, the presence or absence of structural balance for one $\mat C_i$ implies the same for all.}, then for almost all initial conditions,  $\vert \alpha_p \vert = \vert \alpha_q \vert \neq 0, \forall\,p,q\in\mathcal{J}$.
    \item If there are no competing logical interdependencies, and $\mathcal{G}[\mat{C}_i],\forall\,i\in\mathcal{I}$ are structurally unbalanced, then $\alpha_k = 0,\forall\,k\in\mathcal{J}$.
    \item If there are competing logical interdependencies, then $\alpha_k = 0,\forall\,k\in\mathcal{J}$.
\end{enumerate}
\end{theorem}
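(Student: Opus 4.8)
The plan is to build on the observation, recorded just before the theorem, that \eqref{eq:y_network_system} is a discrete-time Altafini-type dynamics on the signed graph $\mathcal{G}[\mat{A}]$ whose entrywise-modulus matrix $|\mat{A}|$ is row-stochastic, and to combine the structural-balance dichotomy for such dynamics \cite{altafini2013antagonistic_interactions,liu2017altafini_exp} with the layered block structure of $\mat{A}$. \textbf{Step 1: the graph $\mathcal{G}[\mat{A}]$.} I would first check that $\mathcal{G}[\mat{A}]$ is strongly connected and aperiodic, so that $|\mat{A}|$ is primitive (Lemma~\ref{lem:primitive}). Within layer $\mathcal{V}_p$ the induced subgraph carries the weight matrix $\mat{A}_{pp}=\diag(c_{pp})\mat{W}\ge 0$, which is irreducible (as $\mathcal{G}[\mat{W}]$ is strongly connected and $c_{pp,i}>0$ by Assumption~\ref{assm:C}) and has positive diagonal $c_{pp,i}w_{ii}>0$; irreducibility of each $\mat{C}_i$ makes $\mathcal{G}[\mat{C}_i]$ strongly connected, so any layer is reachable from any other along at least one cross-layer edge, and combining this with the internal strong connectivity of each layer gives strong connectivity of $\mathcal{G}[\mat{A}]$, while the self-loops give aperiodicity. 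Since $\mat{W}$ is row-stochastic and $\sum_q|c_{pq,i}|=1$, the matrix $|\mat{A}|$ is row-stochastic, so primitivity yields $|\mat{A}|^t\to\vect 1_{nm}\vect{\xi}^\top$ with $\vect{\xi}>0$, $\vect{\xi}^\top\vect 1_{nm}=1$.

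\textbf{Step 2: gauge transformations are layer-constant.} This is the crux. Suppose $\mathcal{G}[\mat{A}]$ is structurally balanced and let $\mat{D}=\diag(d_r)$, $d_r\in\{\pm 1\}$, satisfy $\mat{D}\mat{A}\mat{D}\ge 0$ (so $\mat{D}\mat{A}\mat{D}=|\mat{A}|$, hence $\mat{A}=\mat{D}|\mat{A}|\mat{D}$). Restricting to layer $p$, the inequality $d_{(p-1)n+i}\,c_{pp,i}w_{ij}\,d_{(p-1)n+j}\ge 0$ with $c_{pp,i}>0$, $w_{ij}\ge 0$ forces $d_{(p-1)n+i}=d_{(p-1)n+j}$ whenever $w_{ij}>0$; strong connectivity of $\mathcal{G}[\mat{W}]$ then forces all entries of $\mat{D}$ in layer $p$ to share a common sign $e_p$. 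Substituting into the cross-layer inequalities gives $e_p\,\sgn(c_{pq,i})\,e_q\ge 0$ for every $i$ and every $p,q$ with $c_{pq,i}\ne 0$, i.e. $(e_1,\dots,e_m)$ is a valid structural-balance partition of $\mathcal{G}[\mat{C}_i]$ for \emph{every} $i$. Consequences: (a) with no competing logical interdependencies the $\mathcal{G}[\mat{C}_i]$ coincide as a signed graph, and $\mathcal{G}[\mat{A}]$ is structurally balanced if and only if that common graph is (the ``if'' direction being checked by lifting a partition $(f_p)$ layerwise, using $c_{pp,i}>0$, $w_{ij}\ge 0$ for the within-layer edges and $\sgn(c_{pq,i})=f_pf_q$ for the cross-layer ones); (b) if there is a competing logical interdependence on $(p,q)$, then $\sgn(c_{pq,i})$ is non-constant in $i$, which is incompatible with $e_p\,\sgn(c_{pq,i})\,e_q\ge 0$ holding for all $i$, so $\mathcal{G}[\mat{A}]$ is structurally unbalanced.

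\textbf{Step 3: the three cases.} Apply the Altafini dichotomy to $\mathcal{G}[\mat{A}]$. If $\mathcal{G}[\mat{A}]$ is structurally balanced, then $\mat{A}^t\vect{y}(0)=\mat{D}|\mat{A}|^t\mat{D}\vect{y}(0)\to(\vect{\xi}^\top\mat{D}\vect{y}(0))\,\mat{D}\vect 1_{nm}$; since $\mat{D}$ is layer-constant by Step 2, the $k$-th block equals $(e_k\,\vect{\xi}^\top\mat{D}\vect{y}(0))\vect 1_n$, so bipartite consensus has collapsed to consensus \emph{within} each topic. If $\mathcal{G}[\mat{A}]$ is structurally unbalanced, then $\rho(\mat{A})<\rho(|\mat{A}|)=1$ and $\vect{y}(t)\to\vect 0_{nm}$. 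In both cases Theorem~\ref{thm:convergence} guarantees $y_k^i(t)\in[-1,1]$ for all $t$ (since $y_k^i(0)\in[-1,1]$), so each limit $\alpha_k$ lies in $[-1,1]$, which proves $\lim_{t\to\infty}\vect{y}_k(t)=\alpha_k\vect 1_n$ with $\alpha_k\in[-1,1]$. For item~1, no competing interdependencies plus structural balance of $\mathcal{G}[\mat{C}_i]$ puts us in the balanced case with $\alpha_k=e_k\,\vect{\xi}^\top\mat{D}\vect{y}(0)$, whence $|\alpha_p|=|\alpha_q|$ for all $p,q$, and this common magnitude is nonzero off the measure-zero hyperplane $\{\vect{y}(0):\vect{\xi}^\top\mat{D}\vect{y}(0)=0\}$, i.e. for almost all initial conditions. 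For item~2, Step~2(a) transfers structural unbalance from the (common) $\mathcal{G}[\mat{C}_i]$ to $\mathcal{G}[\mat{A}]$, so $\vect{y}(t)\to\vect 0_{nm}$ and $\alpha_k=0$. For item~3, Step~2(b) gives structural unbalance of $\mathcal{G}[\mat{A}]$ directly, so again $\alpha_k=0$.

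\textbf{Main obstacle.} Everything outside Step~2 is bookkeeping on top of known Altafini-model facts; the genuine work is Step~2 — showing that a gauge transformation of $\mathcal{G}[\mat{A}]$, when one exists, is forced to be constant on layers, and then transporting structural (un)balance faithfully between $\mathcal{G}[\mat{A}]$ and the $\mathcal{G}[\mat{C}_i]$. This is exactly where Assumptions~\ref{assm:C}--\ref{assm:W} enter essentially (positive diagonal of $\mat{C}_i$, common zero pattern, strong connectivity and self-loops of $\mathcal{G}[\mat{W}]$), and where the paper's conceptual point is made precise: heterogeneous \emph{magnitudes} $c_{pq,i}$ do not obstruct consensus, whereas heterogeneous \emph{signs} — competing logical interdependencies — force $\alpha_k=0$.
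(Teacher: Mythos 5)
Your proposal is correct and follows the same overall architecture as the paper's proof: view \eqref{eq:y_network_system} as an Altafini-type dynamics on the multiplex signed graph $\mathcal{G}[\mat{A}]$, establish that $|\mat{A}|$ is primitive and row-stochastic, transfer structural (un)balance between $\mathcal{G}[\mat{A}]$ and the $\mathcal{G}[\mat{C}_i]$ (this is the paper's Lemma~\ref{lem:A_structural_balance}), and then invoke the balanced/unbalanced dichotomy of \cite{liu2017altafini_exp}, using positivity of the within-layer edges to upgrade modulus consensus to per-topic consensus. The one genuine methodological difference lies in how the central transfer lemma is proved. The paper works with the cycle characterisation of structural balance: it exhibits, for each negative undirected cycle of $\mathcal{G}[\mat{C}]$ (or for each competing interdependence), a corresponding negative undirected cycle of $\mathcal{G}[\mat{A}]$, and conversely shows by a parity count that balance of $\mathcal{G}[\mat{C}]$ rules out negative cycles in $\mathcal{G}[\mat{A}]$. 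You instead use the switching (gauge) characterisation: any signature matrix $\mat{D}$ with $\mat{D}\mat{A}\mat{D}\ge 0$ is forced to be constant on each layer by $c_{pp,i}>0$ and strong connectivity of $\mathcal{G}[\mat{W}]$, and the cross-layer constraints then read off a common balance partition of all the $\mathcal{G}[\mat{C}_i]$ simultaneously, with competing interdependencies immediately obstructing its existence. Your route is more algebraic and delivers the layer-constancy of $\mat{D}$ — hence consensus within each topic and the explicit form $\alpha_k = e_k\,\vect{\xi}^\top\mat{D}\vect{y}(0)$, from which Item 1 (equal moduli, nonzero off a hyperplane) and in fact Corollary~\ref{cor:mod_consensus_irreducible_C} drop out for free; the paper's cycle argument is more combinatorial and obtains the same facts by citing \cite[Theorem 1]{liu2017altafini_exp} together with its Lemma~\ref{lem:graph_properties}. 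Both are complete; no gap.
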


Further to the conclusions of Theorem~\ref{thm:consensus_irreducible_C}, one can obtain the following result for the case where consensus to a nonzero opinion value is achieved.
\begin{corollary}\label{cor:mod_consensus_irreducible_C}
Let the hypotheses in Theorem~\ref{thm:consensus_irreducible_C} hold. Suppose that there are no competing logical interdependencies, and $\mathcal{G}[\mat{C}_i],\forall\,i\in\mathcal{I}$ are structurally balanced. For $\mathcal{G}[\mat{C}_i]$ with node set $\mathcal{V} = \{v_1, \hdots, v_m\}$, define two disjoint subsets of nodes $\mathcal{V}[\mat C_i]^+$ and $\mathcal{V}[\mat C_i]^-$ so that each edge between two nodes in $\mathcal{V}[\mat C_i]^+$ or two nodes in $\mathcal{V}[\mat C_i]^-$ has a positive weight, and each edge between two nodes in $\mathcal{V}[\mat C_i]^+$ and $\mathcal{V}[\mat C_i]^-$ has a negative weight. Then, for any $p,q\in\mathcal{J}$, there holds
\begin{enumerate}
\item $\alpha_p = \alpha_q$ if $v_q, v_p \in \mathcal{V}[\mat C_i]^+$ or $v_q, v_p \in\mathcal{V}[\mat C_i]^-$.
\item $\alpha_p = -\alpha_q$ if $v_q \in \mathcal{V}[\mat C_i]^+$ and $v_p \in\mathcal{V}[\mat C_i]^-$.
\end{enumerate}
\end{corollary}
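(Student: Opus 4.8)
The plan is to exploit the structural balance of $\mathcal{G}[\mat{C}_i]$ to perform a gauge transformation (sign change of coordinates) that converts the signed graph into a nonnegative one, and then track how the consensus values on different topics must relate to one another under this transformation. Concretely, since $\mathcal{G}[\mat{C}_i]$ is structurally balanced with partition $\mathcal{V}[\mat{C}_i]^+ \cup \mathcal{V}[\mat{C}_i]^-$, define the diagonal signature matrix $\mat{S} = \diag(s_1,\dots,s_m)$ with $s_p = 1$ if $v_p \in \mathcal{V}[\mat{C}_i]^+$ and $s_p = -1$ if $v_p \in \mathcal{V}[\mat{C}_i]^-$. Because there are no competing logical interdependencies, Assumption~\ref{assm:C_same_pattern} guarantees the \emph{same} partition works for every $\mat{C}_i$, so $\mat{S}$ is independent of $i$, and $\mat{S}\mat{C}_i\mat{S}$ is nonnegative for all $i$ (every off-diagonal entry $c_{pq,i}$ with $v_p, v_q$ in different sets is negative, and $s_p s_q = -1$; the remaining entries are nonnegative with $s_p s_q = +1$; the diagonal entries are unchanged and positive by Assumption~\ref{assm:C}).

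The key steps, in order, are as follows. First I would record that $\mat{S}\mat{C}_i\mat{S}$ is row-stochastic and nonnegative (the absolute row sums of $\mat{C}_i$ are $1$ by Assumption~\ref{assm:C}, and conjugation by $\mat{S}$ turns each $\vert c_{pq,i}\vert$ into the actual entry). Second, define the transformed opinions $\vect{z}_i(t) = \mat{S}\vect{x}_i(t)$, equivalently $z_i^p(t) = s_p x_i^p(t)$; from \eqref{eq:md_degroot_individual} one checks $\vect{z}_i(t+1) = \sum_j w_{ij}(\mat{S}\mat{C}_i\mat{S})\vect{z}_j(t)$, which is a multi-dimensional DeGroot system with \emph{nonnegative} logic matrices $\mat{S}\mat{C}_i\mat{S}$ that are irreducible and share the same pattern. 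Third, I would invoke Theorem~\ref{thm:consensus_irreducible_C}: since by hypothesis $\vert\alpha_p\vert = \vert\alpha_q\vert \neq 0$ for all $p,q$ and $\lim_{t\to\infty}\vect{y}_k(t) = \alpha_k\vect{1}_n$, the transformed system converges with $\lim_{t\to\infty} z_i^p(t) = s_p\alpha_p =: \beta_p$ for every $i$, i.e.\ all individuals agree on each topic in the $\vect{z}$-coordinates. Fourth — and this is the crux — I would argue that in the nonnegative, irreducible, pattern-homogeneous case the limiting values $\beta_p$ must all be \emph{equal} across topics: the relevant limit matrix is (a block analogue of) $\vect{1}\vect{\gamma}^\top$, and because every $\mat{S}\mat{C}_i\mat{S}$ is row-stochastic and the combined system matrix in the $\vect{y}$-ordering (cf.\ \eqref{eq:y_network_system}) is nonnegative, irreducible and row-stochastic, its unique (up to scaling) right eigenvector for eigenvalue $1$ is $\vect{1}_{nm}$; hence the limit is rank one of the form $\vect{1}_{nm}\vect{\eta}^\top$ and $\beta_p$ is the same for all $p$, call it $\beta$. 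Undoing the transformation, $s_p\alpha_p = \beta$ for all $p$, so $\alpha_p = s_p\beta$. This immediately gives $\alpha_p = \alpha_q$ when $s_p = s_q$ (both topics in $\mathcal{V}[\mat{C}_i]^+$ or both in $\mathcal{V}[\mat{C}_i]^-$) and $\alpha_p = -\alpha_q$ when $s_p \neq s_q$, which are exactly the two claimed conclusions.

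The main obstacle I anticipate is step four: carefully justifying that the transformed system's consensus value is \emph{topic-independent}. One has to be sure the shared consensus value across individuals for each fixed topic does not secretly depend on the topic index through the structure of the $\mat{C}_i$. The clean way around this is to note that $\mat{1}_{nm}$ lies in the kernel of $\mat{A}' - \mat{I}_{nm}$, where $\mat{A}'$ is the nonnegative system matrix obtained from \eqref{eq:y_network_system} after the gauge transformation — this holds because each $\mat{S}\mat{C}_i\mat{S}$ has row sums $1$ and $\mat{W}$ is row-stochastic, so $\mat{A}'\vect{1}_{nm} = \vect{1}_{nm}$ — combined with the fact that $\mathcal{G}[\mat{A}']$ is strongly connected (inherited from irreducibility of $\mat{C}_i$ and strong connectivity of $\mathcal{G}[\mat{W}]$), so the eigenvalue $1$ is simple and its eigenspace is exactly $\mathrm{span}\{\vect{1}_{nm}\}$. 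Then any convergent trajectory of the transformed system tends to a multiple of $\vect{1}_{nm}$, forcing $\beta_p \equiv \beta$. Everything else — the nonnegativity and row-stochasticity of $\mat{S}\mat{C}_i\mat{S}$, the commuting of $\mat{S}$ with the $\diag(\cdot)$ blocks, the invariance of the pattern — is routine bookkeeping given Assumptions~\ref{assm:C}, \ref{assm:C_same_pattern}, and \ref{assm:W} and the absence of competing logical interdependencies.
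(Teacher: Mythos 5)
Your proposal is correct, but it follows a genuinely different route from the paper. The paper works directly with the signed system matrix $\mat{A}$ of \eqref{eq:y_network_system}: it invokes Lemma~\ref{lem:A_structural_balance} to conclude $\mathcal{G}[\mat{A}]$ is structurally balanced, uses Item 1) of Lemma~\ref{lem:graph_properties} to show that each topic layer $\mathcal{V}_p$ lies entirely in one cell of the balanced partition of $\mathcal{G}[\mat{A}]$, uses Items 2) and 3) to show the inter-layer edge signs match the signs of the edges of $\mathcal{G}[\mat{C}]$, and then cites the bipartite-consensus results for the Altafini model in \cite{liu2017altafini_exp} to read off the sign relations between the $\alpha_p$. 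You instead perform a gauge (signature) transformation $\mat S = \diag(s_1,\dots,s_m)$ at the level of the $m\times m$ logic matrices, which is legitimate: the absence of competing interdependencies together with Assumption~\ref{assm:C_same_pattern} does give a common partition and hence a common $\mat S$; $\mat S \mat C_i \mat S$ is nonnegative and row-stochastic by the absolute-row-sum condition of Assumption~\ref{assm:C}; the transformed system matrix is primitive (by Lemma~\ref{lem:A_irreducible}) and row-stochastic, so Perron--Frobenius forces a global consensus $\beta\vect 1_{nm}$, and undoing the transformation yields $\alpha_p = s_p\beta$, which is exactly the claimed dichotomy. In effect you re-derive, in this special case, the machinery that \cite{liu2017altafini_exp} provides as a black box; what your route buys is a self-contained and arguably more transparent argument that avoids tracking the partition of the large graph $\mathcal{G}[\mat{A}]$ through Lemmas~\ref{lem:graph_properties} and \ref{lem:A_structural_balance}, and it even shows the relations $\alpha_p = s_p\beta$ hold for all initial conditions (the ``almost all'' qualifier of Theorem~\ref{thm:consensus_irreducible_C} is only needed to guarantee $\beta\neq 0$). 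What it costs is that the gauge argument is specific to the balanced, no-competing-interdependencies case, whereas the paper's lemmas are reused to handle the structurally unbalanced and competing cases of Theorem~\ref{thm:consensus_irreducible_C} as well.
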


Consider now the more general case where $\mat C_i$ for all $i\in \mathcal{I}$ are reducible. Thus, $\mathcal{G}[\mat C_i]$ is no longer strongly connected. 
The logic matrices of all individuals can be expressed in a lower block triangular form through an inessential reordering of the topic set. From Assumption~\ref{assm:C_same_pattern}, we further conclude that there exists a common permutation matrix $\mat{P}$ such that, for all $i\in \mathcal{I}$, $\mat{P}^T\mat{C}_i\mat{P}$ is lower block triangular. Without loss of generality, we therefore assume that the topics $p\in \mathcal{J}$ are ordered such that, for each $i\in \mathcal{I}$,
\begin{equation}\label{eq:reducible_C}
    \mat{C}_i=\begin{bmatrix}
 \mat{C}_{11,i}  & \mat{0}  & \cdots & \mat{0}\\
 \mat{C}_{21,i} & \mat{C}_{22,i} & \cdots & \mat{0}\\
 \vdots & \vdots & \ddots & \vdots \\
 \mat{C}_{s1,i} & \mat{C}_{s2,i}  & \cdots & \mat{C}_{ss,i}
    \end{bmatrix},
\end{equation}
where $\mat{C}_{jj,i}\in \mathbb{R}^{s_j \times s_j}$ is irreducible for any $j\in \mathcal{S} \triangleq \{1,2,\cdots,s\}$ and $s_j$ are positive integers such that $\sum_{j=1}^s s_j=m$.
Decompose the opinion set $\mathcal{J}$  into $s$ disjoint subsets $\mathcal{J}_j$ for $j\in \mathcal{S}$ where  
\begin{equation}\label{eq:topic_subset}
\mathcal{J}_j\triangleq \{\sum_{i=1}^{j}s_{i-1}+1,\sum_{i=1}^{j}s_{i-1}+2,\dots, \sum_{i=1}^{j}s_{i-1}+s_j\}, 
\end{equation}
with $s_0=0$. 
Though reducible $\mat C_i$ may seem to be restrictive, they are in fact common given the problem context since they imply a cascade logical interdependence structure among the topics. This may be representative of an individual $i$ who obtains $\mat C_i$ by sequentially building upon an axiom or axioms (the first $\mat C_{jj,i}$ block matrices). The two topics of the Space exploration example given in \eqref{eq:C_example01} constitute one such example of a belief system driven by an axiom (Topic 1). 

From the perspective of the graph $\mathcal{G}[\mat C_i]$, the expression in \eqref{eq:reducible_C} enables $\mathcal{G}[\mat C_i]$ to be divided into strongly connected components which are ``closed'' or ``open''. (This is related to a concept called the condensation of a graph, see \cite{bullo2009distributed}). Formally, we say that a subgraph $\bar{\mathcal{G}}$ is a strongly connected component of $\mathcal{G}$ if $\bar{\mathcal G}$ is strongly connected and any other subgraph of $\mathcal{G}$ strictly containing $\bar{\mathcal{G}}$ is \emph{not} strongly connected. A strongly connected component $\bar{\mathcal{G}}$ of a graph $\mathcal{G}$ is said to be closed if there are no incoming edges to $\bar{\mathcal{G}}$ from a node outside of $\bar{\mathcal{G}}$, and is said to be open otherwise. The simplest possible strongly connected component is a single node, and it would be closed if there were no incoming edges to it. Figure~\ref{fig:MultiDim_TopicGraph} shows an example of a graph $\mathcal{G}[\mat C_i]$ divided into strongly connected components (identified by the dotted line encircling a set of nodes), with the blue and purple components being closed, and the green and orange components being open. Following the notation in \eqref{eq:reducible_C} and \eqref{eq:topic_subset}, we have for the example in Fig.~\ref{fig:MultiDim_TopicGraph}, $s = 4$, $s_1 = 3, s_2 = 1, s_3 = 2, s_4 = 1$, and $\mathcal{J}_1 = \{1, 2, 3\}$, $\mathcal{J}_2 = \{4\}$, $\mathcal{J}_3 = \{5, 6\}$, $\mathcal{J}_4 = \{7\}$. 

\begin{figure}
\centering
\def\svgwidth{0.7\linewidth}
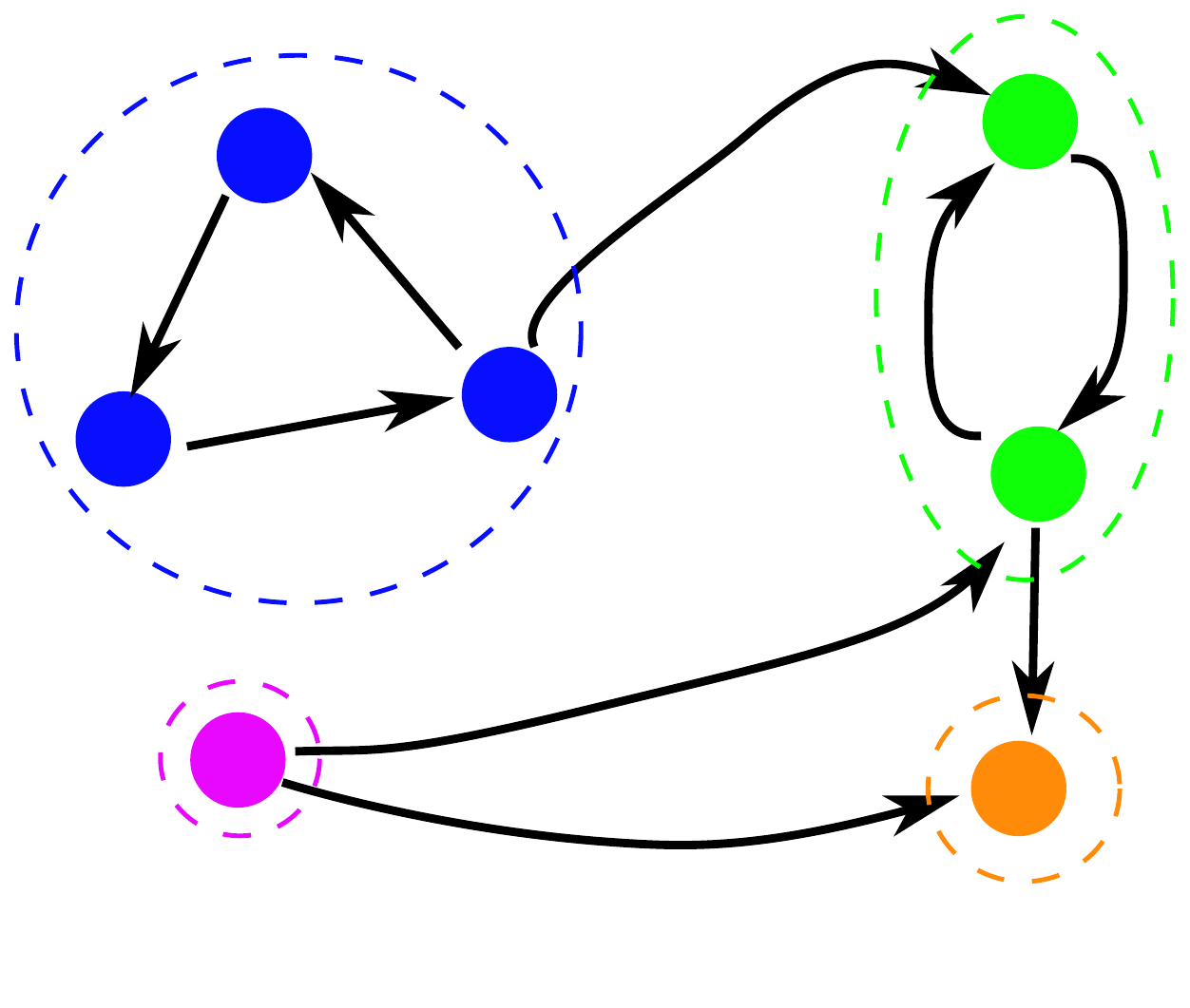
\caption{An illustrative example of $\mathcal{G}[\mat C_i]$, with each node representing a topic, and edges representing logical interdependencies between topics (self-loops are hidden for clarity). One can divide the nodes into strongly connected components (each dotted coloured circle denotes a strongly connected component). The results of this paper allow one to progressively analyse each component to establish which topics will have opinions reaching a consensus and which topics will have opinions reaching a persistent disagreement. }
\label{fig:MultiDim_TopicGraph}
\end{figure}

If the topic set $\mathcal{J}_j$ corresponds to a \textit{closed} strongly connected component of $\mathcal{G}[\mat{C}_i]$, then clearly in \eqref{eq:reducible_C}, $\mat{C}_{pj,i} = \mat 0$ for all $p \neq j$. One can then use Theorem~\ref{thm:consensus_irreducible_C} and Corollary~\ref{cor:mod_consensus_irreducible_C} to establish that for every $k\in \mathcal{J}_j$, there holds $\lim_{t\to\infty} \vect y_k(t) = \alpha_k \vect 1_n$ exponentially fast, with $\alpha_k \in [-1,1]$. That is, all opinions in topic $k \in \mathcal{J}_j$ reach a consensus. If, on the other hand, the topic set $\mathcal{J}_j$ corresponds to an \textit{open} strongly connected component of $\mathcal{G}[\mat{C}_i]$, then the results presented below can be employed sequentially in order to establish whether opinions on a given topic have reached a consensus. By ``sequentially'', we mean that we analyse the topic sets $\mathcal{J}_j$ with $j$ in the order $1, 2, \hdots, s$.  Under Assumption~\ref{assm:C_same_pattern}, define for each topic $p \in \mathcal{J}$, the set 
\begin{equation}\label{eq:ts_depend}
    \hat{\mathcal{J}}_p \triangleq \{q \in \mathcal{J} : c_{pq,i} \neq 0, q \neq p\}
\end{equation}
where $c_{pq,i}$ is the $pq^{th}$ entry of $\mat C_i$. In other words, $\hat{\mathcal{J}}_p$ identifies all topics $q \in \mathcal{J}$ that topic $p$ is logically dependent upon. Because of Assumption~\ref{assm:C_same_pattern}, the set $\hat{\mathcal{J}}_p$ is the same for all individuals $i\in\mathcal{I}$. In Fig.~\ref{fig:MultiDim_TopicGraph}, $\hat{J}_6$ for example is $\{4, 5\}$.

We present necessary and sufficient conditions that ensure every topic in the subset $\mathcal{J}_j$ reaches a consensus of opinions in two theorems, the first for the case when the subset $\mathcal{J}_j$ is a singleton (e.g. $\mathcal{J}_4 = \{7\}$ in Fig.~\ref{fig:MultiDim_TopicGraph}), and the second for when $\mathcal{J}_j$ has at least two elements (e.g. $\mathcal{J}_3 = \{5,6\}$ in Fig.~\ref{fig:MultiDim_TopicGraph}).

\begin{theorem}\label{thm:disagree_md_degroot_node}
Let the hypotheses in Theorem~\ref{thm:convergence} hold. Assume that (i) $y_k^i(0) \in[-1,1]$ for all $i\in \mathcal{I}$ and $k\in \mathcal{J}$, and (ii) that $\mat C_i, \forall\,i\in \mathcal{I}$ is decomposed as in \eqref{eq:reducible_C}. Suppose that $\mathcal{J}_j = \{p\}$, as defined in \eqref{eq:topic_subset}, is a singleton, and let $\hat{\mathcal{J}}_{p}$ as defined in \eqref{eq:ts_depend} be nonempty. Suppose further that all topics $q \in \hat{\mathcal{J}}_{p}$ satisfy $\lim_{t\to\infty} \vect{y}_{q} = \alpha_{q} \vect{1}_n,\,\alpha_{q} \in [-1,1]$.
Then, $\lim_{t\to\infty} \vect{y}_p(t) = \alpha_p \vect{1}_n$ for some $\alpha_p \in \mathbb{R}$ if and only if there exists a $\kappa \in  [-1,1]$ such that 
\begin{equation}\label{eq:cond_multi}
\kappa = \frac{\sum_{q\in \hat{\mathcal{J}}_{p}} \alpha_{q} c_{pq,i} }{\sum_{q\in \hat{\mathcal{J}}_{p}} \vert c_{pq,i}\vert }\;, \quad \forall\, i\in\mathcal{I}.
\end{equation}
If such a $\kappa$ exists, then $\alpha_p = \kappa$.
\end{theorem}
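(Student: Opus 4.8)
The idea is to isolate the recursion governing $\vect{y}_p(t)$ from \eqref{eq:y_network_system}, recognise it as a contractive linear recursion driven by an input that converges to a constant vector, and then read off the limit. By the definition of $\hat{\mathcal{J}}_p$ in \eqref{eq:ts_depend} together with Assumption~\ref{assm:C_same_pattern} (which makes $\diag(c_{pj}) = \mat 0$ whenever $j \neq p$ and $j \notin \hat{\mathcal{J}}_p$), and because $\mathcal{J}_j = \{p\}$ is a singleton so that its diagonal block in \eqref{eq:reducible_C} is the scalar $c_{pp,i}$, the $p$-th block row of \eqref{eq:y_network_system} reduces to
\[
\vect{y}_p(t+1) = \diag(c_{pp})\mat{W}\,\vect{y}_p(t) + \sum_{q\in\hat{\mathcal{J}}_p}\diag(c_{pq})\mat{W}\,\vect{y}_q(t).
\]
I would set $\mat{M} \triangleq \diag(c_{pp})\mat{W}$ and treat $\vect{u}(t) \triangleq \sum_{q\in\hat{\mathcal{J}}_p}\diag(c_{pq})\mat{W}\,\vect{y}_q(t)$ as an exogenous input.

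First I would show $\mat{I}_n - \mat{M}$ is nonsingular. Since $c_{pp,i}>0$ and $\mat{W}\geq 0$, $\mat M$ is nonnegative, and its $i$-th row sum is $c_{pp,i}\sum_j w_{ij} = c_{pp,i}$; because $\hat{\mathcal{J}}_p \neq \emptyset$, Assumptions~\ref{assm:C} and~\ref{assm:C_same_pattern} give $c_{pp,i} = 1 - \sum_{q\in\hat{\mathcal{J}}_p}\lvert c_{pq,i}\rvert < 1$ for every $i$, so $\Vert \mat{M}\Vert_\infty = \max_i c_{pp,i} < 1$ and hence $\rho(\mat M) < 1$. Next, using that $\mat{W}$ is row-stochastic so $\mat{W}\vect{1}_n = \vect{1}_n$, and the hypothesis $\vect{y}_q(t)\to\alpha_q\vect{1}_n$ for each $q\in\hat{\mathcal{J}}_p$, I get $\vect{u}(t)\to\vect{u}^\ast \triangleq \sum_{q\in\hat{\mathcal{J}}_p}\alpha_q\diag(c_{pq})\vect{1}_n$, whose $i$-th entry is $\sum_{q\in\hat{\mathcal{J}}_p}\alpha_q c_{pq,i}$. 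Theorem~\ref{thm:convergence} already guarantees that $\vect{y}_p(t)$ converges to some limit $\vect{y}_p^\ast$; passing to the limit in the recursion yields $(\mat{I}_n-\mat{M})\vect{y}_p^\ast = \vect{u}^\ast$, and by nonsingularity this $\vect{y}_p^\ast$ is the unique solution of that linear system.

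It remains to characterise when $\vect{y}_p^\ast = \kappa\vect{1}_n$ for a scalar $\kappa$. By uniqueness this holds iff $\kappa\vect{1}_n$ solves $(\mat{I}_n-\mat{M})\vect{z}=\vect{u}^\ast$; using $\mat{W}\vect{1}_n=\vect{1}_n$, the $i$-th component of that equation is $\kappa(1-c_{pp,i}) = \sum_{q\in\hat{\mathcal{J}}_p}\alpha_q c_{pq,i}$, and substituting $1-c_{pp,i}=\sum_{q\in\hat{\mathcal{J}}_p}\lvert c_{pq,i}\rvert$ gives precisely \eqref{eq:cond_multi}, required to hold for all $i\in\mathcal{I}$ with a \emph{common} $\kappa$; when it does, $\alpha_p=\kappa$ by uniqueness of the limit. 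For the membership $\kappa\in[-1,1]$: in the ``only if'' direction, Theorem~\ref{thm:convergence} keeps $y_p^i(t)\in[-1,1]$ so the consensus limit $\alpha_p$ lies in $[-1,1]$; in the ``if'' direction this is assumed. I do not expect a genuine obstacle; the only points requiring care are the logical bookkeeping of the equivalence --- in particular that the right-hand side of \eqref{eq:cond_multi} must equal one and the same $\kappa$ for every individual $i$, which is exactly the obstruction to consensus generated by heterogeneous (and competing) logical interdependencies --- and the routine verification that $\mat{M}$ is a strict $\infty$-norm contraction, which is what makes the driven recursion have a unique, globally attractive equilibrium. Everything else is substitution.
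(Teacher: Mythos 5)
Your proposal is correct and follows essentially the same route as the paper's proof: both isolate the $p$-th block row, use $\Vert\diag(c_{pp})\mat W\Vert_\infty=\max_i c_{pp,i}<1$ to invert $\mat I_n-\mat A_{pp}$, obtain the fixed-point equation $(\mat I_n-\mat A_{pp})\vect y_p^*=\sum_{q\in\hat{\mathcal J}_p}\mat A_{pq}\vect y_q^*$, and reduce the consensus question to the componentwise identity $\kappa(1-c_{pp,i})=\sum_{q\in\hat{\mathcal J}_p}\alpha_q c_{pq,i}$ via $1-c_{pp,i}=\sum_{q\in\hat{\mathcal J}_p}\vert c_{pq,i}\vert$. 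The only cosmetic difference is that you argue necessity by uniqueness of the solution to the nonsingular linear system, whereas the paper phrases it as a proof by contradiction; these are logically identical.
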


The key necessary and sufficient condition involves \eqref{eq:cond_multi}, which is somewhat complex and nonintuitive. We now provide a corollary which studies the condition in \eqref{eq:cond_multi} for some situations which are important or of interest in the social context. Discussion and interpretation of these formal results are provided in the following Section~\ref{ssec:discuss}. 

\begin{corollary}\label{cor:disagree_node}
Adopting the hypotheses in Theorem~\ref{thm:disagree_md_degroot_node}, the following hold:
\begin{enumerate}
\item Suppose that $\hat{\mathcal{J}}_{p} = \{q\}$ is a singleton. Then, there exists a $\kappa \in [-1,1]$ satisfying \eqref{eq:cond_multi} if and only if there do not exist individuals $i,j \in \mathcal{I}$ with competing logical interdependencies on topic $p$ (as defined in Definition~\ref{def:compete_logic}).
\item If $\alpha_q = 0$ for all $q \in \hat{\mathcal{J}}_p$, then $\kappa = 0$ satisfies \eqref{eq:cond_multi}.
\item Suppose that $\hat{\mathcal{J}}_p = \{q_1, \hdots, q_r\}$, $r\geq 2$. If $c_{pq_k,i} = c_{pq_k,j} = c_{pq_k}$ for all $k \in \{1 , \hdots, r\}$ and $i,j\in\mathcal{I}$, then there exists a $\kappa \in [-1,1]$ satisfying \eqref{eq:cond_multi}.
\item Suppose that $\hat{\mathcal{J}}_p = \{q_1, \hdots, q_r\}$, $r\geq 2$. Suppose further that $\vert \alpha_{q_u}\vert = \vert \alpha_{q_v}\vert$ for all $u,v \in \{1, \hdots, r\}$. Then, there exists a $\kappa \in [-1,1]$ satisfying \eqref{eq:cond_multi} if either (i) the sign of $c_{pq_k,i}$ and $\alpha_{q_k}$ are equal for all $i \in \mathcal{I}$ and $k \in \{1, \hdots, r\}$ or (ii) the sign of $c_{pq_k,i}$ and $\alpha_{q_k}$ are opposite for all $i \in \mathcal{I}$ and $k \in \{1, \hdots, r\}$. In the case of (i), $\kappa = \vert \alpha_{q_k}\vert$, and in the case of (ii), $\kappa = -\vert \alpha_{q_k}\vert$.
\end{enumerate}
\end{corollary}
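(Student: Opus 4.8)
The plan is to obtain all four parts as direct consequences of the necessary-and-sufficient condition \eqref{eq:cond_multi} in Theorem~\ref{thm:disagree_md_degroot_node}, after one simplifying observation about its right-hand side. First I would record the structural facts used throughout: by Assumption~\ref{assm:C_same_pattern} the dependency set $\hat{\mathcal J}_p$ is common to all individuals, and for each $q\in\hat{\mathcal J}_p$ the sign $\sgn(c_{pq,i})$ is well defined since $c_{pq,i}\neq 0$; by Assumption~\ref{assm:C} we have $\sum_{q=1}^m |c_{pq,i}|=1$ and $0<c_{pp,i}$, and since $\hat{\mathcal J}_p$ is nonempty by hypothesis this forces $c_{pp,i}<1$, so the denominator $d_i\triangleq\sum_{q\in\hat{\mathcal J}_p}|c_{pq,i}|=1-c_{pp,i}$ is strictly positive and \eqref{eq:cond_multi} is well posed. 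The key step is then to rewrite the right-hand side of \eqref{eq:cond_multi} as
\[
\kappa_i \;\triangleq\; \sum_{q\in\hat{\mathcal J}_p}\frac{|c_{pq,i}|}{d_i}\,\bigl(\alpha_q\,\sgn(c_{pq,i})\bigr),
\]
a convex combination of the signed consensus values $\alpha_q\,\sgn(c_{pq,i})$, each of modulus $|\alpha_q|\le 1$. Hence $\kappa_i\in[-1,1]$ automatically, and \eqref{eq:cond_multi} holds (with $\kappa=\kappa_i$) if and only if $\kappa_i$ is the same for every $i\in\mathcal I$. Every assertion below is a check of this $i$-invariance.

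Parts~2 and~3 are then immediate. If $\alpha_q=0$ for all $q\in\hat{\mathcal J}_p$, then $\kappa_i=0$ for all $i$, so $\kappa=0$ works; this proves Part~2. If $c_{pq_k,i}=c_{pq_k}$ is independent of $i$ for each $k$, then neither numerator nor denominator of $\kappa_i$ depends on $i$, so $\kappa_i$ is constant and lies in $[-1,1]$ by the convex-combination remark; this proves Part~3. For Part~1, with $\hat{\mathcal J}_p=\{q\}$ the formula collapses to $\kappa_i=\alpha_q\,\sgn(c_{pq,i})$. Assuming $\alpha_q\neq 0$ (the case $\alpha_q=0$ being subsumed by Part~2, where $\kappa=0$ works regardless), this is $i$-invariant precisely when $\sgn(c_{pq,i})$ is the same for all $i$; by Definition~\ref{def:compete_logic}, constancy of the sign of this single relevant off-diagonal entry across individuals is exactly the statement that no two individuals have competing logical interdependencies on topic~$p$, which gives the stated ``if and only if''.

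For Part~4, write $a\triangleq|\alpha_{q_k}|$ for the common modulus of the $\alpha_{q_k}$, $k=1,\dots,r$; if $a=0$ then all $\alpha_{q_k}=0$ and we are in the situation of Part~2, so assume $a>0$, whence $\sgn(\alpha_{q_k})$ is well defined. In case~(i) we have $\sgn(c_{pq_k,i})=\sgn(\alpha_{q_k})$ for all $i$ and $k$, so $\alpha_{q_k}\,c_{pq_k,i}=\sgn(\alpha_{q_k})^2\,a\,|c_{pq_k,i}|=a\,|c_{pq_k,i}|$, and summing over $k$ gives $\kappa_i = a\,d_i/d_i = a$ for every $i$; since $a\in(0,1]$ this is an admissible $\kappa$ equal to $|\alpha_{q_k}|$. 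Case~(ii) is identical except each product acquires a minus sign, yielding $\kappa_i=-a$ for every $i$, again admissible and equal to $-|\alpha_{q_k}|$.

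I do not anticipate any genuine difficulty: once the right-hand side of \eqref{eq:cond_multi} is recognised as a convex combination of the terms $\alpha_q\,\sgn(c_{pq,i})$, each part reduces to a one-line computation of $\kappa_i$ and a verification that it does not depend on $i$. The only points that need a little care are: verifying that the denominator $d_i$ is nonzero (which uses $c_{pp,i}>0$ and $\hat{\mathcal J}_p\neq\emptyset$); handling the degenerate subcases $\alpha_q=0$, which make $\kappa=0$ trivially admissible and link Parts~1 and~4 back to Part~2; and, in Part~1, correctly quoting Definition~\ref{def:compete_logic} to translate the $i$-invariance of $\sgn(c_{pq,i})$ into the absence of competing logical interdependencies.
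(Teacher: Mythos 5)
Your proposal is correct and follows essentially the same route as the paper: both rest on the observation that the right-hand side of \eqref{eq:cond_multi} is a convex combination of the terms $\alpha_q\,\sgn(c_{pq,i})$ (hence automatically in $[-1,1]$), reducing every part to checking that this quantity is independent of $i$; your scalar computation for Part~4 is the same calculation the paper writes with the vectors $\vect z_k$ and sign matrices $\Xi_k$. Your explicit handling of the degenerate case $\alpha_q=0$ in Part~1 is in fact slightly more careful than the paper's necessity argument, which implicitly assumes $\alpha_q\neq 0$.
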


When $\mathcal{J}_j$ is not a singleton, the analysis becomes significantly more involved. To that end, we first introduce some additional notation. Define
\begin{equation}\label{eq:tilde_J}
    \tilde{\mathcal{J}}_j \triangleq \cup_{k\in \mathcal{J}_j} \hat{\mathcal{J}}_k \setminus \mathcal{J}_j
\end{equation}
as the \textit{set of topics not in $\mathcal{J}_j$} that the topics in $\mathcal{J}_j$ depend upon. For example, in Fig.~\ref{fig:MultiDim_TopicGraph}, $\mathcal{J}_3 = \{5,6\}$, $\hat{\mathcal J}_5 = \{3, 6\}$, $\hat{\mathcal J}_6 = \{4, 5\}$, and $\tilde{\mathcal{J}}_3 = \{3, 4\}$. 
Note that if $\mathcal{J}_j =\{p\}$ is a singleton, we have $\tilde{\mathcal{J}}_j = \hat{\mathcal{J}}_p$. Perhaps unsurprisingly, Theorem~\ref{thm:disagree_md_degroot_node} requires that consensus must first occur for topics in $\tilde{\mathcal{J}}_j = \hat{\mathcal{J}}_p$, on which the topics in $\mathcal{J}_j$ depend. The following theorem also has the requirement that consensus occur for all topics in $\tilde{\mathcal{J}}_j$.

\begin{theorem}\label{thm:disagree_md_degroot_component}

Let the hypotheses in Theorem~\ref{thm:convergence} hold. Assume that (i) $y_k^i(0) \in[-1,1]$ for all $i\in \mathcal{I}$ and $k\in \mathcal{J}$, and that (ii) $\mat C_i, \forall\,i\in \mathcal{I}$ is decomposed as in \eqref{eq:reducible_C}. Suppose that $\mathcal{J}_j$, as defined in \eqref{eq:topic_subset}, has at least two elements. 
Let $\tilde{\mathcal{J}}_{j}$, as defined in \eqref{eq:tilde_J}, be nonempty and suppose further that all topics $q \in \tilde{\mathcal{J}}_{j}$ satisfy $\vect{y}_{q}^* = \alpha_{q} \vect{1}_n,\,\alpha_{q} \in [-1,1]$. 
Then, $\lim_{t\to\infty} \vect y_{k} = \alpha_{k} \vect 1_n$ for all $k \in \mathcal{J}_j$ if and only if, for every $k \in \mathcal{J}_j$, there exists a $\phi_{k} \in [-1, 1]$ such that
\begin{align}\label{eq:cond_multi_comp}
    \phi_{k} &  \left[ \sum_{r \in \mathcal{J}_j \setminus \{k\}} \vert c_{kr, i}\vert + \sum_{q \in \tilde{\mathcal{J}}_{j}} \vert c_{kq, i}\vert \right] \nonumber \\
    & \quad \quad = \sum_{r \in \mathcal{J}_j \setminus \{k\}} \phi_{r} c_{kr, i}  + \sum_{q \in \tilde{\mathcal{J}}_{j}} \alpha_q c_{kq, i}, \, \forall  i\in \mathcal{I}
\end{align}
If such a set of $\phi_{k}$ exist, then $\alpha_{k} = \phi_k$ for all $k\in \mathcal{J}_j$.
\end{theorem}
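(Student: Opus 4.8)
The plan is to reduce Theorem~\ref{thm:disagree_md_degroot_component} to an analysis of the subsystem governing the opinions on topics in $\mathcal{J}_j$, treating the opinions on the ``lower'' topics in $\tilde{\mathcal{J}}_j$ as an exogenous input that, by hypothesis, has already converged to a consensus value $\alpha_q\vect 1_n$. Concretely, collect the opinion vectors $\vect y_k$ for $k\in\mathcal{J}_j$ into a single stacked vector $\vect z(t)$. Because $\mat C_i$ is in the lower block triangular form \eqref{eq:reducible_C} and the topics are ordered so that $\mathcal{J}_1,\dots,\mathcal{J}_s$ appear in order, the topics in $\mathcal{J}_j$ depend only on themselves and on topics with smaller index, i.e. on $\tilde{\mathcal{J}}_j$; hence from \eqref{eq:y_update} one obtains a recursion of the form $\vect z(t+1) = \mat M \vect z(t) + \vect u(t)$, where $\mat M$ is built from the blocks $\diag(c_{kr})\mat W$ for $k,r\in\mathcal{J}_j$ and $\vect u(t)$ collects the contributions $\sum_{q\in\tilde{\mathcal{J}}_j}\diag(c_{kq})\mat W\vect y_q(t)$. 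By Theorem~\ref{thm:convergence}, $\vect z(t)$ converges; since $\vect y_q(t)\to\alpha_q\vect 1_n$ and $\mat W\vect 1_n=\vect 1_n$, the input converges to the constant $\vect u^*$ with blocks $\sum_{q\in\tilde{\mathcal{J}}_j}\alpha_q\diag(c_{kq})\vect 1_n$, whose $i$-th component is $\sum_{q\in\tilde{\mathcal{J}}_j}\alpha_q c_{kq,i}$.

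Next I would characterize when the limit $\vect z^*$ is of the ``blockwise consensus'' form, i.e. $\vect y_k^*=\phi_k\vect 1_n$ for each $k\in\mathcal{J}_j$. The limit satisfies the fixed-point equation $\vect z^* = \mat M\vect z^* + \vect u^*$. For the \emph{sufficiency} direction: substituting $\vect y_k^*=\phi_k\vect 1_n$ and using $\mat W\vect 1_n=\vect 1_n$, the $k$-th block of the fixed-point equation becomes, componentwise in $i$,
\[
\phi_k = \sum_{r\in\mathcal{J}_j}\phi_r c_{kr,i} + \sum_{q\in\tilde{\mathcal{J}}_j}\alpha_q c_{kq,i},
\]
and separating the $r=k$ term and using the row-sum constraint $\sum_{q=1}^m|c_{kq,i}|=1$ from Assumption~\ref{assm:C} (so that $c_{kk,i}=1-\sum_{r\in\mathcal{J}_j\setminus\{k\}}|c_{kr,i}|-\sum_{q\in\tilde{\mathcal{J}}_j}|c_{kq,i}|$) converts this identity precisely into \eqref{eq:cond_multi_comp}. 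Thus if the $\phi_k\in[-1,1]$ satisfying \eqref{eq:cond_multi_comp} exist, then $\vect z^*$ with these blockwise-constant values solves the fixed-point equation; because the convergent limit of \eqref{eq:y_network_system} starting from a given initial condition is the unique such fixed point reachable (more precisely, $\vect z^* $ is determined and equals this candidate — one argues uniqueness of the limit via the contraction/convergence already granted by Theorem~\ref{thm:convergence} together with the fact that $(\mat I-\mat M)$ restricted to the relevant subspace is invertible, or simply that any solution consistent with convergence must be this one), the opinions reach the stated consensus with $\alpha_k=\phi_k$.

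For the \emph{necessity} direction, suppose $\vect y_k^*=\alpha_k\vect 1_n$ for all $k\in\mathcal{J}_j$. Then $\vect z^*$ is blockwise constant and solves the fixed-point equation; reversing the computation above shows the $\alpha_k$ must satisfy \eqref{eq:cond_multi_comp}, so a valid set of $\phi_k$ (namely $\phi_k=\alpha_k$) exists, and these lie in $[-1,1]$ by the invariance of $[-1,1]^{nm}$ from Theorem~\ref{thm:convergence}. The main obstacle I anticipate is not the algebraic bookkeeping but the argument that the blockwise-consensus candidate is actually the limit and not merely \emph{a} fixed point — i.e. handling potential non-uniqueness of fixed points (the matrix $\mat M$ may have eigenvalue $1$, since $\mathcal{G}[\mat C_i]$ has the strongly connected component $\mathcal{J}_j$ with its own internal consensus behavior). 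I would resolve this by working in the transformed Altafini picture: restricting attention to the strongly connected component of $\mathcal{G}[\mat A]$ corresponding to $\mathcal{J}_j$ and its in-neighbors, using that the block-diagonal blocks $\mat A_{kk}=\diag(c_{kk})\mat W$ are nonnegative and primitive (Assumption~\ref{assm:C} plus Assumption~\ref{assm:W} and Lemma~\ref{lem:primitive}), and invoking the structural-balance / sign-pattern dichotomy exactly as in the proof of Theorem~\ref{thm:consensus_irreducible_C} to pin down the limit on $\mathcal{J}_j$ once the input $\vect u^*$ is fixed; the solvability of \eqref{eq:cond_multi_comp} is then exactly the condition that this limit is a consensus rather than a genuinely disagreeing configuration.
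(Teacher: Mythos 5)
Your overall strategy coincides with the paper's: freeze the already-converged topics in $\tilde{\mathcal{J}}_j$ as a constant exogenous input, write the limit of the $\mathcal{J}_j$-subsystem as the solution of the fixed-point equation $\vect z^* = \mat{\bar A}_{jj}\vect z^* + \vect u^*$, and check that a blockwise-constant $\vect z^*$ solves it exactly when \eqref{eq:cond_multi_comp} holds, via the row-sum identity $\sum_{r\in\mathcal{J}_j\setminus\{k\}}\vert c_{kr,i}\vert + \sum_{q\in\tilde{\mathcal{J}}_j}\vert c_{kq,i}\vert = 1 - c_{kk,i}$. Your necessity direction (take $\phi_k=\alpha_k$, with $\alpha_k\in[-1,1]$ by invariance) is also the paper's argument, phrased directly rather than by contradiction.

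The one point where you go astray is precisely the ``main obstacle'' you flag at the end. The matrix $\mat M=\mat{\bar A}_{jj}$ cannot have an eigenvalue at $1$ here: because $\tilde{\mathcal{J}}_j$ is nonempty, at least one row of $\vert\mat{\bar A}_{jj}\vert$ has sum $\sum_{r\in\mathcal{J}_j}\vert c_{kr,i}\vert = 1-\sum_{q\in\tilde{\mathcal{J}}_j}\vert c_{kq,i}\vert < 1$, and $\vert\mat{\bar A}_{jj}\vert$ is irreducible (apply Lemma~\ref{lem:A_irreducible} to the irreducible diagonal block $\mat C_{jj,i}$ and $\mat W$), so Lemma~\ref{lem:spectrum_irre} gives $\rho(\mat{\bar A}_{jj})\le\rho(\vert\mat{\bar A}_{jj}\vert)<1$. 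This is exactly what the induction step in the proof of Theorem~\ref{thm:convergence} establishes for open components, and the paper simply invokes it to get $(\mat I-\mat{\bar A}_{jj})^{-1}$ and hence a \emph{unique} fixed point determined by $\vect u^*$, independent of initial conditions. Your proposed fallback --- rerunning the structural-balance/sign-pattern dichotomy of Theorem~\ref{thm:consensus_irreducible_C} on the component --- is the wrong tool: that dichotomy is needed for \emph{closed} strongly connected components, where the relevant block genuinely has spectral radius $1$ and the limit depends on $\vect y(0)$; for an open component no such case analysis arises. With that one substitution (the spectral-radius bound in place of the structural-balance detour), your argument is the paper's proof.
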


Similar to above, we now present a corollary which gives sufficient conditions for \eqref{eq:cond_multi_comp} in two scenarios.

\begin{corollary}\label{cor:disagree_comp}
Adopting the hypotheses in Theorem~\ref{thm:disagree_md_degroot_component}, the following hold:
\begin{enumerate}
\item If $\alpha_q = 0$ for all $q \in \tilde{\mathcal{J}}_j$, then $\phi_{k} = 0\, \forall k \in \mathcal{J}_j$ satisfies \eqref{eq:cond_multi_comp}.
\item If $c_{kp,i} = c_{kp,h}$ for all $k \in \mathcal{J}_j$, $p \in \mathcal{J}$ and $i, h\in\mathcal{I}$, then there exist $\phi_{k} \in [-1, 1]$ satisfying \eqref{eq:cond_multi_comp} $\forall k \in \mathcal{J}_j$.
\end{enumerate}
\end{corollary}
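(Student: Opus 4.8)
The plan is as follows. Part~1 is immediate: if $\alpha_q=0$ for every $q\in\tilde{\mathcal{J}}_j$, then taking $\phi_k=0$ for all $k\in\mathcal{J}_j$ makes both sides of \eqref{eq:cond_multi_comp} vanish identically in $i$, and $0\in[-1,1]$. So all the work is in Part~2, and the idea there is to use the homogeneity hypothesis $c_{kp,i}=c_{kp,h}=:c_{kp}$ (valid for every $k\in\mathcal{J}_j$, $p\in\mathcal{J}$) to collapse \eqref{eq:cond_multi_comp}---a priori one equation for each $i\in\mathcal{I}$---into a single linear system over $\mathbb{R}^{s_j}$, then to exhibit its unique solution as a Neumann series and bound the magnitudes of its entries by $1$.

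First I would record the structural facts behind the reduction. For $k\in\mathcal{J}_j$ the nonzero entries $c_{kq}$ occur only for $q\in\{k\}\cup(\mathcal{J}_j\setminus\{k\})\cup\tilde{\mathcal{J}}_j$: the lower block triangular form \eqref{eq:reducible_C} rules out dependence on later blocks, while by \eqref{eq:tilde_J} any dependence on an earlier block is recorded in $\tilde{\mathcal{J}}_j$. Hence, using $\sum_{q\in\mathcal{J}}\vert c_{kq}\vert=1$ and $c_{kk}>0$ from Assumption~\ref{assm:C}, the bracket multiplying $\phi_k$ on the left of \eqref{eq:cond_multi_comp} equals $1-c_{kk}$, and the whole system rearranges to $(\mat{I}-\mat{C}_{jj})\vect{\phi}=\vect{b}$, where $\mat{C}_{jj}$ is the common (now $i$-independent) irreducible diagonal block of \eqref{eq:reducible_C}, $\vect{\phi}=(\phi_k)_{k\in\mathcal{J}_j}$, and $b_k=\sum_{q\in\tilde{\mathcal{J}}_j}\alpha_q c_{kq}$.

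Next I would argue that $\mat{I}-\mat{C}_{jj}$ is nonsingular. Since $\tilde{\mathcal{J}}_j$ is nonempty there are $k_0\in\mathcal{J}_j$ and $q_0\in\tilde{\mathcal{J}}_j$ with $c_{k_0q_0}\neq 0$, so the entrywise-modulus matrix $\vert\mat{C}_{jj}\vert$ has all row sums at most $1$ and a row sum strictly below $1$ in row $k_0$; being also irreducible (it has the same sparsity pattern as $\mat{C}_{jj}$), $\vert\mat{C}_{jj}\vert$ is row-substochastic, and a standard Perron--Frobenius argument then gives $\rho(\mat{C}_{jj})\le\rho(\vert\mat{C}_{jj}\vert)<1$. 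Consequently $(\mat{I}-\mat{C}_{jj})^{-1}=\sum_{t\ge 0}\mat{C}_{jj}^{t}$, and $\vect{\phi}=\sum_{t\ge 0}\mat{C}_{jj}^{t}\vect{b}$ is the unique solution.

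Finally I would bound $\Vert\vect{\phi}\Vert_\infty$ by a comparison argument. Writing $\vect{e}$ for the vector with entries $e_k=\sum_{q\in\tilde{\mathcal{J}}_j}\vert c_{kq}\vert$, the same row-sum identity reads $\vect{e}=(\mat{I}-\vert\mat{C}_{jj}\vert)\vect{1}_{s_j}$, and $\vert\alpha_q\vert\le1$ gives $\vert\vect{b}\vert\le\vect{e}$ entrywise; chaining the triangle inequality through the Neumann series,
\begin{equation*}
\vert\vect{\phi}\vert\le\sum_{t\ge0}\vert\mat{C}_{jj}\vert^{t}\vert\vect{b}\vert\le\sum_{t\ge0}\vert\mat{C}_{jj}\vert^{t}\vect{e}=(\mat{I}-\vert\mat{C}_{jj}\vert)^{-1}(\mat{I}-\vert\mat{C}_{jj}\vert)\vect{1}_{s_j}=\vect{1}_{s_j},
\end{equation*}
so $\phi_k\in[-1,1]$ for every $k\in\mathcal{J}_j$, which is what is claimed (and $\alpha_k=\phi_k$ then follows from Theorem~\ref{thm:disagree_md_degroot_component}). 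The one step I expect to need genuine care is the nonsingularity of $\mat{I}-\mat{C}_{jj}$: one must exploit \emph{openness} of the component---equivalently $\tilde{\mathcal{J}}_j\neq\emptyset$---together with irreducibility to force $\rho(\vert\mat{C}_{jj}\vert)<1$ (for a closed component $\mat{C}_{jj}$ is row-stochastic and $\mat{I}-\mat{C}_{jj}$ is singular, which is exactly the consensus situation of Theorem~\ref{thm:consensus_irreducible_C}). Everything else is bookkeeping with \eqref{eq:reducible_C}, \eqref{eq:tilde_J} and the row-sum normalisation in Assumption~\ref{assm:C}.
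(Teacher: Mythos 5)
Your proof is correct, but for Item 2 it takes a somewhat different route from the paper's. Both arguments hinge on the same spectral fact: under the homogeneity hypothesis the relevant diagonal block collapses to a single $s_j\times s_j$ matrix $\bar{\mat C}$ (your $\mat C_{jj}$) whose entrywise modulus is irreducible and row-substochastic because $\tilde{\mathcal{J}}_j\neq\emptyset$, whence $\rho(\bar{\mat C})\leq\rho(\vert\bar{\mat C}\vert)<1$ (Lemma~\ref{lem:spectrum_irre}) and the Neumann series converges. The difference is where the argument lives: the paper works in $\mathbb{R}^{ns_j}$ with the steady-state equation \eqref{eq:comp_y_equib}, uses the Kronecker factorisation $\mat{\bar A}_{jj}=\bar{\mat C}\otimes\mat W$ together with $\mat W^t\vect 1_n=\vect 1_n$ to show the limit equals $\bigl(\sum_{t\geq 0}\bar{\mat C}^t\tilde{\vect\alpha}\bigr)\otimes\vect 1_n$ — i.e.\ it verifies the \emph{conclusion} of Theorem~\ref{thm:disagree_md_degroot_component} (consensus) directly, and obtains $\alpha_k\in[-1,1]$ from the invariance property of Theorem~\ref{thm:convergence}. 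You instead verify the \emph{hypothesis}: you reduce \eqref{eq:cond_multi_comp} to the $s_j$-dimensional system $(\mat I-\bar{\mat C})\vect\phi=\vect b$ (exactly the paper's \eqref{eq:phi_sum}, using the row-sum identity \eqref{eq:sum_c}), exhibit the solution $\vect\phi=\sum_{t\geq 0}\bar{\mat C}^t\vect b$, and prove $\vert\vect\phi\vert\leq\vect 1_{s_j}$ by the entrywise majorisation $\vert\vect b\vert\leq(\mat I-\vert\bar{\mat C}\vert)\vect 1_{s_j}$. Your version is more self-contained relative to the literal statement (it produces the $\phi_k$ explicitly and bounds them without appealing to the dynamics), while the paper's reuses the machinery already assembled for the sufficiency proof of Theorem~\ref{thm:disagree_md_degroot_component}; your observation that closedness of the component is exactly what makes $\mat I-\bar{\mat C}$ singular is a nice sanity check the paper does not spell out.
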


For the illustrative example in Fig.~\ref{fig:MultiDim_TopicGraph}, one would first analyse the blue and purple components using Theorem~\ref{thm:consensus_irreducible_C}. Then, one would analyse the green component using Theorem \ref{thm:disagree_md_degroot_component}, and last the orange component using  Theorem~\ref{thm:disagree_md_degroot_node}.

\subsection{Discussion and Social Interpretations}\label{ssec:discuss}

We conclude this section by providing some discussion and comments on the main results, focusing in particular on the theorems and corollaries in Section~\ref{ssec:disagree}. Overall, the outcomes we have established depend on the graphical structures $\mathcal{G}[\mat C_i]$ on the one hand, and on the numerical values (including their signs) of the $\mat C_i$ entries on the other. This dependence sometimes flows simply from the signs (the presence or absence of competing logical interdependencies), and sometimes from the precise values of the $\mat C_i$. Further, when consensus on a topic occurs, it is evident that sometimes a value 0 is always the outcome, and sometimes a nonzero value dependent on the initial opinions of those topics in the closed and strongly connected components of $\mathcal{G}[\mat C_i]$.  

It is clear from Theorem~\ref{thm:consensus_irreducible_C} that for any topic set $\mathcal{J}_j$ corresponding to a closed and strongly connected component of $\mathcal{G}[\mat C_i]$, every topic $k \in \mathcal{J}_j$ will reach a consensus. One interpretation is that a \textit{closed} and strongly connected component corresponds to $\mathcal{J}_j$ having a topic(s) that is an axiom (or axioms) upon which an individual builds his or her belief system (see below \eqref{eq:topic_subset}). Our results show that discussion of axiomatic topics will always lead a consensus under the model \eqref{eq:md_degroot_individual} (a consensus might not be reached if, as in \eqref{eq:md_fj_individual}, there is stubbornness present).


Theorem~\ref{thm:consensus_irreducible_C} and Corollaries \ref{cor:disagree_node} and \ref{cor:disagree_comp} also illustrate that \textit{competing logical interdependencies}, if present, can play a major role in determining the final opinion values. For instance, see Theorem~\ref{thm:consensus_irreducible_C} Part 3, where given a topic set $\mathcal{J}_j$ corresponding to a closed and strongly connected component of $\mathcal{G}[\mat C_i]$, all opinion values for all topics in $\mathcal{J}_j$ converge to the neutral value at 0 whenever competing interdependencies are present in the topics in $\mathcal{J}_j$. Also, the presence of any competing logical interdependencies in topic $p \in \mathcal{J}_j$ is enough to prevent the sufficient conditions detailed in Corollary \ref{cor:disagree_node} Item 1), 3), and 4) and Corollary \ref{cor:disagree_comp} Item 2) from being satisfied. Of particular note is Corollary~\ref{cor:disagree_node} Item 1). When $\hat{\mathcal{J}}_p = \{q\}$ is a singleton, heterogeneity in the entries of $c_{pq,i}$ is not enough to prevent a consensus of opinions on topic $p$; competing logical interdependences are required. This last finding is a surprising, and non-intuitive result.

The sufficient condition in Corollary~\ref{cor:disagree_node} Item 2) requires $\alpha_q = 0$ for all $q \in \hat{\mathcal{J}}_p$. This is not as restrictive as it first seems: one possible scenario is that all elements of $\hat{\mathcal{J}}_p$ belong to topics from the same closed and strongly connected component in $\mathcal{G}[\mat C_i]$, with the component being structurally unbalanced, or having competing logical interdependencies. The same can be said for Corollary~\ref{cor:disagree_comp} Item 1). Part of the sufficient condition for Corollary~\ref{cor:disagree_node} Item 4) is that $\vert \alpha_{q_u}\vert = \vert \alpha_{q_v}\vert$ for all $q_u, q_v \in \hat{\mathcal{J}}_p = \{q_1, \hdots, q_r\}$, $r\geq 2$. This will always hold if  $q_1, \hdots, q_r$ are topics that are part of the same closed and strongly connected component in $\mathcal{G}[\mat C_i]$. 

From numerous simulations, we frequently observed that minor heterogenieties in the entries of $c_{pq,i}, p \in \mathcal{J}_j$ among the individuals (e.g. if the $c_{pq,i}$ were all selected from a uniform distribution) were often sufficient to create disagreement among the opinions on topic $p \in \mathcal{J}_j$. We observed this in many different examples, except in the case of Corollary~\ref{cor:disagree_node} Item 1), where $\mathcal{J}_j = \{p\}$ and $\hat{\mathcal{J}}_p = \{q\}$ are both singletons, since existence of competing logical interdependencies was proven to be a necessary and sufficient condition for disagreement.

It is also clear from Theorems~\ref{thm:consensus_irreducible_C}, \ref{thm:disagree_md_degroot_node} and \ref{thm:disagree_md_degroot_component} that disagreement is possible only in topic sets $\mathcal{J}_j$ associated with an open strongly connected component of $\mathcal{G}[\mat C_i]$. Put another way, belief systems with a cascade logical structure, viz. reducible $\mat C_i$ in the form of \eqref{eq:reducible_C}, including heterogeneity among individuals' belief systems, play a significant role in generating disagreement when social networks discuss multiple logically interdependent topics. Looking at \eqref{eq:md_degroot_individual}, one can see two separate processes occurring: the DeGroot component describes interpersonal influence between individuals in an effort to reach a consensus, while the logic matrix by itself (as in \eqref{eq:individual_belief_system}) captures an intrapersonal  effort to secure logical consistency of opinions across several topics. These two drivers may or may not end up in conflict, and the presence of conflict or lack thereof determines whether opinions of a certain topic reach a consensus or fail to do so. Our results in Theorems~\ref{thm:disagree_md_degroot_node} and \ref{thm:disagree_md_degroot_component} identify when such  conflict can occur.

\begin{remark} \label{rem:conjecture}
Theorems~\ref{thm:disagree_md_degroot_node} and \ref{thm:disagree_md_degroot_component} establish necessary and sufficient conditions for topic  $p_k \in \mathcal{J}_j = \{p_1, \hdots, p_{s_j}\}$ to reach a consensus under a particular hypothesis. Specifically, it is assumed that for the set $\mathcal{J}_j$ under consideration, there holds
\begin{equation}\label{eq:conjecture}
    \vect y_{q}^* = \alpha_q \vect 1_n\, , \quad \forall \, q \in \tilde{\mathcal{J}}_{j}.
\end{equation}
That is, all other topics that one or more topics $p_k \in \mathcal{J}_j$ depend upon are assumed to have reached a consensus. Based on numerous simulations, we believe the requirement that \eqref{eq:conjecture} holds is \textbf{also a necessary condition} for $\vect y_{p_k}, p_k \in \mathcal{J}_j$ to reach a consensus. In other words, if any topic $q \in  \tilde{\mathcal{J}}_{j}$ fails to reach a consensus, we conjecture that all $\vect y_{p_k}, k = 1, \hdots, s_j$ will also fail to reach a consensus.  Confirming this would provide yet another indication that networks with belief systems having a cascade logic structure more readily result in disagreement. We leave this to future investigations.
\end{remark}

\section{Simulations}\label{sec:sim}
We now provide simulations to illustrate some of the results in Section~\ref{sec:main_result} using a network $\mathcal{G}[\mat W]$ of $n = 6$ individuals, with 
 \begin{equation}
\mat W =\begin{bmatrix}
0.2 & 0 & 0 & 0 & 0.8 & 0 \\
0.5 & 0.3 & 0 & 0 & 0 & 0.2 \\
0 &  0.3 & 0.1 & 0 &  0 & 0.6 \\
0 &  0 & 0.85 & 0.15 &  0 & 0 \\
   0 & 0 & 0 & 0.2 & 0.8 & 0 \\
   0 & 0 & 0 & 0 & 0.5 & 0.5 \\
\end{bmatrix}.
\end{equation} 
Note that $\mat W$ satisfies Assumption~\ref{assm:W}. Initial conditions are generated by selecting each $x_i^p(0)$ from a uniform distribution in $[-1,1]$, and the same initial conditon vector $\vect x(0)$ is used for all simulations. We consider 5 topics, i.e. $\mathcal{J} = \{1, \hdots, 5\}$.

In the first simulation, we use two logic matrices:
\begin{subequations}\label{eq:sim_C_01}
\begin{align}
\wh{\mat C} & = \begin{bmatrix} 1 & 0 & 0 & 0 & 0 \\
-0.5 & 0.5 & 0 & 0 & 0 \\
-0.3 & -0.6 & 0.1 & 0 & 0 \\
0 & -0.3 & 0 & 0.2 & -0.5 \\
0 & -0.5 & 0 & -0.2 & 0.3
\end{bmatrix}  \\
\bar{\mat C} & = \begin{bmatrix} 1 & 0 & 0 & 0 & 0\\
-0.8 & 0.2 & 0 & 0 & 0 \\
-0.3 & -0.1 & 0.6 & 0 & 0 \\
0 & -0.3 & 0 & 0.2 & -0.5 \\
0 & -0.5 & 0 & -0.2 & 0.3
\end{bmatrix}.
\end{align}
\end{subequations}
The individuals have logic matrix $\mat{C}_i = \wh{\mat C}$ for $i = 1,2,3$ and $\mat{C}_i = \bar{\mat{C}}$ for $i = 4,5,6$. Notice that there are no competing logical interdependencies associated with the set of $\mat C_i$. Moreover, according to \eqref{eq:topic_subset}, we have $\mathcal{J}_1 = \{1\}, \mathcal{J}_2 = \{2\}, \mathcal{J}_3 = \{3\}, \mathcal{J}_4 = \{4,5\}$. The temporal evolution of $\vect x(t)$ is given in Fig.~\ref{fig:TAC_sim1}, where solid or dotted lines correspond to an individual with $\mat{C}_i = \wh{\mat{C}}$ or $\mat{C}_i = \bar{\mat C}$, respectively. We see that Topic 1 (Theorem~\ref{thm:consensus_irreducible_C}) and Topic 2 (Corollary~\ref{cor:disagree_node}, Statement $1)$) reach a consensus. In particular, notice that the entries of $\wh{\mat C}$ and $\bar{\mat C}$ are such that $ \wh{c}_{21} \neq \bar{c}_{21}$ but Topic 2 still reaches a consensus because there are no competing logical interdependencies in Topic 2. In contrast, Topic 3 fails to reach a consensus (Theorem~\ref{thm:disagree_md_degroot_node}) despite both Topics 1 and 2 reaching a consensus. Strong diversity emerges because the heterogeneities in the third row of $\wh{\mat C}$ and $\bar{\mat C}$ are such that there does not exist a $\kappa \in [-1, 1]$ that satisfies \eqref{eq:cond_multi}, even though the sign patterns in the third row are the same between $\wh{\mat C}$ and $\bar{\mat C}$. Topics 4 and 5, forming $\mathcal{J}_4$, both reach a consensus because the fourth and fifth rows of $\wh{\mat C}$ and $\bar{\mat C}$ are the same (Corollary~\ref{cor:disagree_comp}, Statement $2)$). 

Replacing $\wh{\mat C}$ with
\begin{equation}\label{eq:sim_C_02}
\widetilde{\mat C} = \begin{bmatrix} 1 & 0 & 0 & 0 & 0\\
0.5 & 0.5 & 0 & 0 & 0 \\
-0.3 & -0.1 & 0.6 & 0 & 0 \\
0 & -0.3 & 0 & 0.2 & -0.5 \\
0 & -0.5 & 0 & -0.2 & 0.3
\end{bmatrix}
\end{equation}
for individuals $1,2,3$, we run the same simulation (i.e. with the same $\mathcal{G}[\mat{W}]$ and initial conditions $\vect{x}(0)$). The results are displayed in Fig.~\ref{fig:TAC_sim2}. Notice that the only difference between $\wh{\mat C}$ and $\wt{\mat C}$ is a sign reversal in the $c_{21}$ entry. Now, there are competing logical interdependencies in Topic 2, which results in a failure to reach consensus on this topic (Corollary~\ref{cor:disagree_node}, Statement $1)$). Because of the cascade logic structure, Topics 3, 4 and 5 also fail to reach a consensus (illustrating our conjecture in Remark~\ref{rem:conjecture}). This is despite no other differences when comparing $\wh{\mat C}$ and $\wt{\mat C}$, and Topics 4 and 5 reaching a consensus in the previous simulation when individuals  $1,2,3$ used $\wh{\mat C}$. Moreover, a strong diversity of opinions emerge for Topics 2, 3, 4 and 5.  

\begin{figure}
\centering
\includegraphics[height=\linewidth,angle=-90]{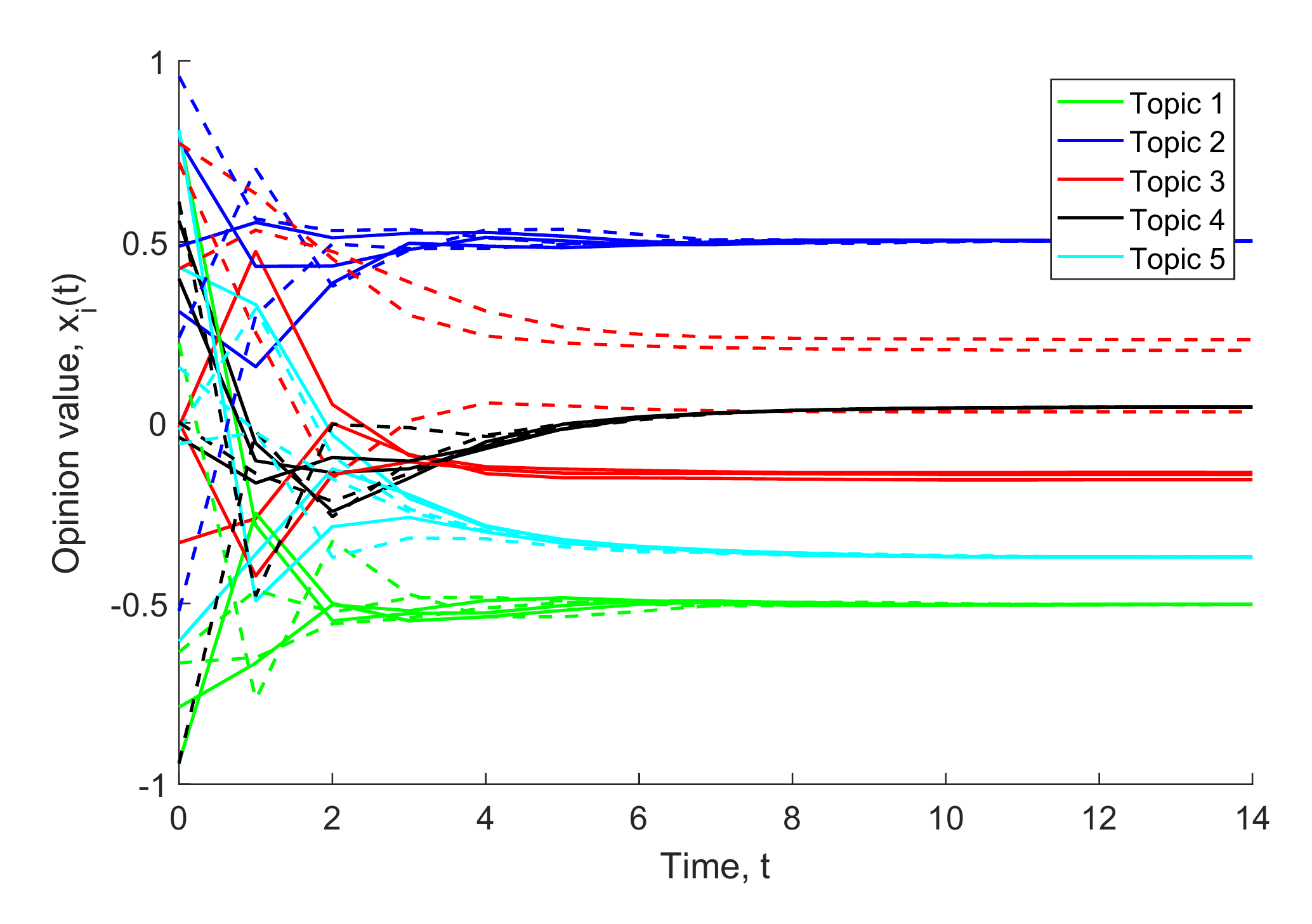}
\caption{Temporal evolution of opinions for $5$ topics coupled with $\mat{C}_i$ given in \eqref{eq:sim_C_01}. The solid and dotted lines correspond to individuals with $\mat{C}_i = \hat{\mat{C}}$ and $\mat{C}_i = \bar{\mat C}$, respectively. }
\label{fig:TAC_sim1}
\end{figure}

\begin{figure}
\centering
\includegraphics[height=\linewidth,angle=-90]{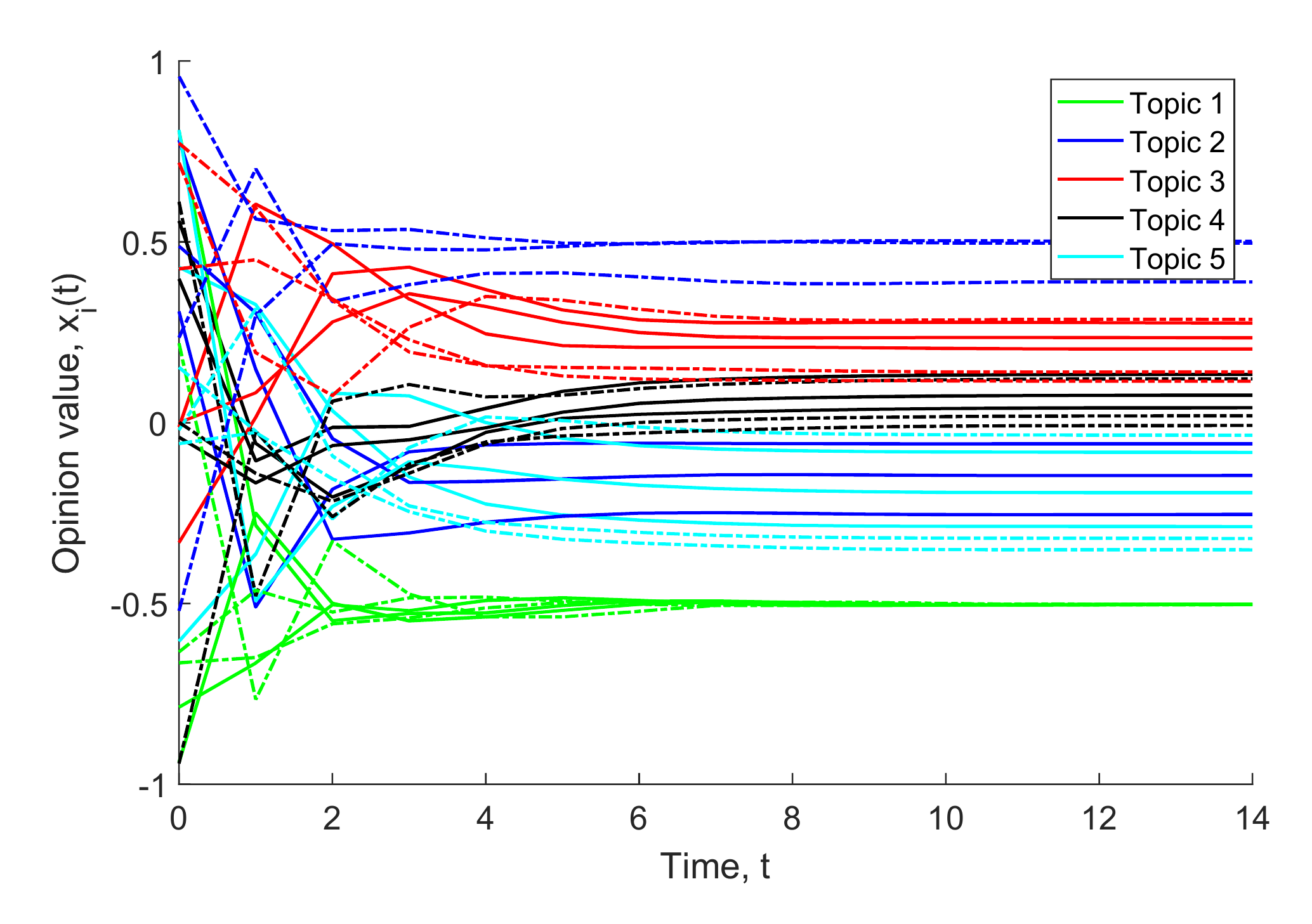}
\caption{Temporal evolution of opinions for $5$ topics coupled with $\wh{\mat{C}}$ replaced by $\widetilde{\mat C}$ given in \eqref{eq:sim_C_02}. The solid and dotted lines correspond to individuals with $\mat{C}_i = \wt{\mat{C}}$ and $\mat{C}_i = \bar{\mat C}$, respectively. }
\label{fig:TAC_sim2}
\end{figure}

\section{Conclusion}\label{sec:conclusion}
We have studied influence networks in which individuals discuss a set of logically interdependent topics, assuming that the network has no stubborn individuals in order to focus on the effects of the logical interdependence structure. We established that for strongly connected networks, and reasonable assumptions on the logic matrix, the opinions converge exponentially fast to some steady-state value. We then provided a systematic way to help determine whether a given topic will reach a consensus or fail to do so. It was discovered that heterogeneity of reducible logic matrices among individuals, including differences in signs of the off-diagonal entries, played a primary role in producing disagreement in the final opinion values. In the problem context, we have established that a cascade logic structure and heterogeneity of individuals' belief systems, including existence of competing logical interdependencies, generates the phenomenon of strong diversity of final opinions. We believe this to be a key new insight and explanation of strong diversity, since most existing works attribute strong diversity in connected networks to factors such as individual stubbornness. Future work will focus on proving the conjecture in Remark~\ref{rem:conjecture}, relaxing Assumption~\ref{assm:C_same_pattern}, and the effect of the logic matrix on the convergence rate.

\appendix

To begin, we detail a result to be used in the sequel.
\begin{lemma}\label{lem:spectrum_irre}
Let $\mat{A}\in \mathbb{R}^{n\times n}$ be a given irreducible row-substochastic matrix. Then, $\rho (\mat{A})<1$.
\end{lemma}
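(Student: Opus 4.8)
The plan is to show directly that the spectral radius of an irreducible row-substochastic matrix is strictly less than $1$, using the structure of irreducible nonnegative matrices together with Perron--Frobenius theory. First I would invoke the Perron--Frobenius theorem: since $\mat A \geq 0$ is irreducible, $\rho(\mat A)$ is an eigenvalue of $\mat A$, and there is a corresponding (right) eigenvector $\vect v > 0$ with all entries strictly positive, i.e. $\mat A \vect v = \rho(\mat A) \vect v$. The goal is then to bound $\rho(\mat A)$ by exploiting the row sums.

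The key step is the following. Let $k$ be an index at which $v_k$ is maximal, i.e. $v_k = \max_j v_j > 0$. Reading off the $k$th component of $\mat A \vect v = \rho(\mat A)\vect v$ gives
\begin{equation}\label{eq:spec_irre_bound}
\rho(\mat A) v_k = \sum_{j=1}^n a_{kj} v_j \leq \Big(\sum_{j=1}^n a_{kj}\Big) v_k \leq v_k,
\end{equation}
using $a_{kj} \geq 0$, $v_j \leq v_k$, and row-substochasticity. Dividing by $v_k > 0$ yields $\rho(\mat A) \leq 1$. To obtain the strict inequality, I would argue by contradiction: suppose $\rho(\mat A) = 1$. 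Then for \emph{every} index $k$ achieving the maximum, the chain of inequalities in \eqref{eq:spec_irre_bound} must be tight, which forces $\sum_j a_{kj} = 1$ and also $a_{kj}(v_k - v_j) = 0$ for all $j$. The latter means $v_j = v_k$ whenever $a_{kj} > 0$, i.e. whenever $e_{jk}$ is an edge of $\mathcal{G}[\mat A]$ pointing into $v_k$ — hmm, more carefully, whenever $j$ is an in-neighbour of $k$ in the convention where $a_{kj}\neq 0 \Leftrightarrow e_{jk}\in\mathcal E$. So every in-neighbour of a maximizing node also attains the maximum.

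The main obstacle — really the only subtlety — is propagating this to a contradiction using irreducibility. The cleanest route: let $\mathcal{M} = \{ k : v_k = \max_j v_j \}$. The argument above shows that if $k \in \mathcal{M}$ and $a_{kj} > 0$ then $j \in \mathcal{M}$; iterating along directed paths (in the graph $\mathcal{G}[\mat A]$, every node is reachable from every other since $\mat A$ is irreducible, hence strongly connected), one concludes $\mathcal{M} = \{1,\dots,n\}$, so $\vect v = v_k \vect 1_n$ up to scaling and every row sum equals $1$. This contradicts the hypothesis that $\mat A$ is strictly row-substochastic, namely that $\exists\, \ell : \sum_j a_{\ell j} < 1$. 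Hence $\rho(\mat A) < 1$, completing the proof. I would also note the minor point that Perron--Frobenius gives $\rho(\mat A)$ as a genuine eigenvalue so that the eigenvector equation is available; everything else is elementary manipulation of nonnegative sums.
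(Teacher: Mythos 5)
Your proof is correct. Note, though, that the paper does not actually prove this lemma from scratch: its entire ``proof'' is a one-line citation to a result in Varga's matrix iterative analysis book (Lemma 2.8 there), so there is no internal argument to compare against step by step. What you supply is a self-contained Perron--Frobenius argument: extract the positive eigenvector for $\rho(\mat A)$, bound $\rho(\mat A)\le 1$ by evaluating at a maximal component, and then, assuming $\rho(\mat A)=1$, propagate the equality case along the strongly connected graph $\mathcal{G}[\mat A]$ to force all row sums to equal $1$, contradicting the definition of row-substochasticity used in the paper (which requires at least one strictly deficient row). The equality-case bookkeeping is handled correctly --- tightness at a maximizing index $k$ gives both $\sum_j a_{kj}=1$ and $a_{kj}(v_k-v_j)=0$, and irreducibility lets you sweep the maximizing set over all indices --- and you are right to flag the edge-orientation convention ($a_{kj}\neq 0 \Leftrightarrow e_{jk}\in\mathcal{E}$), since the propagation runs along in-neighbours. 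The only thing your route ``costs'' is the invocation of the Perron--Frobenius theorem for irreducible nonnegative matrices (existence of a strictly positive eigenvector for $\rho(\mat A)$), which is exactly the kind of standard fact the paper chose to outsource wholesale; what it buys is a proof the reader can verify without consulting the reference.
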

\begin{proof}
This lemma is an immediate consequence of \cite[Lemma 2.8]{varga2009matrix_book}.
\end{proof}

In order to prove Theorem~\ref{thm:convergence}, we first establish the following lemma.

\begin{lemma}\label{lem:A_irreducible}
Suppose that Assumption \ref{assm:C_same_pattern} holds. Then, $\mathcal{G}[\mat{A}]$, where $\mat{A}$ is the matrix in \eqref{eq:y_network_system}, is strongly connected and aperiodic if and only if, separately, $\mathcal{G}[\mat{W}]$ and $\mathcal{G}[\mat{C}_i], \forall\,i$ are strongly connected and aperiodic.
\end{lemma}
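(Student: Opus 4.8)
The plan is to analyze the structure of the block matrix $\mat{A}$, whose $(p,q)$ block is $\diag(c_{pq})\mat{W}$, and relate the strong connectivity and aperiodicity of $\mathcal{G}[\mat{A}]$ to those of $\mathcal{G}[\mat{W}]$ and the $\mathcal{G}[\mat{C}_i]$. First I would observe that $\mathcal{G}[\mat{A}]$ lives on the node set $\{v_1,\dots,v_{nm}\}$ partitioned into layers $\mathcal{V}_1,\dots,\mathcal{V}_m$ (one per topic), where node $v_{(p-1)n+r}$ corresponds to individual $r$'s opinion on topic $p$. An edge from $v_{(q-1)n+r'}$ to $v_{(p-1)n+r}$ exists iff the $(r,r')$ entry of $\diag(c_{pq})\mat{W}$ is nonzero; since $\diag(c_{pq})$ is diagonal, this entry equals $c_{pq,r}\, w_{rr'}$. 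Hence the edge is present iff $c_{pq,r}\neq 0$ \emph{and} $w_{rr'}\neq 0$, i.e. iff topic $p$ depends on topic $q$ in individual $r$'s logic matrix and individual $r$ listens to individual $r'$. Because of Assumption~\ref{assm:C_same_pattern}, the condition $c_{pq,r}\neq 0$ is independent of $r$, so I will write it simply as ``$(p,q)$ is an edge of $\mathcal{G}[\mat{C}]$'' (the common sign-pattern graph). In effect $\mathcal{G}[\mat{A}]$ is a kind of ``tensor/Kronecker-like'' product of $\mathcal{G}[\mat{C}]$ (on topics) and $\mathcal{G}[\mat{W}]$ (on individuals): a path in $\mathcal{G}[\mat{A}]$ projects to a path in $\mathcal{G}[\mat{W}]$ on the individual coordinate and to a walk in $\mathcal{G}[\mat{C}]$ on the topic coordinate, with the two coordinates advancing in lockstep.

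For the ``if'' direction, assume $\mathcal{G}[\mat{W}]$ and each $\mathcal{G}[\mat{C}_i]$ are strongly connected and aperiodic. By Lemma~\ref{lem:primitive}, $\mat{W}$ is primitive and (since $\mat{C}_i\geq 0$ is not assumed, I instead use the sign-pattern graph) the unsigned adjacency matrix of $\mathcal{G}[\mat{C}]$ is primitive; let $k_W$ and $k_C$ be exponents with all walks of those lengths realizable. To show strong connectivity of $\mathcal{G}[\mat{A}]$, I need to connect an arbitrary node $(p,r)$ to an arbitrary node $(p',r')$: pick a walk of some common length $\ell$ from $p$ to $p'$ in $\mathcal{G}[\mat{C}]$ and a walk of the same length $\ell$ from $r$ to $r'$ in $\mathcal{G}[\mat{W}]$ — this is possible because both graphs are primitive, so walks of every sufficiently large length exist between any ordered pair; aperiodicity is exactly what lets me synchronize the two lengths. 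Composing these coordinate-wise gives a walk in $\mathcal{G}[\mat{A}]$, hence $\mathcal{G}[\mat{A}]$ is strongly connected. For aperiodicity of $\mathcal{G}[\mat{A}]$: the self-loop $w_{rr}>0$ (Assumption~\ref{assm:W}) together with $c_{pp,r}>0$ (Assumption~\ref{assm:C}) gives a self-loop at every node $(p,r)$ of $\mathcal{G}[\mat{A}]$, so $\mathcal{G}[\mat{A}]$ is automatically aperiodic. Invoking Lemma~\ref{lem:primitive} once more (applied to the unsigned version) finishes this direction.

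For the ``only if'' direction, I argue by contraposition using the projection maps. If $\mathcal{G}[\mat{W}]$ is not strongly connected, there are individuals $r,r'$ with no path from $r'$ to $r$ in $\mathcal{G}[\mat{W}]$; since every path in $\mathcal{G}[\mat{A}]$ projects to a path in $\mathcal{G}[\mat{W}]$ on the individual coordinate, no node in layer-$r$ copies is reachable from node $(p',r')$ for any $p,p'$, so $\mathcal{G}[\mat{A}]$ is not strongly connected. Similarly, if some $\mathcal{G}[\mat{C}_i]$ (equivalently, by Assumption~\ref{assm:C_same_pattern}, $\mathcal{G}[\mat{C}]$) is not strongly connected, projecting onto the topic coordinate breaks strong connectivity of $\mathcal{G}[\mat{A}]$. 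Aperiodicity of $\mathcal{G}[\mat{W}]$ and $\mathcal{G}[\mat{C}_i]$ is in fact guaranteed outright by the positive self-loops in Assumptions~\ref{assm:W} and \ref{assm:C}, so the aperiodicity clauses are not a real constraint — I would remark this rather than belabor it. The main obstacle, and the step needing the most care, is the length-synchronization argument in the ``if'' direction: I must be precise that primitivity of each factor yields a uniform threshold $N$ such that walks of \emph{every} length $\geq N$ exist between every ordered pair of nodes, so that a common length $\ell \geq \max$ of the two thresholds can always be chosen; this is where aperiodicity (not just strong connectivity) of the two factor graphs is essential, and it is the only genuinely nontrivial point. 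The bookkeeping of which entry of $\diag(c_{pq})\mat{W}$ is nonzero is routine and I would state it briefly.
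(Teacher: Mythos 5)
Your proposal is correct and follows essentially the same route as the paper: both identify $\mathcal{G}[\mat{A}]$ (via Assumption~\ref{assm:C_same_pattern}) with the graph of a Kronecker product of $\mat{W}$ with a common nonnegative pattern matrix for the $\mat{C}_i$, and your walk-length synchronization using primitivity of both factors is exactly the combinatorial content of the paper's observation that $(\bar{\mat C}\otimes\mat W)^k=\bar{\mat C}^k\otimes\mat W^k>0$ if and only if both $\bar{\mat C}^k>0$ and $\mat W^k>0$ for a common $k$. Your only (harmless) deviations are deriving aperiodicity of $\mathcal{G}[\mat{A}]$ from the self-loops guaranteed by Assumptions~\ref{assm:C} and~\ref{assm:W} rather than from primitivity, and the explicit projection argument for the converse.
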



\begin{proof}
Let $\bar{\mat C}$ be a nonnegative row-stochastic matrix with the same zero and non-zero pattern of entries as $\mat{C}_i, \forall\,i\in\mathcal{I}$, i.e. $\bar{\mat C} \sim \mat{C}_i, \forall\,i\in\mathcal{I}$. Then, by the lemma hypothesis on Assumption \ref{assm:C_same_pattern}, the graph $\mathcal{G}[\bar{\mat C}\otimes \mat{W}]$ has the same vertex and edge set as $\mathcal{G}[\mat{A}]$, but with different edge weights (including the fact that all edge weights of $\mathcal{G}[\bar{\mat C}\otimes \mat{W}]$ are positive, whereas negative edge weights may exist in $\mathcal{G}[\mat{A}]$). Thus, if we can prove that $\mathcal{G}[\bar{\mat C}\otimes \mat{W}]$ is strongly connected and aperiodic (as we shall do now) it will follow that $\mathcal{G}[\mat{A}]$ is also strongly connected and aperiodic. 

Suppose that $\mat{W} \geq 0$ and $\bar{\mat C}$ are primitive, and specifically that $\mat{W}^{k_1} > 0$ and $\bar{\mat C}^{k_2} > 0$ for some $k_1, k_2 \in \mathbb{N}$. Then, for all $\max\{k_1, k_2\} < j \in \mathbb{N}$, there holds $\mat{W}^{j} > 0$ and $\bar{\mat C}^{j} > 0$ (see below Lemma~\ref{lem:primitive}). Next, observe from Lemma~\ref{lem:primitive} that $\mathcal{G}[\bar{\mat C}\otimes \mat{W}]$ is strongly connected and aperiodic if and only if $\bar{\mat C}\otimes \mat{W}$ is primitive, i.e. $\exists  k\in \mathbb{N} : (\bar{\mat C} \otimes \mat{W})^k > 0$. It is straightforward to conclude that $\bar{\mat C}^k \otimes \mat{W}^k > 0$ if and only if $\mat{W}^k > 0$ and $\bar{\mat C}^k > 0$, since otherwise there is a zero element in $\bar{\mat C}^k \otimes \mat{W}^k$. By choosing $k > \max\{k_1, k_2\}$, it is straightforward to conclude that one requires $\mathcal{G}[\mat{W}]$ and $\mathcal{G}[\bar{\mat C}]$, separately, to be strongly connected and aperiodic in order for $\mathcal{G}[\mat{A}]$ to be strongly connected and aperiodic. 
\end{proof}

\subsection{Theorem~\ref{thm:convergence}}\label{app:thm_convergence_pf}


The proof is has two parts: in Part 1 and Part 2, we prove convergence for irreducible and reducible $\mat{C}_i$, respectively.

\emph{Part 1}: Consider the case where all the $\mat{C}_i$ are irreducible (i.e. $\mathcal{G}[\mat C_i]$ is strongly connected). We have that $\mathcal{G}[\mat W]$ and $\mathcal{G}[\mat{C}_i], \forall i\in \mathcal{I}$ are separately strongly connected and aperiodic from Assumptions~\ref{assm:C}, \ref{assm:C_same_pattern}, and \ref{assm:W} (the aperiodicity is a consequence of the assumption that $w_{ii} > 0$ and $c_{pp,i} > 0$ for all $i\in \mathcal{I}$ and $p\in\mathcal{J}$). From Lemma~\ref{lem:primitive}, we then conclude that $\mathcal{G}[\mat{A}]$ is strongly connected and aperiodic. Moreover, every diagonal entry of $\mat A$ is strictly positive. Using existing results on the Altafini model for strongly connected networks \cite[Theorem 1 and 2]{liu2017altafini_exp}, we conclude that $\lim_{t\to\infty} \vect y(t) = \vect y^*$ exponentially fast, where $\vect y^* \in \mathbb{R}^{nm}$ is the steady-state opinion distribution.


\emph{Part 2}: Consider now the case where all $\mat{C}_i$ are reducible,  
with $\mat{C}_i$ having the form in \eqref{eq:reducible_C}, $\mathcal{S} \triangleq \{1, 2, \hdots s\}$, and $s_j$ being integers satisfying $\sum_{j=1}^s s_j = m$. The matrix $\mat{A}$ in \eqref{eq:y_network_system} has the following form
\begin{equation}\label{eq:reducible_A_THM3}
\mat{A}= \begin{bmatrix}
 \mat{\bar{A}}_{11}  & \mat{0}  & \cdots & \mat{0}\\
 \mat{\bar{A}}_{21} & \mat{\bar{A}}_{22} & \cdots & \mat{0}\\
 \vdots & \vdots & \ddots & \vdots \\
 \mat{\bar{A}}_{s1} & \mat{\bar{A}}_{s2}  & \cdots & \mat{\bar{A}}_{ss}
    \end{bmatrix}
\end{equation}
with block matrix elements $\mat{\bar A}_{pq}, p,q\in \mathcal{S}$ given
\begin{equation}\label{eq:bar_A_pq}
\mat{\bar{A}}_{pq} \! = \! 
\begin{bmatrix}
   \mat{A}_{gh} &  \mat{A}_{g,h+1} & \cdots & \mat{A}_{g,h+s_q-1} \\
   \mat{A}_{g+1,h} &  \mat{A}_{g+1,h+1} & \cdots & \mat{A}_{g+1,h+s_q-1} \\
\vdots & \vdots & \ddots & \vdots \\
    \mat{A}_{g+s_p-1,h} &  \mat{A}_{g+s_p-1,h+1} & \cdots & \mat{A}_{g+s_p-1,h+s_q-1}.
\end{bmatrix}
\end{equation}
Here, $g=\sum_{i=1}^{p}s_{i-1}+1$ and $h=\sum_{i=1}^{q}s_{i-1}+1$ for $p,q\in \mathcal{S}$ with $s_0=0$. From the decomposition in \eqref{eq:reducible_C}, we know that $\mat{C}_{pp,i}$ is irreducible for any $p\in \mathcal{S}$ and $i\in \mathcal{I}$, which implies that $\mathcal{G}[\mat{C}_{pp,i}]$ is strongly connected.
Moreover, $\mathcal{G}[\mat{C}_{pp,i}]$ is apediodic since all the diagonals are positive (see Assumption~\ref{assm:C}). 

We prove the exponential convergence property by induction. First, for the base case consider the topics in $\mathcal{J}_1$, which are $\{1,2,\dots, s_1\}$. Since $\mat{C}_{11,i}$ is irreducible for all $i\in \mathcal{I}$, we obtain from \emph{Part 1} that for all topics $k\in\mathcal{J}_1$, there holds $\lim_{t\to\infty}\vect{y}_k(t) = \vect{y}^*_k$ exponentially fast, for some $\vect{y}^*_k \in \mathbb{R}^n$.

We now prove the induction step for topic $k$ in the topic subset $\mathcal{J}_p$, with $p\in \mathcal{S}$ and $p\geq 2$. Suppose that for all topics $l\in\cup_{j=1}^{p-1} \mathcal{J}_j$, $\lim_{t\to\infty}\vect{y}_l(t) = \vect{y}^*_l$ exponentially fast, where $\vect{y}^*_l$ is the vector of final opinions.
We need to show that for all topics $k$ in $\mathcal{J}_p$, there exists a vector $\vect{y}^*_k\in \mathbb{R}^{n}$ such that there holds $\lim_{t\to\infty}\vect{y}_k(t) = \vect{y}^*_k$ exponentially fast.
 Look at the $p$-th block row of matrix $\mat{A}$. 
Suppose first that $\mat{\bar{A}}_{pq}=\mat{0}$ for $q < p$. Since $\mat{C}_{pp,i}$ is irreducible for any $i\in \mathcal{I}$, then by the analysis in \emph{Part 1} of this proof, we conclude that for every $k\in\mathcal{J}_p$, there exists a $\vect{y}^*_k\in \mathbb{R}^{n}$ such that $\lim_{t\to\infty}\vect{y}_k(t) = \vect{y}^*_k$ exponentially fast.
Next, suppose to the contrary, that there exists a $q < p$ such that  $\mat{\bar{A}}_{pq} \neq \mat{0}$. 
Because $\mathcal{G}[\mat C_{pp,i}], \forall i \in \mathcal{I}$ are strongly connected and aperiodic, one can apply Lemma \ref{lem:A_irreducible} to obtain that $\mathcal{G}[\mat{\bar{A}}_{pp}]$ is strongly connected and aperiodic, i.e. $\mat{\bar{A}}_{pp}$ is irreducible. 
Let $\vert \mat{\bar{A}}_{pp} \vert$ be the matrix whose $ij^{th}$ entry is the absolute value of the $ij^{th}$ entry of  $\mat{\bar{A}}_{pp}$. 
Since $\mat{\bar{A}}_{pp}$ is irreducible, then $| \mat{\bar{A}}_{pp} |$ is also irreducible.
Because there exists $ q < p$ such that $\mat{\bar{A}}_{pq}\neq \mat{0}$, we conclude that $| \mat{\bar{A}}_{pp} |$  is row-substochastic.
It follows from Lemma \ref{lem:spectrum_irre} that $\rho(|\mat{\bar{A}}_{pp} |)<1$.
Using the triangle inequality, verify that the $ij^{th}$ entry of $ |\mat{\bar{A}}_{pp}^k|$ is less than or equal to the $ij^{th}$ entry of $|\mat{\bar{A}}_{pp}|^k$. 
Thus 
\[
\|\mat{\bar{A}}_{pp}^k\|_\infty= \| |\mat{\bar{A}}_{pp}^k| \|_\infty\leq \| |\mat{\bar{A}}_{pp}|^k\|_\infty.
\]
It follows that
\[\lim_{k\rightarrow \infty} \|\mat{\bar{A}}_{pp}^k\|_\infty ^{1/k}\leq  \lim_{k\rightarrow \infty} \| |\mat{\bar{A}}_{pp}|^k\|_\infty^{1/k}
,\]
which in turn implies that $\rho(\mat{\bar{A}}_{pp} )\leq \rho(|\mat{\bar{A}}_{pp} |)<1$.
Recall that at the start of the induction step, we assumed that for all $l\in\cup_{j=1}^{p-1} \mathcal{J}_j$ (with $p\in \mathcal{S}$ and $p\geq 2$), there exists $\vect y^*_l \in \mathbb{R}^{n}$ such that $\lim_{t\to\infty}\vect{y}_l(t) = \vect{y}^*_l$ exponentially fast. Combining this assumption with the fact that $\rho(\mat{\bar{A}}_{pp} )<1$, we conclude that for every $k\in \mathcal{J}_p$, there exists a $\vect y_k^* \in \mathbb{R}^n$ such that $\lim_{t\to\infty} \vect y_k(t) = \vect y_k^*$ exponentially fast.

The invariance property in which $y_i^p(0) \in [-1,1]$ for all $i\in \mathcal{I}$ and $p \in \mathcal{J}$ implies $y_i^p(t) \in [-1,1]$ for all $t\geq 0$ and $i\in \mathcal{I}$ and $p \in \mathcal{J}$ was proved in \cite{parsegov2017_multiissue}, using the fact that $\sum_{q = 1}^n \vert c_{pq,i} \vert =1$ as detailed in Assumption~\ref{assm:C}.
\hfill $\qed$

\subsection{Analysis for Subsection~\ref{ssec:disagree}}\label{app:analysis}

Here, we present a supporting result that links the structural balance of the graph $\mathcal{G}[\mat A]$ to the structural balance of $\mathcal{G}[\mat C_i], i \in \mathcal{I}$,  which will be used to help prove the main result on consensus for irreducible $\mat C_i$. 

First, we introduce additional graph-theoretic concepts. For a given (possibly signed) graph $\mathcal{G}$, an undirected cycle is a cycle of $\mathcal{G}$ that ignores the direction of the edges, and an undirected cycle is negative if it contains an odd number of edges with negative edge weight. A signed graph $\mathcal{G}$ is structurally unbalanced if and only if it has at least one negative undirected cycle \cite{cartwright1956structural}.  


We now establish several additional properties of how the entries $c_{ij,k}$ of $\mat{C}_k$ relate to edges in $\mathcal{G}[\mat{A}]$. 

\begin{lemma}\label{lem:graph_properties}
For the graph $\mathcal{G}[\mat{A}]$ with node set $\mathcal{V}[\mat{A}] = \{v_{1}, \hdots, v_{nm}\}$, let $\mathcal{V}_{p} = \{v_{(p-1)n+1}, \hdots, v_{pn}\}$, $p\in \mathcal{J}$ be defined as the set of nodes of the subgraph $\mathcal{G}[\mat{A}_{pp}]$. Suppose that Assumptions~\ref{assm:C}, \ref{assm:C_same_pattern} and \ref{assm:W} hold. Then,
\begin{enumerate}
    \item For every $p\in \mathcal{J}$, $\mathcal{G}[\mat{A}_{pp}]$ is strongly connected and aperiodic with positive edge weights.
    \item There is an edge from node $v_{(q-1)n+j}$ to $v_{(p-1)n+i}$ if and only if $w_{ij} > 0$ and $c_{pq,i} \neq 0$. Moreover, the weight of the edge the same sign as the sign of $c_{pq,i}$.
    \item If $c_{pq,k} \neq 0 \forall\,k\in\mathcal{I}$, then with $p\neq q$, every node in $\mathcal{V}_p$ has an incoming edge from a node $\mathcal{V}_q$, and every node in $\mathcal{V}_q$ has an outgoing edge to a node in $\mathcal{V}_p$.
\end{enumerate}
\end{lemma}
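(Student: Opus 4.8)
The plan is to prove the three items of Lemma~\ref{lem:graph_properties} directly from the block structure $\mat{A}_{pq} = \diag(c_{pq})\mat{W}$ established in \eqref{eq:y_network_system}, combined with Assumptions~\ref{assm:C}, \ref{assm:C_same_pattern} and \ref{assm:W} and the earlier Lemmas~\ref{lem:primitive} and \ref{lem:A_irreducible}.

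For Item~1, I would first observe that $\mat{A}_{pp} = \diag(c_{pp})\mat{W}$ where, by Assumption~\ref{assm:C}, every diagonal entry $c_{pp,i}>0$, so $\diag(c_{pp})$ is a positive diagonal matrix; hence $\mat{A}_{pp}$ is a nonnegative matrix with exactly the same zero/nonzero pattern as $\mat{W}$, and in particular $\mat{A}_{pp} \sim \mat{W}$. Since $\mathcal{G}[\mat{W}]$ is strongly connected (Assumption~\ref{assm:W}) and aperiodic (because $w_{ii}>0$ gives self-loops), $\mathcal{G}[\mat{A}_{pp}]$ has the same edge set and is therefore also strongly connected and aperiodic; all its edge weights are products $c_{pp,i}w_{ij}$ of positive numbers, hence positive. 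For Item~2, I would simply unpack the matrix product: the $((p-1)n+i,(q-1)n+j)$ entry of $\mat{A}$ is the $(i,j)$ entry of $\diag(c_{pq})\mat{W}$, which equals $c_{pq,i}\,w_{ij}$. Recalling the convention that $a_{rs}\neq 0$ iff there is an edge from $v_s$ to $v_r$ in $\mathcal{G}[\mat{A}]$, this is nonzero iff $w_{ij}>0$ and $c_{pq,i}\neq 0$, and since $w_{ij}>0$ always, the sign of the edge weight $c_{pq,i}w_{ij}$ is exactly the sign of $c_{pq,i}$.

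Item~3 is then essentially a corollary of Item~2. Suppose $p\neq q$ and $c_{pq,k}\neq 0$ for all $k\in\mathcal{I}$ (equivalently, by Assumption~\ref{assm:C_same_pattern}, for one and hence all $k$). Fix any node $v_{(p-1)n+i}\in\mathcal{V}_p$. Because $w_{ii}>0$ by Assumption~\ref{assm:W}, and $c_{pq,i}\neq 0$ by hypothesis, Item~2 gives an edge from $v_{(q-1)n+i}\in\mathcal{V}_q$ to $v_{(p-1)n+i}$; thus every node in $\mathcal{V}_p$ has an incoming edge from a node in $\mathcal{V}_q$. Symmetrically, fix $v_{(q-1)n+j}\in\mathcal{V}_q$; again $w_{jj}>0$ and $c_{pq,j}\neq 0$, so Item~2 gives an edge from $v_{(q-1)n+j}$ to $v_{(p-1)n+j}\in\mathcal{V}_p$, so every node in $\mathcal{V}_q$ has an outgoing edge to a node in $\mathcal{V}_p$.

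I do not expect any single step to be a serious obstacle here; this lemma is bookkeeping that translates the algebraic block form of $\mat{A}$ into graphical language. The only point requiring a little care is keeping the index conventions straight — in particular the transpose convention relating nonzero entries of $\mat{A}$ to edge directions in $\mathcal{G}[\mat{A}]$ (an edge from $v_s$ to $v_r$ corresponds to $a_{rs}\neq 0$), and correctly identifying the $(i,j)$ block-internal entry of $\diag(c_{pq})\mat{W}$ as $c_{pq,i}w_{ij}$ rather than $c_{pq,j}w_{ij}$. Once those conventions are pinned down, Items~1--3 follow in the order above, with Item~2 doing the real work and Items~1 and 3 reading off from it.
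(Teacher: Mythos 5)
Your proposal is correct and follows essentially the same route as the paper: read off the block entry $c_{pq,i}w_{ij}$ of $\mat{A}$, note $\mat{A}_{pp}\sim\mat{W}$ with positive weights for Item~1, and derive Item~3 from Item~2. The only cosmetic difference is that in Item~3 you use the self-loops $w_{ii}>0$ to exhibit the required edges, while the paper invokes irreducibility of $\mat{W}$ to find a nonzero off-diagonal $w_{ij}$; both are valid under Assumption~\ref{assm:W}.
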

\begin{proof}
First, recall that $\mat A_{pq} \triangleq \diag(c_{pq})\mat W$ as below \eqref{eq:y_update}.

\textit{Item 1):} From Assumption~\ref{assm:C}, we know that $c_{pp,i} > 0\,\forall\,i\in\mathcal{I}$ and $p\in \mathcal{J}$. This implies that $\mat{A}_{pp} \sim \mat{W}$ and $\mat A_{pp} \geq 0$ with all positive diagonals, which implies that $\mathcal{G}[\mat{A}_{pp}]$ is strongly connected and aperiodic.

\textit{Item 2):} Notice that $\mat{A}_{pq}$ is nonzero if and only if $c_{pq,i} \neq 0, i\in\mathcal{I}$. Moreover, the $ij^{th}$ entry of a nonzero $\mat{A}_{pq}$ is nonzero if and only if $w_{ij} > 0$, and has the same sign as $c_{pq,i}$. Recall that we defined node subsets $\mathcal{V}_{p} = \{v_{(p-1)n+1}, \hdots, v_{pn}\}$, $p\in \mathcal{J}$ for the graph $\mathcal{G}[\mat A]$. It follows that an edge from node $v_{(q-1)n+j} \in \mathcal{V}_q$ to $v_{(p-1)n+i} \in \mathcal{V}_p$ exists if and only if $w_{ij} > 0$ and $c_{pq,i} \neq 0$, and has the same sign as $c_{pq,i}$. 

\textit{Item 3):} This statement is obtained by (i) recalling the definition of the node set $\mathcal{V}_{p} = \{v_{(p-1)n+1}, \hdots, v_{pn}\}$, $p\in \mathcal{J}$, (ii) observing that an irreducible $\mat W$ implies that for any $i\in \mathcal{I}$, there exists a $j\in \mathcal{I}$, $i \neq j$ such that $w_{ij} > 0$, and (iii) by applying \textit{Item 2)}.
\end{proof}

We now turn to study of the structural balance of $\mathcal{G}[\mat{A}]$ and its relation to the structural balance of the $\mathcal{G}[\mat{C}_i]$s. 

\begin{lemma}\label{lem:A_structural_balance}
Suppose that Assumptions~\ref{assm:C}, \ref{assm:C_same_pattern}, and \ref{assm:W} hold. Suppose further that $\mat{C}_i$ for all $i\in\mathcal{I}$ are irreducible. The following hold:
\begin{enumerate}
    \item If there are no individuals with competing logical interdependencies, as given in Definition~\ref{def:compete_logic}, then $\mathcal{G}[\mat{A}]$ is structurally balanced if and only if $\mathcal{G}[\mat{C}_i],\forall\,i\in\mathcal{I}$ are structurally balanced.
    \item If there are individuals with competing logical interdependences, then $\mathcal{G}[\mat{A}]$ is structurally unbalanced.
\end{enumerate}
\end{lemma}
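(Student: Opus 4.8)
The plan is to work entirely with undirected cycles in the two signed graphs, using the characterisation that a signed graph is structurally unbalanced if and only if it contains a negative undirected cycle. The crucial structural fact, established in Lemma~\ref{lem:graph_properties}, is that $\mathcal{G}[\mat{A}]$ is organised into $m$ layers $\mathcal{V}_1, \hdots, \mathcal{V}_m$, where each layer $\mathcal{V}_p$ carries a strongly connected aperiodic subgraph with \emph{positive} edge weights, and an edge between layers $\mathcal{V}_q$ and $\mathcal{V}_p$ exists (and has sign equal to $\sgn(c_{pq,i})$ for the appropriate $i$) exactly when $w_{ij}>0$ and $c_{pq,i}\neq 0$. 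So the sign structure of $\mathcal{G}[\mat{A}]$ is entirely concentrated in the inter-layer edges, and these inter-layer edges are in sign-correspondence with the off-diagonal entries of the $\mat{C}_i$'s.

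For \textbf{Item 2)}, suppose competing logical interdependencies exist on some topic $p$: there are individuals $i,j$ and a topic $q$ with $c_{pq,i}$ and $c_{pq,j}$ nonzero and of opposite sign. I would then construct an explicit negative undirected cycle in $\mathcal{G}[\mat{A}]$. Using Lemma~\ref{lem:graph_properties} Item 2), the edge from $v_{(q-1)n+i}\in\mathcal{V}_q$ to $v_{(p-1)n+i}\in\mathcal{V}_p$ has sign $\sgn(c_{pq,i})$ (this uses $w_{ii}>0$ from Assumption~\ref{assm:W}), and the edge from $v_{(q-1)n+j}$ to $v_{(p-1)n+j}$ has the opposite sign $\sgn(c_{pq,j})$. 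Now close these two inter-layer edges into a cycle by joining $v_{(p-1)n+i}$ to $v_{(p-1)n+j}$ through a path inside the positive-weight strongly connected subgraph $\mathcal{G}[\mat{A}_{pp}]$, and joining $v_{(q-1)n+j}$ to $v_{(q-1)n+i}$ through a path inside the positive-weight strongly connected subgraph $\mathcal{G}[\mat{A}_{qq}]$. The two intra-layer paths contribute only positive edges, so the resulting undirected cycle has exactly one negative edge among the two inter-layer edges (since they have opposite signs), hence an odd number of negative edges, hence it is a negative cycle. Therefore $\mathcal{G}[\mat{A}]$ is structurally unbalanced. (One should check the path can be chosen so the cycle has no repeated vertices, which is routine using strong connectivity of each $\mat{A}_{pp}$ and keeping the four ``anchor'' nodes distinct.)

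For \textbf{Item 1)}, assume no competing logical interdependencies, so that for each ordered pair $(p,q)$ with $p\neq q$, all nonzero entries $c_{pq,i}$ ($i\in\mathcal{I}$) share a common sign; call it $\sigma_{pq}\in\{+,-\}$. This means $\mathcal{G}[\mat{C}_i]$ has, up to edge weights, a common \emph{signed} structure for all $i$, justifying the footnote claim. For the ``only if'' direction I would contrapose: if $\mathcal{G}[\mat{C}_1]$ (hence every $\mathcal{G}[\mat{C}_i]$) is structurally unbalanced, it has a negative undirected cycle on topic-vertices $p_1, p_2, \hdots, p_\ell, p_1$; lift each topic-edge $(p_a,p_{a+1})$ to an inter-layer edge in $\mathcal{G}[\mat{A}]$ between a chosen node in $\mathcal{V}_{p_a}$ and a chosen node in $\mathcal{V}_{p_{a+1}}$ (choosing, say, the $1$st individual's copy and using $w_{11}>0$), and stitch consecutive lifted edges together with positive intra-layer paths as in Item 2). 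The sign of each lifted inter-layer edge equals $\sigma_{p_a p_{a+1}}$, so the lifted cycle has the same sign parity as the original topic-cycle, i.e. it is negative; hence $\mathcal{G}[\mat{A}]$ is structurally unbalanced. For the ``if'' direction, again contrapose: assume every $\mathcal{G}[\mat{C}_i]$ is structurally balanced, so the topics partition into $\mathcal{V}[\mat{C}]^+ \cup \mathcal{V}[\mat{C}]^-$ with $\sigma_{pq}=+$ when $p,q$ lie in the same part and $\sigma_{pq}=-$ otherwise. I would exhibit an explicit bipartition of the $nm$ nodes of $\mathcal{G}[\mat{A}]$ witnessing structural balance: put all of $\mathcal{V}_p$ into the ``$+$''-block if $p\in\mathcal{V}[\mat{C}]^+$ and into the ``$-$''-block if $p\in\mathcal{V}[\mat{C}]^-$. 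Intra-layer edges are positive and stay within a block — consistent. An inter-layer edge between $\mathcal{V}_p$ and $\mathcal{V}_q$ has sign $\sigma_{pq}$, which is $+$ iff $p,q$ are in the same $\mathcal{V}[\mat{C}]$-part iff the two layers landed in the same block — exactly the structural-balance condition. Hence $\mathcal{G}[\mat{A}]$ is structurally balanced, completing Item 1).

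The main obstacle I anticipate is the bookkeeping in the cycle-lifting arguments: one must verify that the stitched-together undirected cycle in $\mathcal{G}[\mat{A}]$ is a genuine cycle (no repeated vertices) and that exactly the inter-layer edges carry the signs, while all the ``glue'' edges inside each $\mathcal{G}[\mat{A}_{pp}]$ are positive — this is where Lemma~\ref{lem:graph_properties} Items 1) and 2) and the strong connectivity of each layer do the real work. The ``if'' direction of Item 1) is comparatively clean because it only requires displaying a bipartition and checking edge-sign consistency layer-by-layer, with no cycle chasing at all.
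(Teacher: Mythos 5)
Your proposal is correct, and Item 2) together with the ``only if'' half of Item 1) follow essentially the same route as the paper: both rest on Lemma~\ref{lem:graph_properties} (positive intra-layer edges, inter-layer edge signs inherited from $c_{pq,i}$), both lift a negative cycle of $\mathcal{G}[\mat{C}]$ into $\mathcal{G}[\mat{A}]$ along a single individual's copies via $w_{ii}>0$ (your ``stitching'' is actually unnecessary once you fix individual $1$ throughout, since consecutive lifted edges already share endpoints), and both build the same four-anchor negative cycle for competing interdependencies. The one genuine divergence is the ``if'' half of Item 1): the paper argues by taking an \emph{arbitrary} undirected cycle in $\mathcal{G}[\mat{A}]$ and counting that it contains an even number of negative edges (a somewhat delicate parity bookkeeping over excursions into the ``negative'' layers), whereas you directly exhibit the witnessing bipartition of the $nm$ nodes by assigning each whole layer $\mathcal{V}_p$ to the block determined by where topic $p$ sits in the partition of $\mathcal{G}[\mat{C}]$, and then verify edge-sign consistency edge class by edge class. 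Your version is cleaner and avoids cycle chasing entirely; its validity hinges on exactly the fact you isolate, namely that absence of competing interdependencies forces every inter-layer edge from $\mathcal{V}_q$ to $\mathcal{V}_p$ to carry the common sign $\sigma_{pq}$, so the definition of structural balance can be checked directly rather than through the negative-cycle characterisation. The paper's cycle-counting argument buys nothing extra here, so your shortcut is a legitimate simplification; just make sure to note that the bipartition is allowed to have an empty block when all topics lie in $\mathcal{V}[\mat{C}]^+$.
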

\begin{proof}
We prove each statement separately.

\emph{Part 1:} Consider the case where there are no individuals with competing logical interdependencies. Since, for any $p,q \in \mathcal{J}$, $c_{pq,i}$ for all $i\in \mathcal{I}$ are of the same sign, it follows that all graphs $\mathcal{G}[\mat{C}_i]$ have the same structural balance or unbalance property. Moreover, because the structural balance or unbalance property of any graph depends on the sign, and not the magnitude, of its edge weights, let us consider $\mathcal{G}[\mat{C}_1]$ for convenience. For brevity, we also drop the subscript 1 and simply write $\mathcal{G}[\mat{C}]$ for \textit{Part 1} of this proof, with node set $\mathcal{V}_{\mat{C}} = \{v_{c,1}, \hdots, v_{c,m}\}$. To establish the result, we will exploit Lemma~\ref{lem:graph_properties}. For each $p \in \mathcal{J}$, consider the subgraph $\mathcal{G}[\mat A_{pp}]$ of $\mathcal{G}[\mat A]$. Item 1) of Lemma~\ref{lem:graph_properties} tells us that every edge in $\mathcal{G}[\mat A_{pp}]$ has a positive weight, while Item 2) and Item 3) of Lemma~\ref{lem:graph_properties} establish that the edge weights for all edges from $\mathcal{G}[\mat A_{qq}]$ to $\mathcal{G}[\mat A_{pp}]$ have the same sign as the sign of the weight for the edge $(v_{c,q}, v_{c,p})$ in $\mathcal{G}[\mat{C}]$. 

With these properties in mind, consider a structurally unbalanced $\mathcal{G}[\mat{C}]$; since $\mathcal{G}[\mat{C}]$ is strongly connected, the unbalance property implies there is at least one negative directed cycle. Without loss of generality, consider the negative cycle 
\begin{align}\label{eq:undir_cycle_C_v2}
    (v_{c,p}, v_{c,z_1}), (v_{c,z_1}, v_{c,z_2}), \hdots (v_{c,z_r},v_{c,p})
\end{align}
with $z_1, \hdots z_r\in\mathcal{J}$ and $r \geq 1$. Let $u \in \mathbb{N}$ be the odd number of negative edges in the undirected cycle. From Item 2) and 3) of Lemma~\ref{lem:graph_properties}, and using the fact that $w_{ii} > 0\,\forall\,i\in\mathcal{I}$, we find that there is an undirected cycle in $\mathcal{G}[\mat A]$ given as
\begin{align*}
\pi = &(v_{(p-1)n+i}, v_{(z_1-1)n+i}), (v_{(z_1-1)n+i}, v_{(z_2-1)n+i}), \hdots, \\
&(v_{(z_r)n+i}, v_{(p-1)n+i}).
\end{align*}
The undirected cycle $\pi$ contains precisely $u$ edges with negative weight, which implies that $\pi$ is a negative cycle. It follows that $\mathcal{G}[\mat A]$ is structurally unbalanced.

Next, consider a structurally balanced $\mathcal{G}[\mat{C}]$, and assume without loss of generality that the nodes are ordered such that they can be partitioned into disjoint sets $\mathcal{V}^+ = \{v_{c,1}, \hdots, v_{c,s}\}$ and $\mathcal{V}^- = \{v_{c,s+1}, \hdots, v_{c,m}\}$, with $1 \leq s < m$. The two sets have the property that each edge between two nodes in $\mathcal{V}^+$ or $\mathcal{V}^-$ has positive weight, while each edge between a node in $\mathcal{V}^+$ and a node in $\mathcal{V}^-$ has negative weight. Without loss of generality, consider an undirected cycle, $\pi$, in $\mathcal{G}[\mat A]$ starting and ending at a node $\bar v$ in the subgraph $\mathcal{G}[\mat A_{11}]$. We are going to show that any such $\pi$ is not a negative undirected cycle. If $\pi$ traverses only nodes in $\mathcal{G}[\mat A_{11}]$, then clearly all edges on the path have positive weight. Suppose instead that $\pi$ is such that it traverses at least one node in each of the subgraphs $\mathcal{G}[\mat A_{11}]$, $\mathcal{G}[\mat A_{z_1 z_1}], \hdots, \mathcal{G}[\mat A_{z_{r} z_{r}}]$, with $z_1, \hdots z_r\in\mathcal{J}$ and $r \geq 1$ (by the definition of an undirected cycle, each node in the cycle apart from $\bar v$ is distinct). 
If $v_{c,z_1}, \hdots, v_{c,z_r} \in \mathcal{V}^+$, then we conclude from Item 2) and 3) of Lemma~\ref{lem:graph_properties} that all edges in $\pi$ have positive weight. In both cases, $\pi$ is not a negative undirected cycle. Now suppose that $z_1, \hdots z_k$, with $k < r$, are such that $v_{c,z_1}, \hdots v_{c, z_k} \in \mathcal{V}^-$. Notice that for any two nodes $\tilde v$ and $\hat v$ in the subgraphs $\mathcal{G}[\mat A_{pp}], p \in \{1, \hdots, s\}$, a path from $\tilde v$ to $\hat v$ which traverses nodes in the subgraphs $\mathcal{G}[\mat A_{qq}], q \in \{s+1, \hdots, m\}$ has an even number of edges with negative weight. This is because $v_{c,p} \in \mathcal{V}^+, p \in \{1, \hdots, s\}$ and $v_{c,q} \in \mathcal{V}^-, q \in \{s+1, \hdots, m\}$. From the fact that $\bar v \in \mathcal{G}[\mat A_{11}]$, one can use this previous property to show that there exist nonnegative integers $u_1, \hdots, u_k$ such that the number of edges in $\pi$ with negative weight is precisely $\sum_{v=1}^k 2u_v + 2$. It follows that there are an even number of edges with negative weight in $\pi$, meaning $\pi$ is not a negative undirected cycle. This analysis holds for every undirected cycle in $\mathcal{G}[\mat A]$. We conclude that there does not exist a negative undirected cycle in $\mathcal{G}[\mat A]$, which implies that $\mathcal{G}[\mat A]$ is structurally balanced.

We have thus proved that there exists an undirected negative cycle in $\mathcal{G}[\mat{A}]$ if and only if there exists an undirected negative cycle in $\mathcal{G}[\mat{C}]$, which implies the structural balance or unbalance of $\mathcal{G}[\mat{A}]$ is the same as that of $\mathcal{G}[\mat{C}_i], \forall\,i \in \mathcal{I}$.

\textit{Part 2:} Consider now the case when there are individuals with competing logical interdependencies. Suppose that there exist individuals $j,k$ such that $c_{pq,j} >0$  and $c_{pq,k} < 0$ has negative sign (i.e. there are competing logical interdependencies in topic $p$). From Item 1) of Lemma~\ref{lem:graph_properties}, we know that the subgraph $\mathcal{G}[\mat A_{pp}]$ is strongly connected and all edges between nodes within $\mathcal{G}[\mat A_{pp}]$ have positive weight. From Item 2) and Item 3) of Lemma~\ref{lem:graph_properties}, and because $w_{ii} > 0$ for all $i$, we observe that $\mathcal{G}[\mat A]$ has an undirected cycle
\begin{align*}
&(v_{(q-1)n+j}, v_{(p-1)n+j}), (v_{(p-1)n+j}, v_{(p-1)n+z_1}), \hdots, \\
&(v_{(p-1)n+z_r}, v_{(p-1)n+k}), (v_{(p-1)n+k}, v_{(q-1)n+k}), \\
&(v_{(q-1)n+k}, v_{(q-1)n+z_r}), \hdots, (v_{(q-1)n+z_1}, v_{(q-1)n+j})
\end{align*}
with $z_1, \hdots z_r \in \mathcal{I}$ and $r \geq 1$. The single negative edge in this undirected cycle is $(v_{(p-1)n+k}, v_{(q-1)n+k})$, which means the undirected cycle is negative. It follows that $\mathcal{G}[\mat{A}]$ is structurally unbalanced.\end{proof}

\subsection{Proof of Theorem~\ref{thm:consensus_irreducible_C}}\label{app:thm_consensus_irreducible_C_pf}

We first prove Statement 1). If there are no competing logical interdependencies and $\mathcal{G}[\mat{C}_i],\forall\,i\in\mathcal{I}$ are structurally balanced, then $\mathcal{G}[\mat{A}]$ is structurally balanced according to Lemma~\ref{lem:A_structural_balance}. According to \cite[Theorem 1]{liu2017altafini_exp}, for almost all initial conditions the system \eqref{eq:y_network_system} converges to a nonzero modulus consensus, i.e. $\lim_{t\to\infty} \vert y_p^i(t) \vert = \vert y_q^j(t) \vert \neq 0$ for all $i,j\in\mathcal{I}$ and $p,q \in \mathcal{J}$. It remains to prove that $\lim_{t\to\infty} \vect{y}_k(t) = \alpha_k \vect{1}_n, \forall k\in\mathcal{J}$. 

For a structurally balanced $\mathcal{G}[\mat{A}]$, the nodes $v_{i} \in \mathcal{V}$ can be partitioned into two disjoint sets $\mathcal{V}^+$ and $\mathcal{V}^-$, where every edge between nodes in the same set has positive weight, and every edge between nodes of $\mathcal{V}^+$ and $\mathcal{V}^-$ has negative weight.  Item 1) of Lemma~\ref{lem:graph_properties} implies that for any $k \in \mathcal{J}$, the nodes $v_{(k-1)n+1}, \hdots, v_{kn}$ all belong in either $\mathcal{V}^+$ or $\mathcal{V}^-$. Recalling that the node $v_{(k-1)n+i}$ corresponds to the variable $y_k^i$, and from \cite[Theorem 1]{liu2017altafini_exp}, it follows that $\lim_{t\to\infty} y_{k}^i(t) = y_k^j(t)$ for all $i,j\in \mathcal{I}$, and thus $\lim_{t\to\infty} \vect{y}_k = \alpha_k\vect{1}_n$ for every $k\in \mathcal{J}$.

Statements 2) and 3) can be proved simultaneously. If there are no competing logical interdependencies, and $\mathcal{G}[\mat{C}_i],\forall\,i\in\mathcal{I}$ are structurally unbalanced then according to Lemma~\ref{lem:A_structural_balance}, $\mathcal{G}[\mat{A}]$ is structurally unbalanced. Similarly, if there are competing logical interdependencies, then according to Lemma~\ref{lem:A_structural_balance}, $\mathcal{G}[\mat{A}]$ is also structurally unbalanced. From \cite[Theorem 2]{liu2017altafini_exp}, there holds $\lim_{t\to\infty} \vect{y}(t) = \vect{0}_{nm}$ exponentially fast. This completes the proof of the theorem. \hfill $\qed$

\subsection{Proof of Corollary~\ref{cor:mod_consensus_irreducible_C}}

Recall from Appendix~\ref{app:analysis} that the structural balance or unbalance property of any graph depends on the sign, and not the magnitude, of its edge weights. Since $\mathcal{G}[\mat{C}_i],\forall\,i\in\mathcal{I}$ are structurally balanced, we can consider $\mathcal{G}[\mat{C}_1]$ for convenience. For brevity, we also drop the subscript 1 and simply write $\mathcal{G}[\mat{C}]$. Partition the nodes $v_{1}, \hdots, v_{m}$ of $\mathcal{G}[\mat{C}]$ into two disjoint sets $\mathcal{V}[\mat{C}]^+$ and $\mathcal{V}[\mat{C}]^-$ such that every edge between nodes in the same set has positive weight, and every edge between nodes of different sets has negative weight. 

Since $\mathcal{G}[\mat A]$ is structurally balanced, let us also partition the nodes $\tilde v_{k}$ of $\mathcal{G}[\mat A]$ into two disjoint sets $\mathcal{V}^+$ and $\mathcal{V}^-$ such that every edge between nodes in the same set has positive weight, and every edge between nodes of different sets has negative weight. We know from Lemma~\ref{lem:graph_properties} Item 1) and Lemma~\ref{lem:A_structural_balance} that the nodes $\tilde v_{(p-1)n+1}, \hdots, \tilde v_{pn}$ of $\mathcal{G}[\mat A]$ all belong in either $\mathcal{V}^+$ or $\mathcal{V}^-$. 
Recall from Item 2) and 3) of Lemma~\ref{lem:graph_properties} that the weights of the edges from subgraph $\mathcal{G}[\mat A_{qq}]$ to subgraph $\mathcal{G}[\mat A_{pp}]$, with $p\neq q$ and $p,q\in \mathcal{J}$, have the same sign as the edges in $\mathcal{G}[\mat{C}]$ from $v_{q}$ to $v_{p}$. One can then use the analysis in \cite{liu2017altafini_exp} and Theorem~\ref{thm:consensus_irreducible_C}, Statement $1)$, to verify that $\alpha_p = \alpha_q$ if $v_{p}$ and $v_{q}$ are either both in $\mathcal{V}[\mat{C}]^+$ or both in $\mathcal{V}[\mat{C}]^-$. If, on the other hand, $v_{q} \in \mathcal{V}[\mat{C}]^+$ and $v_{p} \in \mathcal{V}[\mat{C}]^-$, then $\alpha_p = -\alpha_q$. \hfill $\qed$

\subsection{Proof of Theorem~\ref{thm:disagree_md_degroot_node}}

First, observe that if $\mathcal{J}_j = \{p\}$ is a singleton, then the block diagonal matrix $\mat{\bar A}_{pp}$ in \eqref{eq:reducible_A_THM3} is in fact $\mat{\bar A}_{pp} = \mat A_{pp} = \diag(c_{pp})\mat W$, where $\diag(c_pq), p,q \in \mathcal{J}$ is defined below \eqref{eq:y_update}. Since $0 < c_{pp,i} < 1$ for all $i\in \mathcal{I}$, and $\mat W$ is row-stochastic, we have that $\Vert \mat A_{pp}\Vert_{\infty} < 1 \Rightarrow \rho(\mat A_{pp}) < 1$. This implies that $(\mat{I}_n - \mat{A}_{pp})^{-1}$ exists. Letting $\hat{\mathcal{J}}_p$ be the set of topics that topic $p$ logically depends upon, as defined in \eqref{eq:ts_depend}, the vector $\vect y_p(t)$ converges exponentially fast to 
\begin{equation}\label{eq:yp_equib}
\lim_{t\to\infty} \vect y_p(t) \triangleq \vect{y}_p^* = (\mat{I}_n - \mat{A}_{pp})^{-1}\big(\sum_{j\in \hat{\mathcal{J}_p}}\mat{A}_{pj}\vect{y}_j^*\big). 
\end{equation}
We now focus on proving that $\vect{y}_p^*$ reaches a consensus state if and only if \eqref{eq:cond_multi} holds for some $\alpha_p \in [-1,1]$. Let $\mat{R}_{pp} = \mat{I}_n-\mat{A}_{pp}$, and because $\mat{W}$ is row-stochastic, one obtains that $\mat A_{pq} \vect{1}_n = \diag(c_{pq})\vect{1}_n$ for any $q,p \in \mathcal{J}$ and $\mat{R}_{pp}\vect{1}_n =(\mat{I}_n - \diag(c_{pp}))\vect{1}_n$. We use this observation several times below.

\textit{Sufficiency:} Because $\mathcal{J}_j = \{p\}$, we can obtain from \eqref{eq:reducible_C} that $c_{pq,i} =  0$ for every $q > p$. Combining this with Assumption~\ref{assm:C} yields $1 - c_{pp,i} = \sum_{q\in \hat{\mathcal{J}}_p} \vert c_{pq,i} \vert$. This implies that if there exists a $\kappa \in [-1,1]$ satisfying \eqref{eq:cond_multi} for all $i\in \mathcal{I}$, then $\sum_{q\in \hat{\mathcal{J}}_p} \alpha_{q}\diag(c_{pq}) = \kappa(\mat{I}_n - \diag(c_{pp}))$. 
Recalling that by the theorem hypothesis $\vect{y}_{q}^* = \alpha_{q}\vect{1}_n$ for every $q\in \hat{\mathcal{J}}_p$, \eqref{eq:yp_equib} then yields
\begin{align*}
\vect{y}_p^* & = \mat R_{pp}^{-1}\big( \sum_{q\in \hat{\mathcal{J}}_p} \alpha_{q}\diag(c_{pq}) \big)\vect{1}_n  = \kappa\vect{1}_n.
\end{align*}
This also shows that $\alpha_p = \kappa$.

\textit{Necessity:} Suppose in order to obtain a contradiction, that $\vect{y}_p^* = \alpha_p \vect{1}_n$ for some $\alpha_p$ and there does not exist a $\kappa \in [-1,1]$ satisfying \eqref{eq:cond_multi} for all $i\in\mathcal{I}$. 
Substituting $\vect{y}_p^* = \alpha_p \vect{1}_n$ into the left hand side of \eqref{eq:yp_equib}, we obtain
\begin{equation*}
\alpha_p \vect{1}_n = \mat R_{pp}^{-1}\big( \sum_{q\in \hat{\mathcal{J}}_p} \alpha_{q}\diag(c_{pq}) \big)\vect{1}_n
\end{equation*}
Multiplying both sides by $\mat{R}_{pp}$ yields 
\begin{equation}\label{eq:contradict}
    \alpha_p (\mat{I}_n - \diag(c_{pp}))\vect{1}_n = \big( \sum_{q\in \hat{\mathcal{J}}_p} \alpha_{q}\diag(c_{pq}) \big)\vect{1}_n.
\end{equation} 
However, \eqref{eq:contradict} implies that for all $i\in \mathcal{I}$, there holds
\begin{equation}
    \alpha_p \sum_{q\in \hat{\mathcal{J}}_p} \vert c_{pq,i} \vert = \sum_{q\in \hat{\mathcal{J}}_p} \alpha_{q}c_{pq, i}.
\end{equation}
Clearly, this contradicts the assumption made at the start of the proof of necessity: there does not exist a $\kappa \in [-1,1]$ satisfying \eqref{eq:cond_multi} for all $i\in\mathcal{I}$. 
This completes the proof. \hfill $\qed$

\subsection{Proof of Corollary~\ref{cor:disagree_node}}

We prove each statement of Corollary~\ref{cor:disagree_node} separately. First, note that $\vert \alpha_{q} \vert \leq 1$, which implies that the quantity on the right hand side of \eqref{eq:cond_multi} is in $[-1,1]$ for every $i\in \mathcal{I}$.

\emph{Statement 1):}
For the proof of sufficiency, suppose that there are no competing logical interdependencies in topic $p$. Then, $\alpha_q c_{pq, i}$ has the same sign for every $i\in \mathcal{I}$ and since $\hat{\mathcal{J}}_p = \{q\}$, $\kappa = \alpha_q \sgn(c_{pq,i}) \in [-1,1]$ satisfies \eqref{eq:cond_multi}, where the signum function $\sgn : \mathbb{R} \to \{-1,0,1\}$ satisfies $\sgn(x) = 1$ if $x > 0$, $\sgn(x) = 0$ if $x = 0$, and $\sgn(x) = -1$ if $x < 0$.

For the proof of necessity, suppose that there are competing logical interdependencies in topic $p$, and suppose $c_{pq,1} > 0$ and $c_{pq,2} < 0$ (see below \eqref{eq:C_example02} on why we can make this assumption without loss of generality). Then,  $\sgn(\alpha_q c_{pq, 1}) =  -\sgn(\alpha_q c_{pq, 2})$, which implies that there does not exist a $\kappa$ such that \eqref{eq:cond_multi} simultaneously holds for $i = 1$ and $i = 2$.

\emph{Statement 2):} The proof is trivial, since the right hand side of \eqref{eq:cond_multi} is zero for all $i\in \mathcal{I}$.

\emph{Statement 3):} The condition $c_{pq_k,i} = c_{pq_k,j} = c_{pq_k}$ for all $k \in \{1 , \hdots, r\}$ and $i,j\in\mathcal{I}$ implies that 
\begin{equation}
    \frac{\sum_{q\in \hat{\mathcal{J}}_{p}} \alpha_{q} c_{pq,i} }{\sum_{q\in \hat{\mathcal{J}}_{p}} \vert c_{pq,i}\vert } = \frac{\sum_{q\in \hat{\mathcal{J}}_{p}} \alpha_{q} c_{pq,j} }{\sum_{q\in \hat{\mathcal{J}}_{p}} \vert c_{pq,j}\vert }
\end{equation} 
for any $i,j \in \mathcal{I}$, which means that a $\kappa \in [-1,1]$ exists satisfying \eqref{eq:cond_multi} for all $i\in \mathcal{I}$. 

\emph{Statement 4):} Observe that proving the existence of a $\kappa \in [-1,1]$ satisfying \eqref{eq:cond_multi} is equivalent to finding a $\kappa \in [-1,1]$ such that  
\begin{equation}\label{eq:cor_node_homogeneous}
    \sum_{k=1}^r \alpha_{q_k} \vect z_k = \kappa \sum_{k=1}^r \Xi_k \vect z_k
\end{equation}
where $\vect z_k = [c_{pq_k,1}, \hdots, c_{pq_k,n}]^\top$, and $\Xi_k$ is a diagonal matrix with $i^{th}$ diagonal entry being $\sgn(c_{pq_k,i})$. Because we assumed that $\vert \alpha_{q_u}\vert = \vert \alpha_{q_v}\vert$ for all $u,v \in \{1, \hdots, r\}$, let $\bar\alpha\triangleq \vert \alpha_{q_k}\vert$.

In the case of (i), where $\sgn(c_{pq_k,i}) = \sgn(\alpha_{q_k})$ for every $k \in \{1, \hdots, r\}$, it follows that $\alpha_{q_k} \vect z_k = \bar\alpha \Xi_k \vect z_k$. Rearranging \eqref{eq:cor_node_homogeneous} yields
\begin{equation}
    \vect 0_n = \sum_{k=1}^r (\kappa-\bar\alpha)  \Xi_k \vect z_k.
\end{equation}
Since $\bar\alpha \in [-1,1]$, choosing $\kappa = \bar\alpha$ ensures that \eqref{eq:cor_node_homogeneous} holds. The proof for case (ii) is the same, except that $\alpha_{q_k} \vect z_k = -\bar\alpha \Xi_k \vect z_k$ and one selects $\kappa = -\bar\alpha$ to satisfy \eqref{eq:cor_node_homogeneous}. \hfill $\qed$ 

\subsection{Proof of Theorem~\ref{thm:disagree_md_degroot_component}}
First, note that for any $p,q \in \mathcal{J}$, $\mat A_{pq}\vect 1_n = \diag(c_{pq})\vect 1_n$, where $\diag(c_{pq})$ has been defined below \eqref{eq:y_update}. For notational convenience, let $\mathcal{J}_j = \{k_1, \hdots, k_{s_j}\}$ with $s_j \geq 2$. In other words, we replace for brevity $\sum_{i=1}^j s_{i-1} + p$ in \eqref{eq:topic_subset} with $k_p$, for $p = 1, \hdots, s_j$. We proved in Theorem~\ref{thm:convergence} that $\rho(\mat{\bar A}_{jj}) < 1$ for every $j\in \mathcal{S}$ which, combined with the assumption that $\vect{y}_{q}^* = \alpha_{q} \vect{1}_n,\,\alpha_{q} \in [-1,1]$ for all $q \in \tilde{\mathcal{J}}_{j}$, yields
\begin{equation}\label{eq:comp_y_equib}
    \begin{bmatrix}
\vect{y}_{k_1}^* 
\\
\vdots 
\\
\vect{y}_{k_{s_j}}^* 
\end{bmatrix} =(\mat I_{ns_j}  -\mat{\bar A}_{jj})^{-1}
\begin{bmatrix}  \sum\limits_{q \in \tilde{\mathcal{J}}_{j}} \alpha_q \diag(c_{k_1q})\vect 1_n\\  \vdots\\  \sum\limits_{q \in \tilde{\mathcal{J}}_{j}} \alpha_q \diag(c_{k_{s_j}q})\vect 1_n
\end{bmatrix}
\end{equation}
where $\vect y_{k}^* = \lim_{t\to\infty} \vect y_k(t)$ for $k \in \mathcal{J}_j$. 

\textit{Sufficiency}: Observe that, for any $k \in \mathcal{J}_j$, there holds 
\begin{align}\label{eq:sum_c}
    \sum_{r \in \mathcal{J}_j \setminus \{k\}} & \vert c_{kr, i}\vert + \sum_{q \in \tilde{\mathcal{J}}_{j}}\vert c_{kq, i}\vert  = 1- c_{kk, i}.
\end{align}
This implies that if there exist $\phi_{k} \in [-1, 1]$ such that \eqref{eq:cond_multi_comp} holds for every $k \in \mathcal{J}_j$ and $i\in \mathcal{I}$, then 
\begin{align}\label{eq:phi_sum}
    \phi_{k} &(1- c_{kk, i})  - \sum_{r \in \mathcal{J}_j \setminus \{k\}} \phi_{r} c_{kr, i} =   \sum_{q \in \tilde{\mathcal{J}}_{j}} \alpha_q c_{kq, i}.
\end{align}
One can verify that \eqref{eq:phi_sum} holding for every $k \in \mathcal{J}_j$ and $i\in \mathcal{I}$ is equivalent to the following equality: 
\begin{equation}\label{eq:comp_y_sum}
    (\mat I_{ns_j} - \bar{\mat C})(\mat \Phi \otimes \mat I_n)\vect 1_{ns_j} =
    \begin{bmatrix}  \sum\limits_{q \in \tilde{\mathcal{J}}_{j}} \alpha_q \diag(c_{k_1 q})\vect 1_n\\  \vdots\\  \sum\limits_{q \in \tilde{\mathcal{J}}_{j}} \alpha_q \diag(c_{k_{s_j} q})\vect 1_n
\end{bmatrix}
\end{equation}
where $\mat{\Phi} \in \mathbb{R}^{s_j}$ is a diagonal matrix with $i^{th}$ diagonal entry $\phi_{k_i}$ and
\begin{equation}
    \bar{\mat{C}} = 
    \begin{bmatrix}
       \diag(c_{k_1 k_1}) & \hdots & \diag(c_{k_1 k_{s_j}}) \\
       \vdots & \ddots & \vdots \\
       \diag(c_{k_{s_j} k_1}) & \hdots & \diag(c_{k_{s_j} k_{s_j}})
    \end{bmatrix}.
\end{equation}
Next, observe that
\begin{align}
     \bar{\mat C}(\mat \Phi \otimes \mat I_n)(\vect 1_{s_j} \otimes \vect 1_n) & = \bar{\mat C}(\mat \Phi \otimes \mat W)(\vect 1_{s_j} \otimes \vect 1_n) \nonumber \\
    & = \bar{\mat C}(\mat I_{s_j} \otimes \mat W)(\mat \Phi \otimes \mat I_n)(\vect 1_{s_j} \otimes \vect 1_n) \nonumber \\
    & = \mat{\bar A}_{jj}(\mat \Phi \otimes \mat I_n)(\vect 1_{s_j} \otimes \vect 1_n) \label{eq:bar_A_sum}
\end{align}
with the first equality obtained by recalling that $\mat W \vect 1_n = \vect 1_n$, and the last equality obtained by verifying from \eqref{eq:bar_A_pq} that $\mat{\bar A}_{jj} \triangleq \bar{\mat C}(\mat I_{s_j} \otimes \mat W)$. Returning to \eqref{eq:comp_y_equib}, it follows that
\begin{align}
    \begin{bmatrix}
\vect{y}_{k_1}^* 
\\
\vdots 
\\
\vect{y}_{k_{s_j}}^* 
\end{bmatrix} & = (\mat I_{ns_j} -\mat{\bar{A}}_{jj})^{-1} (\mat I_{ns_j} - \bar{\mat C})(\mat \Phi \otimes \mat I_n)\vect 1_{ns_j} \nonumber \\
& = (\mat \Phi \otimes \mat I_n)\vect 1_{ns_j}, 
\end{align}
with the first equality obtained by substituting in the left hand side of \eqref{eq:comp_y_sum}, and the last equality obtained from \eqref{eq:bar_A_sum}. It follows that $\vect y_{k}^* = \phi_k \vect 1_n$ for every $k \in \mathcal{J}_j$.

\emph{Necessity}: To obtain a contradiction, suppose that for every $p \in \{1, \hdots, s_j\}$, (i) there do not exist $\phi_{k_1}, \hdots, \phi_{k_{s_j}} \in [-1, 1]$ such that \eqref{eq:cond_multi_comp} holds for all $i\in \mathcal{I}$, and (ii) there holds $\vect y_{k_p}^* = \alpha_{k_p} \vect 1_n$ for some $\alpha_{k_p} \in [-1, 1]$. Note that $\alpha_{k_p} \in [-1, 1]$ is a consequence of the invariance property of \eqref{eq:y_network_system} (see Theorem~\ref{thm:convergence}). \eqref{eq:comp_y_equib} yields
\begin{align}\label{eq:necess_contrad}
         \begin{bmatrix}
\alpha_{k_1} \vect 1_n
\\
\vdots 
\\
\alpha_{k_{s_j}} \vect 1_n
\end{bmatrix}  & = (\mat I_{ns_j} \!-\!\mat{\bar A}_{jj})^{-1}  \begin{bmatrix}  \sum\limits_{q \in \tilde{\mathcal{J}}_{j}} \alpha_i \diag(c_{k_1 q})\vect 1_n\\  \vdots\\  \sum\limits_{q \in \tilde{\mathcal{J}}_{j}} \alpha_q \diag(c_{k_{s_j}q})\vect 1_n 
\end{bmatrix}
\end{align}
Let $\bar{\vect \alpha}$ be a diagonal matrix with $i^{th}$ diagonal entry $\alpha_{k_i}$. Recalling that $\mat{\bar A}_{jj} \triangleq \bar{\mat C}(\mat I_{s_j} \otimes \mat W)$, we multiply both sides of \eqref{eq:necess_contrad} by $\mat I_{ns_j} -\mat{\bar A}_{jj}$. Simplifying using calculations similar to those appearing in \eqref{eq:bar_A_sum} but with $\bar{\vect \alpha}$ replacing $\mat \Phi$, we obtain
\begin{equation}\label{eq:alpha_sum}
    (\mat I_{ns_j} - \bar{\mat C})(\bar{\vect \alpha} \otimes \mat I_n) \vect 1_{ns_j} = \begin{bmatrix}  \sum\limits_{q \in \tilde{\mathcal{J}}_{j}} \alpha_q \diag(c_{k_1 q})\vect 1_n\\  \vdots\\  \sum\limits_{q \in \tilde{\mathcal{J}}_{j}}\alpha_q \diag(c_{k_{s_j} q})\vect 1_n 
\end{bmatrix}
\end{equation}
Notice that \eqref{eq:alpha_sum} is equivalent to
\begin{align}\label{eq:alpha_sum_single}
    \alpha_{k} &(1- c_{k k, i})  - \sum_{r \in \mathcal{J}_j \setminus \{k\}} \alpha_{r} c_{kr, i}  =   \sum_{q \in \tilde{\mathcal{J}}_{j}} \alpha_q c_{kq, i}
\end{align}
holding for all $k \in \mathcal{J}_j$ and $i\in \mathcal{I}$. Using the equality in \eqref{eq:sum_c}, observe that \eqref{eq:alpha_sum_single} is in turn equal to  
\begin{align}\label{eq:alpha_sum_single_v2}
    &\alpha_{k} (\sum_{r \in \mathcal{J}_j \setminus \{k\}} \vert c_{kr, i}\vert + \sum_{q \in \tilde{\mathcal{J}}_{j}} \vert c_{kq, i}\vert)  \nonumber \\
     & \quad \quad =  \sum_{r \in \mathcal{J}_j \setminus \{k\}} \alpha_{r} c_{kr, i} + \sum_{q \in \tilde{\mathcal{J}}_{j}} \alpha_q c_{kq, i}.
\end{align}
However, \eqref{eq:alpha_sum_single_v2} contradicts the assumption made at the start of this (necessity) part of the proof: there do no exist $\phi_{k_1}, \hdots, \phi_{k_{s_j}} \in [-1, 1]$ such that \eqref{eq:cond_multi_comp} holds for every $k_p \in \mathcal{J}_j$ and $i\in \mathcal{I}$. This completes the proof. \hfill $\qed$

\subsection{Proof of Corollary~\ref{cor:disagree_comp}}
We prove each item separately.

\emph{Item 1:} This result can be immediately obtained by checking \eqref{eq:cond_multi_comp} with $\alpha_q = 0$ for all $q \in \tilde{\mathcal{J}}_j$.

\emph{Item 2:} First, note that $\mat C_i$ is of the form in \eqref{eq:reducible_C}, which implies that $c_{ka,i} = 0$ for all $k \in \mathcal{J}_j$, $a > \max \mathcal{J}_j$, and $i\in \mathcal{I}$. Let $\mat{\bar A}_{jj}$ be defined as in \eqref{eq:bar_A_pq}. Similar to the proof of Theorem~\ref{thm:disagree_md_degroot_component}, let $\mathcal{J}_j = \{k_1, \hdots, k_{s_j}\}$ with $s_j \geq 2$. Supposing that there holds $c_{kp,g} = c_{kp,h} = c_{kp}$ for $k \in \mathcal{J}_j$ and $p \in \mathcal{J}$, define 
\begin{equation}
    \bar{\mat C} =\begin{bmatrix}
    c_{k_1 k_1} & \hdots & c_{k_1 k_{s_j}} \\
    \vdots & \ddots & \vdots \\
    c_{k_{s_j} k_1} & \hdots & c_{k_{s_j} k_{s_j}}
    \end{bmatrix}.
\end{equation}
Then, $\mat I - \mat{\bar A}_{jj} = \mat I_{ns_j} - \bar{\mat C}\otimes \mat W$. Since $\rho(\mat{\bar A}_{jj}) < 1$, we obtain from the Neumann series that $(\mat I - \mat{\bar A}_{jj})^{-1}=\sum_{t = 0}^{\infty} {\mat{\bar A}_{jj}}^t = \sum_{t = 0}^{\infty} \bar{\mat C}^t \otimes \mat W^t$. Let $\vert \bar{\mat C} \vert$ be the matrix whose $ij^{th}$ entry is the absolute value of the $ij^{th}$ entry of $\bar{\mat C}$. Assumption~\ref{assm:C} and the fact that $\tilde{\mathcal{J}}_j \neq \emptyset$ implies that $\vert \bar{\mat C} \vert$ is row-substochastic. Using calculations similar to those at the end of Appendix~\ref{app:thm_convergence_pf}, one can show that $\rho(\bar{\mat C}) < 1$, which establishes the existence of $\sum_{t = 0}^{\infty} \bar{\mat C}^t$. 

Define for $p \in \{1, \hdots s_j\}$, $\tilde{\alpha}_{k_p} = \sum_{q \in \tilde{\mathcal{J}}_{j}} \alpha_q c_{k_p q}$, and observe that
\begin{equation}
    \begin{bmatrix}
   \sum\limits_{q \in \tilde{\mathcal{J}}_{j}} \alpha_q \diag (c_{k_1 q})\vect{1}_n\\
   \vdots\\
   \sum\limits_{q \in \tilde{\mathcal{J}}_{j}} \alpha_q  \diag (c_{k_{s_j} q})\vect{1}_n
\end{bmatrix} = \tilde{\vect \alpha} \otimes \vect 1_n,
\end{equation}
where $\tilde{\vect \alpha} = [\tilde{\alpha}_{k_1}, \hdots, \tilde{\alpha}_{k_{s_j}}]^\top$. We obtain from \eqref{eq:comp_y_equib} that
\begin{align}
    \begin{bmatrix}
\vect{y}_{k_1}^* 
\\
\vdots 
\\
\vect{y}_{k_{s_j}}^* 
\end{bmatrix} & \!=\! 
\left(\sum_{t = 0}^{\infty} \bar{\mat C}^t \otimes \mat W^t\right)\tilde{\vect \alpha} \otimes \vect 1_n \! = \! \left(\sum_{t = 0}^{\infty} \bar{\mat C}^t  \tilde{\vect \alpha}\right)\otimes \vect 1_n \label{eq:thm5_homogeneous_consensus}
\end{align}
since $\mat W^t$ is a row-stochastic matrix for any $t\in \mathbb{N}$.  We have thus concluded that the right hand side of \eqref{eq:thm5_homogeneous_consensus} is equal to $\vect u \otimes \vect 1_n$ for $\vect u = \sum_{t = 0}^{\infty} \bar{\mat C}^t  \tilde{\vect \alpha}  \in \mathbb{R}^{s_j}$. This implies that for every $k \in \mathcal{J}_j$, we have $\vect{y}_{k}^* = \alpha_k \vect 1_n$ for some $\alpha_k \in [-1,1]$. \hfill $\qed$

\ifCLASSOPTIONcaptionsoff
  \newpage
\fi



%
%
%

\bibliographystyle{IEEEtran}
\bibliography{MYE_ANU}
%


\begin{IEEEbiography}
[{\includegraphics[width=1in,height=1.25in,clip,keepaspectratio]{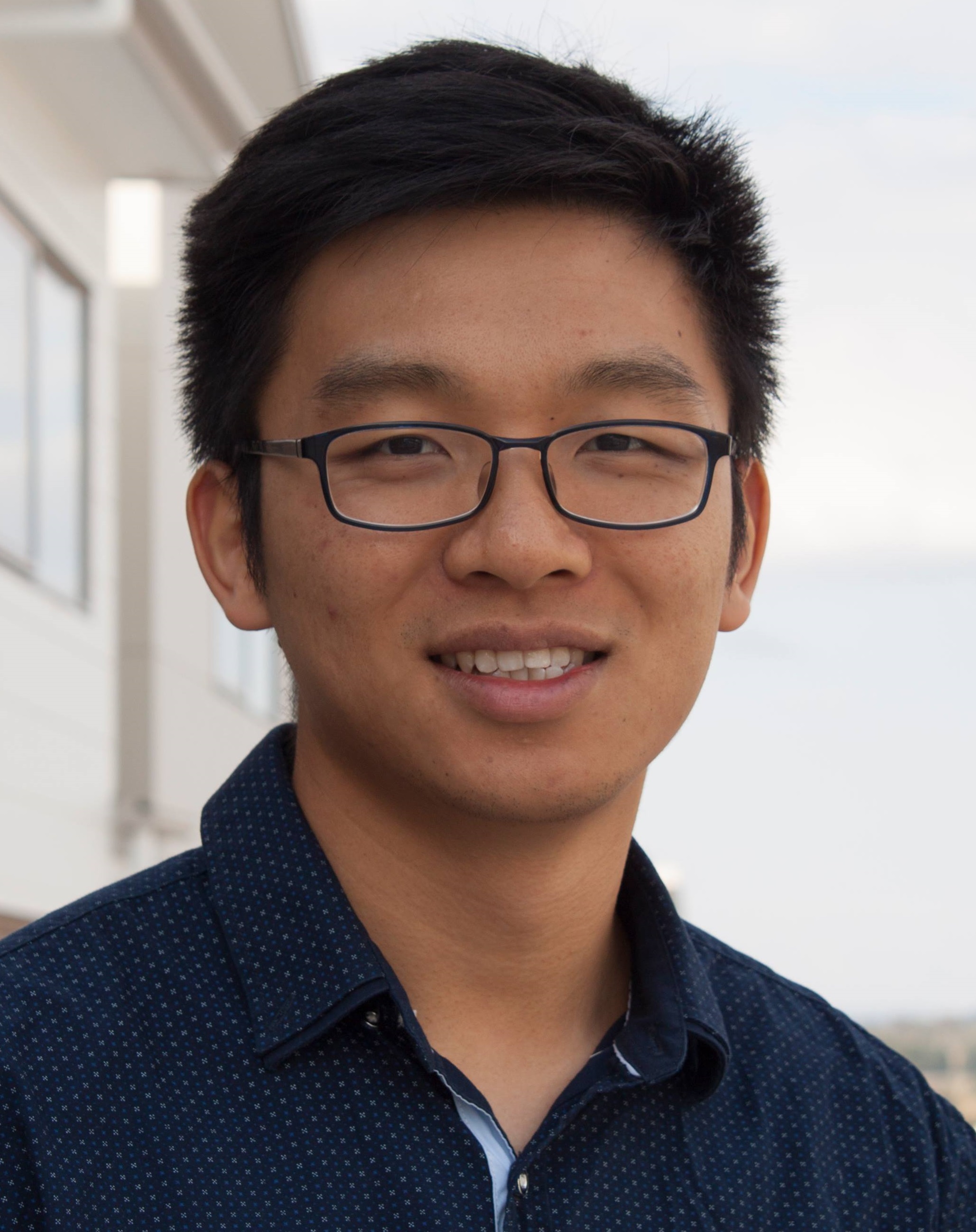}}]{Mengbin Ye} (S'13-M'18)
was born in Guangzhou, China. He received the B.E. degree (with First Class Honours) in mechanical engineering from the University of Auckland, Auckland, New Zealand in 2013, and the Ph.D. degree in engineering at the Australian National University, Canberra, Australia in 2018. He is currently a postdoctoral fellow with the Faculty of Science and Engineering, University of Groningen, Netherlands.
	
He has been awarded the Springer PhD Thesis Prize, and was Highly Commended in the Best Student Paper Award at the 2016 Australian Control Conference. His current research interests include opinion dynamics and social networks, consensus and synchronisation of Euler-Lagrange systems, and localisation using bearing measurements.
\end{IEEEbiography}

\begin{IEEEbiography}
[{\includegraphics[width=1in,height=1.25in,clip,keepaspectratio]{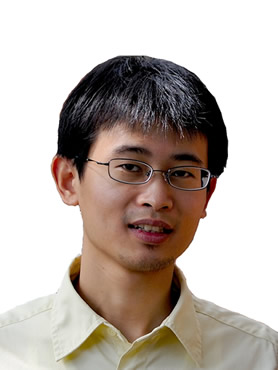}}]{Ji Liu} (S'09-M'13) received the B.S. degree in information engineering from Shanghai Jiao Tong University, Shanghai, China, in 2006, 
and the Ph.D. degree in electrical engineering from Yale University, New Haven, CT, USA, in 2013. He is currently an 
Assistant Professor in the Department of Electrical and Computer Engineering at Stony Brook University, Stony Brook, NY, 
USA. Prior to joining Stony Brook University, he was a Postdoctoral Research Associate at the Coordinated Science 
Laboratory, University of Illinois at Urbana-Champaign, Urbana, IL, USA, and the School of Electrical, Computer and Energy 
Engineering, Arizona State University, Tempe, AZ, USA. His current research interests include distributed control and
computation, distributed optimization and learning, multi-agent systems, social networks, epidemic networks, and 
cyber-physical systems.
\end{IEEEbiography}

\begin{IEEEbiography}[{\includegraphics[width=1in,height=1.25in,clip,keepaspectratio]{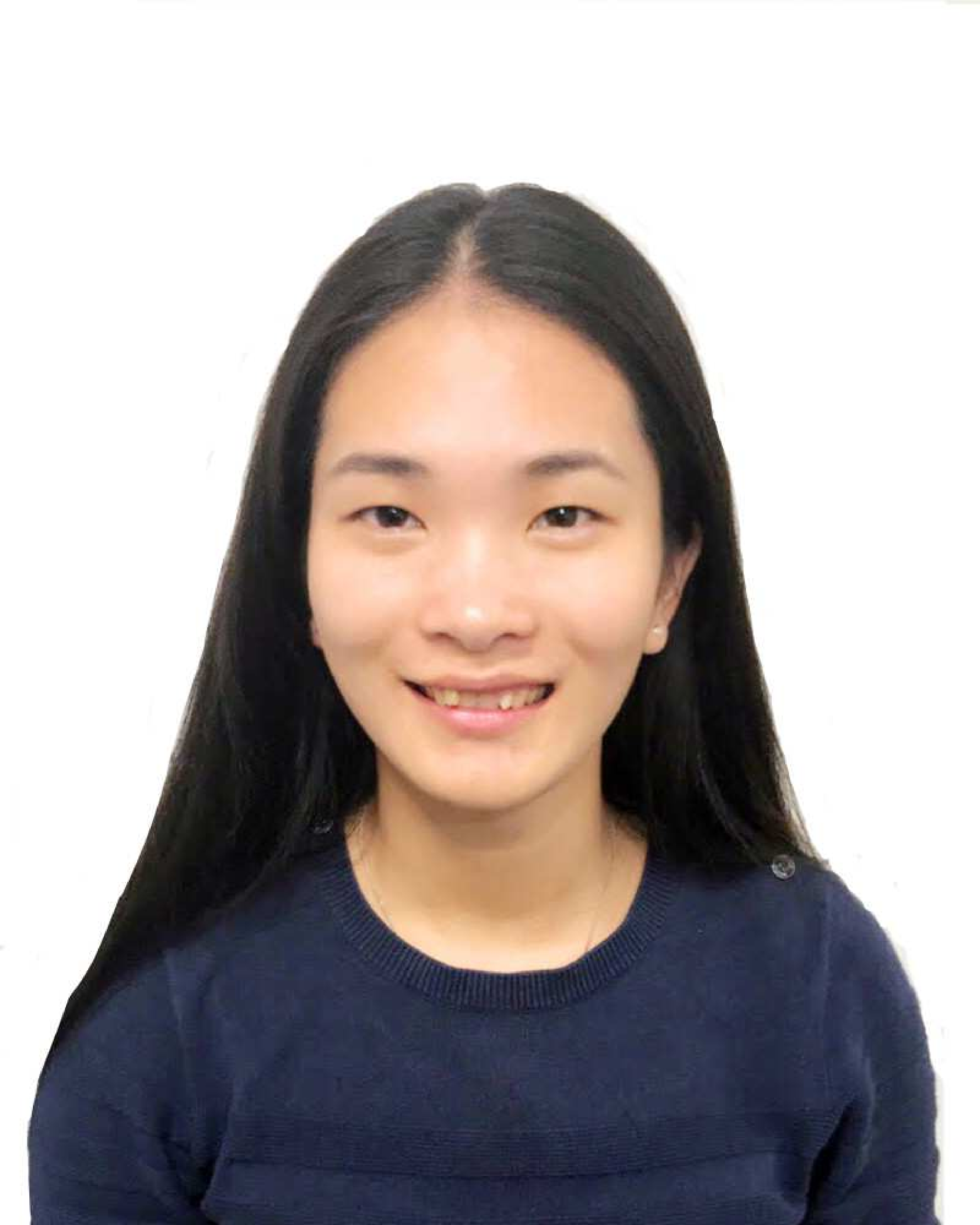}}]{Lili Wang} received the
 B.E. and M.S. degrees from Zhejiang University, Zhejiang, China, in 2011 and 2014, respectively. She is currently a Ph.D. student
 majored in  electrical engineering in  the School of Engineering \& Applied Science, Yale University, USA.  Her research is on
 the topic of cooperative multi-agent systems and distributed observer.
\end{IEEEbiography}

\begin{IEEEbiography}
[{\includegraphics[width=1in,height=1.25in,clip,keepaspectratio]{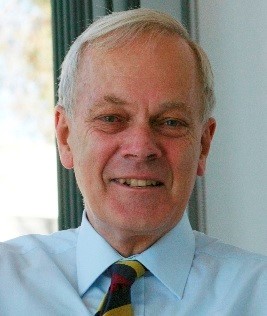}}]{Brian D.O. Anderson} (M'66-SM'74-F'75-LF'07) was born in Sydney, Australia. He received the B.Sc. degree in pure mathematics in 1962, and B.E. in electrical engineering in 1964, from the Sydney University, Sydney, Australia, and the Ph.D. degree in electrical engineering from Stanford University, Stanford, CA, USA, in 1966.

He is an Emeritus Professor at the Australian National University, and a Distinguished Researcher in Data61-CSIRO (previously NICTA) and a Distinguished Professor at Hangzhou Dianzi University. His awards include the IEEE Control Systems Award of 1997, the 2001 IEEE James H Mulligan, Jr Education Medal, and the Bode Prize of the IEEE Control System Society in 1992, as well as several IEEE and other best paper prizes. He is a Fellow of the Australian Academy of Science, the Australian Academy of Technological Sciences and Engineering, the Royal Society, and a foreign member of the US National Academy of Engineering. He holds honorary doctorates from a number of universities, including Universit\'{e} Catholique de Louvain, Belgium, and ETH, Z\"{u}rich. He is a past president of the International Federation of Automatic Control and the Australian Academy of Science. His current research interests are in distributed control, sensor networks and econometric modelling. 
\end{IEEEbiography}

\begin{IEEEbiography}
[{\includegraphics[width=1in,height=1.25in,clip,keepaspectratio]{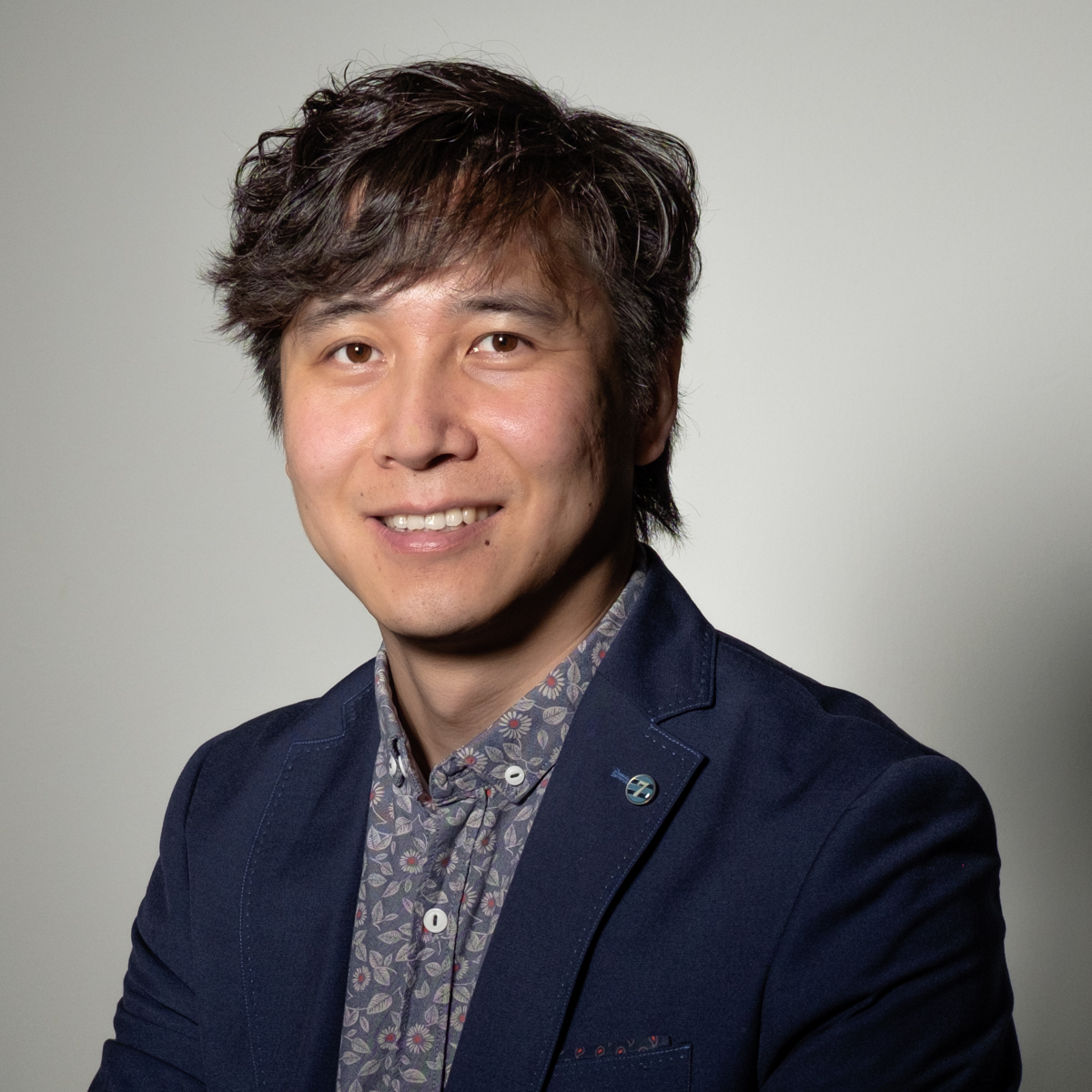}}]{Ming Cao} is  currently  Professor  of  systems  and control  with  the  Engineering  and  Technology  Institute  (ENTEG)  at  the  University  of  Groningen, the  Netherlands,  where  he  started  as  a  tenure-track Assistant Professor in 2008. He received the Bachelor degree in 1999 and the Master degree in 2002
from  Tsinghua  University,  Beijing,  China,  and  the
Ph.D.  degree  in  2007  from  Yale  University,  New
Haven, CT, USA, all in Electrical Engineering. From
September  2007  to  August  2008,  he  was  a  Postdoctoral Research Associate with the Department of Mechanical  and  Aerospace  Engineering  at  Princeton  University,  Princeton, NJ,  USA.  He  worked  as  a  research  intern  during  the  summer  of  2006  with the  Mathematical  Sciences  Department  at  the  IBM  T.  J.  Watson  Research Center,  NY,  USA. 

He  is  the  2017  and  inaugural  recipient  of  the  Manfred Thoma medal from the International Federation of Automatic Control (IFAC) and  the  2016  recipient  of  the  European  Control  Award  sponsored  by  the European  Control Association  (EUCA).  He  is  an  Associate  Editor  for  IEEE
Transactions   on   Automatic   Control,   IEEE   Transactions   on   Circuits   and Systems  and  Systems  and  Control  Letters,  and  for  the  Conference  Editorial Board  of  the  IEEE  Control  Systems  Society.  He  is  also  a  member  of  the IFAC  Technical  Committee  on  Networked  Systems.  His  research  interests include autonomous agents and multi-agent systems, mobile sensor networks and complex networks.
\end{IEEEbiography}




\end{document}